\definecolor{ForestGreen}{rgb}{0.1333,0.5451,0.1333}
\definecolor{DarkRed}{rgb}{0.65,0,0}
\definecolor{Red}{rgb}{1,0,0}
\newtheorem{theorem}{Theorem}[section]
\newtheorem{lemma}[theorem]{Lemma}
\newtheorem{observation}[theorem]{Observation}
\newtheorem{proposition}[theorem]{Proposition}
\newtheorem{definition}[theorem]{Definition}
\newtheorem{corollary}[theorem]{Corollary}
\newtheorem{claim}[theorem]{Claim}
\newcommand{\poly}{\mathrm{poly}\xspace}
\newcommand{\CC}{\mathcal{C}}
\newcommand{\OO}{O}
\global\long\def\src{{\rm src}}
\global\long\def\sink{{\rm sink}}
\global\long\def\Decomp{\textsc{Decomp}}
\global\long\def\iso{{\rm iso}}
\global\long\def\wit{{\rm wit}}
\global\long\def\src{{\rm src}}
\global\long\def\sink{{\rm sink}}
\global\long\def\local{{\rm local}}
\newcommand{\wtilde}{\widetilde}
\title{Unbreakable Decomposition in Close-to-Linear Time}
\author{
Aditya Anand\thanks{University of Michigan, Ann Arbor} 
\and Euiwoong Lee\thanks{University of Michigan, Ann Arbor. Supported in part by NSF grant CCF-2236669 and Google.}
\and Jason Li\thanks{Carnegie Mellon University}
\and Yaowei Long\thanks{University of Michigan, Ann Arbor}
\and Thatchaphol Saranurak\thanks{University of Michigan, Ann Arbor. Supported by NSF Grant CCF-2238138.}
}
\date{}
\begin{document}

\maketitle
\pagenumbering{gobble}
\begin{abstract}
    Unbreakable decomposition, introduced by \cite{cygan2014minimum,cygan2020randomized}, 
has proven to be one of the most powerful tools for parameterized graph cut problems in recent years.
Unfortunately, all known constructions require at least $\Omega_k\left(mn^2\right)$ time, given an undirected graph with $n$ vertices, $m$ edges, and cut-size parameter $k$.
In this work, we show the first close-to-linear time parameterized algorithm that computes an {unbreakable decomposition}. More precisely, for any $0<\epsilon\leq 1$, our algorithm runs in time $2^{\OO(\frac{k}{\epsilon} \log \frac{k}{\epsilon})}m^{1 + \epsilon}$ and computes a $(\OO(k/\epsilon), k)$ unbreakable tree decomposition of $G$, where each bag has adhesion at most $\OO(k/\epsilon)$.

This immediately opens up possibilities for obtaining close-to-linear time algorithms for numerous problems whose only known solution is based on unbreakable decomposition.

\end{abstract}

\newpage
\tableofcontents
\newpage

\pagenumbering{arabic}

\section{Introduction}

For the past two decades, the study of graph cut problems has been a highly active subarea of fixed-parameter tractability (FPT) algorithms that has led to many powerful algorithmic techniques, including important separators and shadow removal \cite{chen2008fixed,marx2011fixed,chitnis2013fixed,lokshtanov2013clustering}, matroid-based kernelization \cite{kratsch2012representative,kratsch2014compression,chalermsook2021vertex,he2021near,wahlstrom2022quasipolynomial}, treewidth reduction \cite{marx2013finding}, branching from half-integral solutions \cite{cygan2013multiway,guillemot2011fpt,lokshtanov2014faster,iwata2016half,iwata2016linear,iwata20180}, and flow augmentation \cite{kim2021flow,kim2024flow,kim2024weighted}.

Recently, \emph{unbreakable decomposition} has emerged as one of the most powerful techniques used by numerous FPT algorithms \cite{cygan2014minimum,cygan2020randomized,lokshtanov2022parameterized,pilipczuk2021algorithms,schirrmacher2024model,saurabh202364,inamdar2023parameterized,lokshtanov2019balanced,agrawal2022fixed}. This decomposition also generalizes the highly influential tree decomposition of bounded width to general graphs, in the sense that unbreakable decomposition exists on arbitrary graphs, and the two concepts coincide on bounded-treewidth graphs. See \Cref{sec:connection}.

We briefly define unbreakable decomposition here. Given a graph $G$ and a vertex set $X$, a \emph{$(q,k)$-breakable witness} for $X$ is a vertex cut $(L,R)$ in $G$ of size $|L\cap R|\le k$ where $|L\cap X|,|R\cap X|>q$. If there is no $(q,k)$-breakable witness for $X$, then $X$ is \emph{$(q,k)$-unbreakable}. 
We say that $X$ has \emph{adhesion} $\sigma$ if, for every connected component $C$ in $G\setminus X$, the size of its neighborhood is at most $|N_{G}(C)|\le \sigma$. 
A $(q,k)$-unbreakable decomposition with adhesion $\sigma$ is a tree decomposition of $G$ where every bag is $(q,k)$-unbreakable and has adhesion at most $\sigma$. The quality of the decomposition is measured by how small $q$ and $\sigma$ are compared to $k$.

This paper presents the first close-to-linear time FPT algorithm for computing unbreakable decomposition, thereby removing the core bottleneck to close-to-linear time FPT algorithms for many graph cut problems.\footnote{As per convention in the graph algorithms literature, given a graph with $m$ edges, algorithms with \emph{near-linear}, \emph{almost-linear}, and \emph{close-to-linear} time have running time of $O(m \log^{O(1)}m)$, $O(m^{1+o(1)})$, and $O(m^{1+\epsilon})$ for any constant $\epsilon>0$, respectively.} Below, we survey the development and impact of unbreakable decomposition.

\paragraph{History.}

In their breakthrough FPT algorithm for \textsc{Minimum $p$-Way Cut}\footnote{Given a graph $G$, this problem asks if we can delete $k$ edges to disconnect $G$ into at least $p$ connected components. This problem is often called \textsc{Min $k$-Cut}. But we choose to preserve $k$ as a parameter in unbreakable decomposition.}, Kawarbayashi and Thorup~\cite{kawarabayashi2011minimum} introduced the edge cut version of \emph{$(q,k)$-breakable witnesses}. However, their algorithm for finding these witnesses is problem-specific. A year later, Chitnis et al.~\cite{chitnis2016designing} presented a general algorithm for finding $(q,k)$-breakable witnesses for both the edge cut and vertex cut versions, leading to FPT algorithms for many problems including the terminal version of \textsc{Minimum $p$-Way Cut} and \textsc{Unique Games}, which in turn generalizes \textsc{Subset Feedback Vertex Set} and \textsc{Odd Cycle Transversal}. Unfortunately, the top-down divide-and-conquer technique used in \cite{kawarabayashi2011minimum,chitnis2016designing} is ineffective in solving a prominent problem, namely, \textsc{Minimum Bisection}.

To address this, Cygan et al.~\cite{cygan2014minimum} introduced the concept of \emph{unbreakable decomposition}, which allows them to exploit $(q,k)$-unbreakable sets in a bottom-up manner using dynamic programming. Using this technique, they successfully developed an FPT algorithm for \textsc{Minimum Bisection}. Later, in \cite{cygan2020randomized}, Cygan et al.~significantly improved the construction in \cite{cygan2014minimum} by showing an FPT algorithm that constructs an unbreakable decomposition with optimal unbreakability and adhesion parameters. This improvement resulted in further applications.

Since then, unbreakable decomposition has become a core technique in various results within the field. For instance, Lokshtanov, Saurabh, and Surianarayanan~\cite{lokshtanov2022parameterized} provided a polynomial-time construction of unbreakable decomposition in the edge cut version and used it to settle a long line of work of \cite{gupta2018fpt,gupta2018faster,kawarabayashi2020nearly} by showing a $(1+\epsilon)$-approximation FPT algorithm for \textsc{Minimum $p$-Way Edge-Cut} when parameterized by $p$ (instead of the usual cut size). Other applications include model checking~\cite{pilipczuk2021algorithms,lokshtanov2018reducing,schirrmacher2024model}, connectivity oracles under vertex failures~\cite{pilipczuk2021algorithms}, and FPT algorithms for \textsc{Multiway Node Hubs}~\cite{saurabh202364}, \textsc{Judicious Partitions}~\cite{lokshtanov2019balanced}, \textsc{Fair Bisection}~\cite{inamdar2023parameterized}, and deletion to bounded degree graphs~\cite{agrawal2022fixed}.

\paragraph{Bottleneck.}

Unfortunately, all known algorithms for constructing unbreakable decompositions share an important drawback: their running time is far from linear in the size of the graph, in contrast to other key algorithmic techniques such as important separators \cite{marx2006parameterized}, flow augmentation \cite{kim2024flow}, treewidth reduction \cite{marx2013finding}, and branching from half-integral solutions \cite{iwata20180}. The fastest known algorithm in literature still takes at least $\Omega(mn^{2})$ time in a graph with $n$ vertices and $m$ edges \cite{cygan2014minimum}. See \Cref{table:decomp}. Unbreakable decomposition has thus become a common bottleneck for obtaining close-to-linear time FPT algorithms for many of its applications. Indeed, Cygan et al.~\cite{cygan2020randomized} has stated that whether unbreakable decomposition admits a near-linear time construction is ``an interesting and challenging open problem''.

This paper gives an affirmative answer to this question up to an arbitrarily small polynomial factor. Furthermore, our unbreakability and adhesion parameters are optimal up to constant factors, almost matching the optimal guarantees of \cite{cygan2020randomized}.

\begin{theorem}

\label{thm:main} For any $0<\epsilon\leq1$, there is a randomized algorithm that runs in time $2^{\OO(\frac{k}{\epsilon}\log\frac{k}{\epsilon})}m^{1+\epsilon}$ and computes with high probability a $(\OO(\frac{k}{\epsilon}),k)$-unbreakable decomposition with adhesion  $\OO(\frac{k}{\epsilon})$.

\end{theorem}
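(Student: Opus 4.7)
The plan is to follow the classical recursive blueprint of Cygan et al.~for constructing an unbreakable decomposition, while replacing the expensive subroutines with close-to-linear time variants. Starting from $X = V(G)$, each recursion step must either certify that the current set $X$ is $(q,k)$-unbreakable and turn it into a leaf bag, or find a $(q,k)$-breakable witness $(L,R)$ of size at most $k$ and recurse on $L$ and $R$, with $L \cap R$ serving as the adhesion. The two new ingredients we need are (i) a close-to-linear time subroutine for this find-or-certify task, and (ii) an amortization scheme that keeps the total recursion cost near linear, despite bags overlapping through adhesions.

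To make the recursion shallow, I would relax the unbreakability parameter from the tight value $q = k$ to $q = \Theta(k/\epsilon)$ and search for a \emph{balanced} breakable witness $(L, R)$ with $|L \cap X|, |R \cap X| \ge |X|/3$. To locate such a witness, I would randomly sample $\Theta(k \log n)$ pairs of pivot vertices from $X$; any balanced witness is separated by at least one of the sampled pairs with high probability. For each sampled pair, I would compute a minimum vertex cut of size at most $k$ using recent close-to-linear time algorithms for bounded vertex connectivity, e.g.~the isolating-cut framework. If no balanced witness is found, a standard peeling argument shows $X$ is $(\Theta(k/\epsilon), k)$-unbreakable: unbalanced witnesses cut off at most $O(k/\epsilon)$ vertices on one side, so iterating the peel on the small side increases the bag's unbreakability parameter only by $O(k/\epsilon)$ in total, matching the target.

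Combining these pieces, each recursion step runs in $2^{O((k/\epsilon)\log(k/\epsilon))} m^{1+\epsilon}$ time on its local subgraph; balancedness forces recursion depth $O(\log n)$, and adhesions of size $O(k/\epsilon)$ inflate the total work only multiplicatively inside the $2^{O(\cdot)}$ factor, yielding the claimed bound. The main obstacle will be the core subroutine: classical min-cut algorithms require fixed terminals, whereas a breakable witness has none, so we are forced into a sampling-based reduction and must implement the resulting $\Theta(k \log n)$ bounded vertex-connectivity calls genuinely in close-to-linear time \emph{in aggregate}. This very likely requires sharing flow computations across sampled pairs through an expander- or Gomory--Hu-style data structure rather than running one black-box min-cut per sample. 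A secondary subtlety is showing that relaxing $q$ from $k$ to $O(k/\epsilon)$ really guarantees that every breakable $X$ admits a balanced witness; this is where the trade-off parameter $\epsilon$ enters the quality of the final decomposition and where the peeling argument needs to be executed carefully to avoid blowing up either the adhesion size or the total running time.
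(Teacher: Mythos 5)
There are several genuine gaps here, and they sit exactly where the paper's real work is. First, your core subroutine does not produce what you need: computing a minimum $u$-$v$ vertex cut for sampled pivots $u,v$ gives you \emph{some} cut of size at most $k$ separating them, but that min cut can be arbitrarily unbalanced (e.g., it may just cut off $u$'s neighborhood), so it is not the balanced breakable witness your recursion requires. The dichotomy you invoke --- either a balanced witness exists or every witness has one side of size $O(k/\epsilon)$ --- is false: a witness can have $|L\cap X|=|X|/10$, which is neither balanced by a $|X|/3$ threshold nor peelable in $O(k/\epsilon)$ steps, and the peeling argument is not developed enough to rule this out. Second, the recursion does not control adhesion: each split contributes up to $k$ new boundary vertices to the subproblems, so over $O(\log n)$ levels the accumulated boundary is $O(k\log n)$, not $O(k/\epsilon)$; the paper devotes an entire mechanism (the ``reducing adhesion'' subroutine and a separate case for large boundary $|B|>\sigma$) precisely to shrink this accumulated boundary back down, and nothing in your proposal plays that role. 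Third, you misattribute the role of $\epsilon$: in the paper the $m^{1+\epsilon}$ versus $(O(k/\epsilon),k)$ tradeoff does not come from relaxing $q$ to make witnesses balanced, but from a $\lceil 1/\epsilon\rceil$-level carving scheme in the adhesion-reduction step, where each level relaxes unbreakability by an additive $k$ while shrinking the set of remaining ``carvable'' vertices by a factor $n^{\epsilon}$.

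For contrast, the paper's route is quite different from yours: balance is obtained essentially for free by seeding the bag with a random $(\tfrac12,\sigma)$-net (a \emph{balanced origin}), so that any low-adhesion superset is automatically $\tfrac12$-balanced; unbreakability of the seed is enforced by brute-force enumeration over the $2^{O(\sigma)}$ partitions of the small seed set plus max-flow; and the hard part --- simultaneously achieving low adhesion and unbreakability of a large bag --- is handled by iteratively carving the terminal set along ``lean witnesses'' found via color coding plus a single-source vertex-min-cut (isolating cuts) subroutine. Your instinct that naive per-pair min-cut calls are the bottleneck is right, but the fix in the paper is not a Gomory--Hu-style sharing of flows between sampled pairs; it is to never search for balanced separators at all.
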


Our decomposition also has $O(\log n)$ depth and satisfies the \emph{subtree unbreakability property} (see \Cref{def:UnbreakableDecomp}), both of which are not satisfied by the decomposition with the best unbreakability parameters \cite{cygan2020randomized}. Both properties are useful in applications. For example, the performance of the connectivity oracle of \cite{pilipczuk2021algorithms} crucially relies on the subtree unbreakability property, and its space depends on the depth of the decomposition. \cite{inamdar2023parameterized} also requires a low-depth decomposition.

\renewcommand{\arraystretch}{1.4}
\begin{table}
\begin{tabular}{|c|c|c|c|c|>{\centering}p{0.1\textwidth}|>{\centering}p{0.1\textwidth}|}
\hline 
\textbf{Reference} & \textbf{Time} & \textbf{Unbreakability} & \textbf{Adhesion} & \textbf{Depth} & \textbf{Subtree?} & \textbf{Vertex or edge}\tabularnewline
\hline 
\hline 
\cite{cygan2014minimum} & \textcolor{red}{$2^{O(k^{2})}n^{2}m$} & $({\color{red}2^{O(k)}},k)$ & \textcolor{red}{$2^{O(k)}$} & \textcolor{red}{$n$} & Yes & Vertex\tabularnewline
\hline 
\cite{cygan2020randomized} & $2^{O(k\log k)}{\color{red}n^{O(1)}}$ & $(i,i)$ $\forall i\le k$ & $k$ & \textcolor{red}{$n$} & \textcolor{red}{No} & Vertex\tabularnewline
\hline 
\cite{lokshtanov2022parameterized} & \textcolor{red}{$n^{O(1)}$} & $((k+1)^{5},k)$ & $k$ & \textcolor{red}{$n$} & \textcolor{red}{No} & \textcolor{red}{Edge}\tabularnewline
\hline 
\cite{inamdar2023parameterized} & $2^{O(k\log k)}{\color{red}n^{O(1)}}$ & $(9k,k)$ & $8k$ & $O(\log n)$ & \textcolor{red}{No} & Vertex\tabularnewline
\hline 
\multirow{2}*{\textbf{Thm~\ref{thm:DetailedMain}}} & $2^{O(\frac{k}{\epsilon}\log\frac{k}{\epsilon})}m^{1+\epsilon}$ & $((2\left\lceil \frac{1}{\epsilon}\right\rceil +3)k,k)$ & $(2\left\lceil \frac{1}{\epsilon}\right\rceil +2)k$ & $O(\frac{k}{\epsilon}\log n)$ & Yes & Vertex \tabularnewline
\cline{2-7} 
~ & $2^{O(\frac{k}{\epsilon}\log\frac{k}{\epsilon})}m^{1+\epsilon}$ & $(O(k/\epsilon),k)$ & $O(k/\epsilon)$ & $O(\log n)$ & Yes & Vertex\tabularnewline
\hline
\end{tabular}

\caption{Known constructions of unbreakable  decomposition. The ``subtree?'' column indicates whether the decomposition satisfies the \emph{subtree unbreakability} property (see \Cref{def:UnbreakableDecomp}). The ``vertex or edge'' column indicates whether each bag is unbreakable or (weaker) edge-unbreakable.\label{table:decomp}}

\end{table}

As an almost immediate application of \Cref{thm:main}, we show the first close-to-linear time FPT algorithm for \textsc{Minimum $p$-Way Cut}.

\begin{theorem}
\label{thm:kcut} For any $0<\epsilon<1$, there is a $2^{\OO(\frac{k}{\epsilon}\log\frac{k}{\epsilon})}m^{1+\epsilon}$ time algorithm that decides if a graph $G$ has a \textsc{$p$-Way Cut} with at most $k$ cut edges.
\end{theorem}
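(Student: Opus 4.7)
The plan is to combine the unbreakable decomposition from Theorem~\ref{thm:main} with a standard bottom-up dynamic program on the resulting tree, in the spirit of Cygan et al.~\cite{cygan2014minimum}'s FPT algorithm for Minimum Bisection on unbreakable decompositions. First, invoke Theorem~\ref{thm:main} with the given $\epsilon$ to obtain, in $2^{O((k/\epsilon)\log(k/\epsilon))}m^{1+\epsilon}$ time, a $(q,k)$-unbreakable tree decomposition $(T,\{B_t\}_{t\in T})$ of $G$ with adhesion $\sigma$, where $q,\sigma=O(k/\epsilon)$, with depth $O(\log n)$ and satisfying the subtree unbreakability property. We may assume $p\le k+1$, as otherwise no $p$-way cut with $\le k$ edges exists in a connected graph.

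Before describing the DP we record the structural consequence of unbreakability. Fix any target partition $\mathcal{P}=\{P_1,\dots,P_{p'}\}$ of $V(G)$ with $p'\ge p$ and at most $k$ cut edges. For a bag $B_t$ and a part $P_i$, let $\partial P_i\subseteq P_i$ be the vertices of $P_i$ incident to a cut edge, so $|\partial P_i|\le k$. Then $(L,R)=(P_i,\,(V\setminus P_i)\cup\partial P_i)$ is a valid vertex cut of $G$ with $|L\cap R|\le k$, so $(q,k)$-unbreakability of $B_t$ forces $|P_i\cap B_t|\le q$ or $|B_t\setminus P_i|\le q$. Hence at most one part is \emph{large} in $B_t$ (covers all but $\le q$ vertices of $B_t$), and all remaining parts together intersect $B_t$ in at most $q$ vertices.

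Process $T$ bottom-up. For each node $t$ with $V_t=\bigcup_{s\in T_t}B_s$, maintain a DP table indexed by an ``interface type'' at $\mathrm{adh}(t)=B_t\cap B_{\mathrm{parent}(t)}$: a partition of $\mathrm{adh}(t)$ into equivalence classes (those whose combined vertices in $V_t$ lie in the same part of the final partition), a marker for the (at most one) \emph{large} class, and for each non-large class a flag indicating whether the corresponding part is already fully inside $V_t$. Each entry stores the minimum number of cut edges used inside $V_t$ together with the number of already-closed parts; the number of interface types is $\sigma^{O(\sigma)}=2^{O((k/\epsilon)\log(k/\epsilon))}$. To process $t$, enumerate candidate configurations inside $B_t$, namely the $\le q$ small-part vertices and their partition into small parts, and for every tuple of child interface types take the transitive closure of the imposed equivalences, verify feasibility, and update the entry. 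At the root answer YES iff some entry with $\ge p$ parts has cost $\le k$.

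The main obstacle is keeping the local enumeration inside each bag FPT in $k/\epsilon$ rather than polynomial in $|B_t|$: naively listing $\le q$-element subsets of $B_t$ costs $|B_t|^q$, destroying the close-to-linear bound. The standard remedy is that any valid small-part-vertex set $S\subseteq B_t$ must be the small side of a vertex cut of $G$ of size $\le k$, so we can enumerate only such \emph{separable} $S$ via a bounded-vertex-cut enumeration subroutine (e.g., the important-separator or minimum-cut enumeration machinery), yielding only $2^{O((k/\epsilon)\log(k/\epsilon))}$ candidates per bag. Combined with $\sum_t|B_t|=O(n\log n)$ from the depth bound, this gives total DP time $2^{O((k/\epsilon)\log(k/\epsilon))}\cdot n^{1+o(1)}$, which is dominated by the decomposition step and yields the claimed $2^{O((k/\epsilon)\log(k/\epsilon))}m^{1+\epsilon}$ bound.
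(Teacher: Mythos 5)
Your overall architecture matches the paper's: compute the decomposition of \Cref{thm:DetailedMain}, observe via (subtree) unbreakability that at most one part of the solution can be large inside a bag, and run a bottom-up DP indexed by configurations on the adhesion. (Two small inaccuracies along the way: the non-large parts together can meet $\beta(t)$ in up to $3q$ vertices, not $q$ --- when no part is large you need the extra grouping-into-two-color-classes argument of \Cref{lemma:colorbound} to bound the total; and the total bag size bound comes from disjointness of the sets $\beta(t)\setminus\sigma(t)$ plus the adhesion bound, giving $O(kn/\epsilon)$, not from the depth.)

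The genuine gap is in the step you yourself flag as "the main obstacle." You propose to enumerate the candidate sets $S\subseteq\beta(t)$ of small-part vertices directly, claiming that since each small part is one side of a vertex cut of size at most $k$, "important-separator or minimum-cut enumeration machinery" yields only $2^{O((k/\epsilon)\log(k/\epsilon))}$ candidates per bag. No such enumeration bound holds: important separators are counted ($4^k$ of them) only relative to a \emph{fixed} pair of terminal sets, whereas here the small parts are arbitrary sets with at most $k$ boundary-incident vertices, and the number of subsets of $\beta(t)$ realizable as $\beta(t)\cap(\text{union of such parts})$ is in general not bounded by any function of $k$ and $\epsilon$ alone. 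This is precisely why the paper (following \cite{cygan2020randomized}) does \emph{not} enumerate: it uses the splitter family of \Cref{lemma:hitmiss} to color-code the small color classes $A_1,\dots,A_{p-1}$ together with the $O(q^3)$-sized boundary set $B^*$, guaranteeing that for the unknown optimal coloring some function in a family of size $2^{O(q\log q)}\log^2 n$ classifies these sets correctly; the actual bag coloring is then reconstructed by a second "flipping" DP over the connected components of an auxiliary graph $H$ in which each child adhesion is turned into a clique. That auxiliary DP also resolves a second issue your sketch leaves open: you cannot iterate over "every tuple of child interface types" (exponential in the number of children, which can be $\Omega(n)$); the children must be absorbed one at a time, grouped by the component of $H$ their adhesion lies in, with the subsets $I_{t'}$ of required colors handled by a knapsack-style minimization. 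Without the color-coding mechanism (or a correct substitute for it), the per-bag work in your proposal is not FPT, so the claimed running time does not follow.
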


Prior to our work, \cite{kawarabayashi2011minimum} showed an algorithm with a running time of $\OO(k^{k^{\OO(k)}}n^{2})$. This was improved by \cite{chitnis2016designing} to $\OO(2^{\OO(k^{2}\log p)}n^{2})$ and by \cite{cygan2020randomized} to $\OO(2^{\OO(k\log k)}n^{\OO(1)})$, where the dependency on $n^{\OO(1)}$ is at least $\Omega(mn^{2})$. 

We believe that this is only the first of many such applications.  Since there are numerous problems whose only solution so far is using unbreakable decomposition,~\Cref{thm:main} removes the main obstacle and opens up many exciting open problems listed in \Cref{sec:conclusion}.

\section{Technical Overview}

Given parameters $k$ and $\epsilon$, we define $\sigma=k+k\left\lceil \frac{1}{\epsilon}\right\rceil $ and $q=k +2\sigma$. Our strategy is to recursively call the following key subroutine: Given a graph $H$ and a set of \emph{boundary} vertices $B\subseteq V(H)$, satisfying $|B|\leq2\sigma$, find a superset $X\supseteq B$ such that

\begin{enumerate}

\item \label{prop:key1} $X$ is $(q,k)$-unbreakable,

\item \label{prop:key2} $X$ has adhesion $2\sigma$, i.e.,\textbf{ }$|N_{H}(C)|\leq2\sigma$ for each connected component $C$ in $H\setminus X$,

\item \label{prop:key3} $X$ is $\frac{1}{2}$-balanced, i.e., $|V(C)|\leq|V(H)|/2$ for each connected component $C$ in $H\setminus X$.

\end{enumerate}
Given $X$, we will then create a node $t$ in the tree decomposition, set the bag $\beta(t)=X$, and recurse on the graphs $C\cup N_{H}(C)$ with the boundary $N_{H}(C)$ for every connected component $C$ of $H\setminus X$. Each recursive call will then create a tree node with parent $t$.

By simply calling this subroutine on $G$ and initial $B=\emptyset$, 
we would obtain an $(q,k)$-unbreakable decomposition with adhesion $2\sigma$ and finish. The recursion depth is $O(\log n)$ by the $\frac{1}{2}$-balanced condition, so it suffices to show a close-to-linear time algorithm for this key subroutine.
In fact, we obtain the key subroutine only when $|B| \le \sigma$. We will later show how to reduce the case of $|B| > \sigma$ to the case when $|B| \le \sigma$ at the end of this overview.

Suppose $|B| \le \sigma$.
If we find a vertex set $X_{1}$ that is $(\sigma,k)$-unbreakable, has adhesion $\sigma$, and is $\frac{1}{2}$-balanced, then we can simply return $X:=X_{1}\cup B$, because $X$ must be $(\sigma+|B|,k)$-unbreakable where $\sigma+|B|\le q$, has adhesion $\sigma+|B|\le2\sigma$, and is $\frac{1}{2}$-balanced.

Our key technical contribution (\Cref{coro:UnbreakLowAdhesion}) is a fast algorithm for computing $X_{1}$, i.e., an unbreakable low-adhesion balanced set. We do this in two main steps. The first step computes an unbreakable \emph{balanced origin} $X_{0}$, defined below. The second step, called the \emph{reducing adhesion} step, will return the desired set $X_{1}\supseteq X_{0}$.

\paragraph{Balanced Origin (\Cref{sec:sampleset}).}

We say that a vertex set $X$ is a $\frac{1}{2}$-balanced $\sigma$-origin if every superset $X'\supseteq X$ with adhesion $\sigma$
must be $\frac{1}{2}$-balanced. We will compute a $\frac{1}{2}$-balanced $\sigma$-origin that is also $(k,k)$-unbreakable.

We first observe that we can obtain a $\frac{1}{2}$-balanced $\sigma$-origin $W$, with constant probability, simply by sampling a random vertex set of size $O(\sigma)$. This holds because, as shown in \cite{fm06}, with constant probability, $W$ is even a $(\frac{1}{2},\sigma)$-net which is a stronger notion. Let us assume that this event holds with certainty for simplicity. 

Now, if $W$ is also $(k,k)$-unbreakable, then we can return $X_0\gets W$ and be done. Else, there is a $(k,k)$-breakable witness $(L,R)$ for $W$. We can find such a witness simply by trying all partitions $(W_L,W_R)$ of $W$ and checking if there is a cut of size $k$ separating $(W_L,W_R)$. This takes $O(2^{|W|}mk)=2^{O(k/\epsilon)}m$ time. Assume w.l.o.g.~that $|L|\le|R|$. We update  $W\gets (W\setminus L)\cup(L\cap R)$. 
Our key observation is that $W$ remains a $\frac{1}{2}$-balanced $\sigma$-origin (because $|L|\le|V(H)|/2$) and the size of $W$ strictly decreases. Hence, this update can happen at most $O(\sigma)$ times before $W$ becomes $(k,k)$-unbreakable. Thus, we obtain a $(k,k)$-unbreakable $\frac{1}{2}$-balanced $\sigma$-origin $X_0$ in $2^{O(k/\epsilon)}m$ time.

\paragraph{Reducing Adhesion (\Cref{sec:reducing_adhesion}).}

Our \emph{reducing adhesion} subroutine (\Cref{thm:reducing_adhesion}) is such that, given any $(k,k)$-unbreakable set $X_0$, it finds a superset of $X_0$ that is $(\sigma ,k)$-unbreakable and has adhesion $\sigma$.

Given this, we can obtain the desired set $X_1$ by feeding a $(k,k)$-unbreakable $\frac{1}{2}$-balanced $\sigma$-origin $X_0$ as the input to the reducing adhesion subroutine. Then, the output $X_1$ must be $(\sigma,k)$-unbreakable and have adhesion $\sigma$. Also, $X_1$ must be $\frac{1}{2}$-balanced, because $X_1\supseteq X_0$ has adhesion $\sigma$ and $X_0$ is a $\frac{1}{2}$-balanced $\sigma$-origin.

The high-level scheme of the reducing adhesion subroutine is as follows.
We initialize $T \gets V(H)$ and $X_1 \gets X_0$ and keep updating $T$ and $X_1$, while maintaining the following three invariants:
 \begin{enumerate}
 \item \label{inv:1} $X_0 \subseteq X_1 \subseteq T$,
 \item \label{inv:2} $T$ has adhesion $\sigma$, and
 \item \label{inv:3} $X_1$ is $(\sigma,k)$-unbreakable.
 \end{enumerate} 
We stop once $X_1 = T$. Thus, $X_1$ will be $(\sigma,k)$-unbreakable and has adhesion $\sigma$ as desired.

 Our approach for updating $T$ and $X_1$ is centered around an object called
 a \emph{$(X_1,T,k')$-witness}, which is a vertex cut $(L,R)$ of size $|L\cap R|\le k'$ where $|L \cap T|>|L\cap R|$ and $X_1 \subseteq R$. This notion is similar to a $(k',k')$-breakable witness for $T$, but it requires $X_1 \subseteq R$ and omits $|R \cap T|>|L\cap R|$.  
 Given a $(X_1,T,k')$-witness $(L,R)$,  
 our algorithm updates $T$ by \emph{carving} $T$ along the witness: we set  $T \gets (T \setminus L) \cup (L \cap R)$. Before carving, a vertex $v \in L\setminus R$ is called \emph{$(X_1,T,k')$-carvable}.\footnote{The precise definition of carvable vertices (\Cref{def:Carvable}) requires a technical condition that the witness is ``connected'' in some sense. We omit this detail here.} 
 
 The carving operation maintains $X_1 \subseteq T$ because $X_1 \subseteq R$. Thus, Invariants \ref{inv:1} and \ref{inv:3} are maintained. In fact, $X_1$ remains $(k,k)$-unbreakable because $X_1$ has never been updated. We can also maintain Invariant \ref{inv:2} by making the $(X_1,T,k')$-witness \emph{lean} (see \Cref{def:LeanWitness}). The leanness is easy to ensure using max-flow computation; we omit this detail here. The point of the carving operation is that it removes $|(L \setminus R) \cap T| \geq 1$ carvable vertices from $T \setminus X_1$, progressing towards the goal of $T=X_1$. 

 Our key structural lemma  (\Cref{lemma:nocarvable}) says that if $X_1$ is $(q_1,k)$-unbreakable for any $q_1$ and 
 there is no $(X_1,T,q_1+k)$-carvable vertex in $T$,
 then $T$ is $(q_1+k,k)$-unbreakable. 
 This lemma suggests a natural algorithm: keep finding a $(X_1,T,2k)$-witness and carving $T$ along it until $T= X_1$ or no such witness is left. If the former happens, we are done. Otherwise, if the latter happens, the lemma implies that $T$ is $(2k,k)$-unbreakable, as $X_1$ is $(k,k)$-unbreakable. Since $T$ also has adhesion $\sigma$ by induction, we can set $X_1 \gets T$ and return $X_1$.

While the above approach is correct, it is too slow. Indeed, a single carving operation might reduce the size of $T \setminus X_1$ by only one. Thus, the process could take as large as $\Omega(n)$ iterations.

\paragraph{Reducing Adhesion Fast.}
To speed up, our key algorithmic tool is the \emph{disjoint-witness} algorithm (\Cref{coro:LeanWitCover}) that computes 
\begin{itemize}
\item A vertex set $Q$ that contains all $(X_{1},T,k')$-carvable vertices in $T$,
\item A collection ${\cal C}$ of disjoint $(X_{1},T,k')$-witnesses ${\cal C}$ where $(L\setminus R)\cap T\subseteq Q$ for all $(L,R)\in{\cal C}$ and $\sum_{(L,R)\in{\cal C}}|(L\setminus R)\cap T|=\Omega_{k'}(|Q|/\log n)$. 
\end{itemize}
Intuitively, the witnesses in ${\cal C}$ contains $\Omega_{k'}(1/\log n)$-fraction of all carvable vertices in $T$. The disjoint-witness algorithm takes $2^{O(k'\log k')}m^{1+o(1)}$ time and is based on the color coding technique in \cite{cygan2020randomized} combined with the single source min-cut threshold algorithm introduced by \cite{li2021approximate}, adapted to the vertex version by \cite{pettie2022optimal}. 

For simplicity, we first explain how to compute $X_{1}$ in $O_{k}(m^{1+o(1)}\sqrt{n})$ running time using a 2-level algorithm. Our final algorithm reduces the factor $\sqrt{n}$ to $n^{\epsilon}$ by having $\left\lceil \frac{1}{\epsilon}\right\rceil $ levels. 

Start by computing the set $Q$ of $(X_{1}, T, 2k)$-carvable vertices in $T$ and the collection ${\cal C}$ of disjoint $(X_{1},T,2k)$-witnesses. There are two cases depending on whether $|Q|\le\sqrt{n}$ or not.
If $|Q|>\sqrt{n}$, then we carve $T$ along all witness of ${\cal C}$. Note that the size of $T\setminus X_{1}$ must decrease by at least $\sum_{(L,R)\in{\cal C}}|(L\setminus R)\cap T|=\Omega_{k}(\sqrt{n}/\log n)$. So, this can happen at most $O_{k}(\sqrt{n}\log n)$ times. The total running time in this case is at most $O_{k}(m^{1+o(1)}\sqrt{n})$. 

Next, suppose $|Q|\le\sqrt{n}$. At a first glance, one may expect at most $\sqrt{n}$ further iterations because there are at most $\sqrt{n}$ many $(X_{1},T,2k)$-carvable vertices left.
Unfortunately, we cannot find a way to show this (and leave this as an open problem) since each carving operation might introduce new $(X_{1},T,2k)$-carvable vertices. This happens because the operation adds some new vertices into $T$. Even though $|T\setminus X_{1}|$ strictly decreases, there might be too many iterations because it is possible that $|T\setminus X_{1}|=\Omega(n)$ while $|Q|\le\sqrt{n}$. 

Our solution is to use the extension of the key structural lemma (\Cref{lemma:nocarvable}) above: for any set $Y$ where $X_{1}\subseteq Y\subseteq T$, if $X_{1}$ is $(q_{1},k)$-unbreakable and there is no $(X_{1},T,q_{1}+k)$-carvable vertex in $Y$, then $Y$ is $(q_{1}+k,k)$-unbreakable. 
Since $T\setminus Q$ has no $(X_{1},T,2k)$-carvable vertex, we can set $X_{1}\gets T\setminus Q$ (satisfying Invariant~\ref{inv:1}). The lemma implies that the new $X_1$ is $(2k,k)$-unbreakable (satisfying Invariant~\ref{inv:3}) and $|T\setminus X_{1}|=|Q|\le\sqrt{n}$. After this operation, the algorithm proceeds to the second \emph{level}.

In the second level, we keep finding a $(X_{1},T,3k)$-witness and carve $T$ along it, until $T=X_{1}$ or no such witness is left. At the beginning of this level,  $|T\setminus X_{1}|\le\sqrt{n}$, and so there are at most $\sqrt{n}$ further iterations. Once there is no $(X_{1},T,3k)$-witness, we return $X_{1}\gets T$. We conclude that $X_{1}$ is $(3k,k)$-unbreakable by the key structural lemma and $X_{1}$ has adhesion $\sigma$ by induction, as desired. 

We can extend the above 2-level algorithm to 3 levels simply by replacing the $\sqrt{n}$ threshold by two thresholds $n^{2/3}$ and $n^{1/3}$. By extending this approach to $\left\lceil \frac{1}{\epsilon}\right\rceil$ levels, we can obtain $X_{1}$ which is $(\left\lceil \frac{1}{\epsilon}\right\rceil k+k=\sigma,k)$-unbreakable and has adhesion $\sigma$ in time $O_{k}(m^{1+o(1)}n^{\epsilon})$ as desired.

\paragraph{When $|B|>\sigma$.}
We briefly discuss the case when $\sigma<|B|\le2\sigma$. 
Suppose that there exists a $(k,k)$-breakable witness $(L,R)$ for $B$. Let $X = (L\cap R)\cup B$. Since $|X|\le |B|+k \le 2\sigma+k=q$, $X$ is trivially $(q,k)$-unbreakable. Moreover, we can show that the adhesion of $X$ is at most $|B|-1$. That is, the size of the boundary strictly decreases, making progress towards the case where $|B|\le\sigma$. 
So, while we have that $X$ does satisfy Properties \ref{prop:key1} and \ref{prop:key2} but may not satisfy Property \ref{prop:key3}, this case may occur at most $\sigma$ times before $|B|\le\sigma$. This is why the depth of our decomposition has a factor of $\sigma$. In \Cref{sect:DepthReduction}, we further improve the depth to $O(\log n)$. 

The next case is when $B$ is $(k,k)$-unbreakable. If we apply the reducing adhesion subroutine to $B$, we get a set $X_2$ which is $(\sigma,k)$-unbreakable and has adhesion $\sigma$. This means that each connected component in $H\setminus X_2$ has a boundary size at most $\sigma$. Therefore, this reduces the problem to the previous case of  $|B|\le\sigma$ in close-to-linear FPT time.

\paragraph{Organization:}
\Cref{sec:preliminaries} contains basic definitions.
\Cref{sec:sampleset} describes the construction of {unbreakable balanced origins}. \Cref{sec:reducing_adhesion} describes our fast algorithm for reducing adhesion. Finally, using the tools from these two sections as outlined above, \Cref{sec:Decomp} shows the construction of the unbreakable decomposition. 
As an example of applications, \Cref{sec:app} shows the close-to-linear-time FPT algorithm for {\textsc{$p$-Way Cut}}. We list potential applications of our result and more open problems in \Cref{sec:conclusion}.

\section{Preliminaries}\label{sec:preliminaries}
Throughout this paper, we use $n$ to denote the number of vertices and $m$ to denote the number of edges in a graph. All graphs are undirected, unweighted, and connected unless otherwise stated, and hence we shall assume $m \ge n-1$. Given a graph $G$ and a subset of vertices $U \subseteq V(G)$, $G[U]$ denotes the subgraph induced on the set of vertices $U$ and $E_G(U)$ denotes the set of edges in $E(G)$ whose both endpoints are in $U$, i.e. the set of edges in $G[U]$. The set of neighbours of a vertex $v$ in $G$ is denoted by $N_G(v)$. We denote by $N_G(U):= \bigcup_{v \in U} N_G(v) \setminus U$ the set of neighbours of $U$. We omit the subscripts when the graph is clear from the context. 

\paragraph{Vertex Cuts.} 
A \emph{vertex cut} $(L,R)$ is such that $L \cup R = V(G)$, $L\setminus R$ and $R\setminus L$ are not empty, and there is no edge between $L\setminus R$ and $R\setminus L$. The \emph{size} of the vertex cut $(L,R)$ is $|L\cap R|$. We emphasize that, we will view $(L,R)$ as an \emph{ordered pair}, because in some definitions (e.g.~\Cref{def:DisjointCuts,def:LeanWitness}), the order of $(L,R)$ does matter.
Throughout \Cref{sec:sampleset,sec:singlesource,sec:reducing_adhesion,sec:Decomp}, we only consider vertex cuts (not edge cuts), so we usually write cuts as an abbreviation of vertex cuts.

\paragraph{Vertex-Capacitated Graphs and Mincuts.}

We use $\wtilde{G}$ to denote \emph{capacitated graphs} with positive integral vertex capacity function $\rho: V(\wtilde{G}) \rightarrow \mathbb{Z}^+$.
For an arbitrary vertex set $A\subseteq V(\wtilde{G})$, we define $\rho(A) = \sum_{v\in A}\rho(v)$. Given two disjoint vertex sets $A,B \in V(\wtilde{G})$ such that there is no edge connecting $A$ and $B$, an $A$-$B$ cut is a cut $(L,R)$ with $A\subseteq L\setminus R$ and $B\subseteq R\setminus L$, and an $A$-$B$ mincut is an $A$-$B$ cut that minimizes the (capacitated) cut size $\rho(L\cap R)$. We use $\lambda_{\wtilde{G}}(A,B)$ to denote the (capacitated) size of an $A$-$B$ mincut. To avoid clutter, when $A = \{a\}$ (resp. $B = \{b\}$) is a singleton vertex, we replace $\{a\}$ with $a$ (resp. replace $\{b\}$ with $b$).

\paragraph{Balance, Adhesion and Unbreakability.}

For any vertex set $X\subseteq V(G)$, $X$ is \emph{$\alpha$-balanced} if each connected component $C$ of $G\setminus X$ has size $|C|\le \alpha n$. The \emph{adhesion} $\sigma_G(X)$ of $X$ in $G$, or simply the adhesion of $X$, is the maximum, over connected components $C$ of $G\setminus X$, of the quantity $|N_G(C)|$.

\begin{definition}[Unbreakablility]
A vertex set $X \subseteq V(G)$ is $(q,k)$-\emph{unbreakable in $G$} if every vertex cut $(L,R)$ of size at most $k$ satisfies $|L\cap X|\leq q$ or $|R\cap X|\leq q$. A $(q,k)$-\emph{breakable witness of $X$ in $G$} is a vertex cut $(L,R)$ of $G$ of size as most $k$ satisfying $|L\cap X| >  q$ and $|R\cap X|> q$.
\end{definition}
By definition, $X$ is $(q,k)$-unbreakable in $G$ iff there is no $(q,k)$-breakable witness of $X$ in $G$. Note that any set $X$ of size at most $q$ is vacuously $(q,k)$-unbreakable.

The core technical contribution of this paper (\Cref{coro:UnbreakLowAdhesion}) is a close-to-linear time algorithm for computing a vertex set that is simultaneously $1/2$-balanced, unbreakable, and has low-adhesion.

\paragraph{Unbreakable Decomposition.} Next, we define the key object of this paper.
 
\begin{definition}[Tree Decomposition]
\label{def:TreeDecomp}
A \emph{tree decomposition} of a graph $G$ is a pair $(T,\beta)$, where $T$ is a tree and $\beta: V(T) \rightarrow 2^{V(G)}$ is a mapping that assigns to every tree node $t$ a subset $\beta(t)\subseteq V(G)$, called a \emph{bag}. Furthermore, $(T,\beta)$ satisfies the following.
\begin{itemize}
    \item For each vertex $v\in V(G)$, the set $\{t \mid v\in \beta(t)\}$ induces a connected subtree of $T$.
    \item For each edge $\{u,v\}\in E(G)$, there is a tree node $t$ where $u,v\in \beta(t)$.
\end{itemize}
\end{definition}

A \emph{rooted} tree decomposition is a tree decompositon $(T, \beta)$ together with a designated root node $r \in V(T)$. For any node $t\in V(T)$ with parent $t'$, the adhesion of a tree node $t$ is $\sigma(t) = \beta(t) \cap \beta(t')$. We define $\sigma(r) = \emptyset$ for the root $r$. The \emph{adhesion} of $T$ is $|\max_{t \in T}\sigma(t)|$. 
For every $t\in V(T)$, we also define the sets 
\[
\gamma(t) = \bigcup_{\text{descendants $s$ of $t$}} \beta(s)\text{ and }G_t = G[\gamma(t)] - E_G(\sigma(t)).
\]

\begin{definition}[Unbreakable Decomposition]\label{def:UnbreakableDecomp}A $(q,k)$-unbreakable decomposition of $G$ is a rooted tree decomposition $(T,\beta)$ where each bag $\beta(t)$ is $(q,k)$-unbreakable in $G$. The decomposition admits the stronger \emph{subtree unbreakability} property if each bag $\beta(t)$ is $(q,k)$-unbreakable in $G_{t}$.
\end{definition}

The main goal of this paper (\Cref{sec:Decomp}) is a fast algorithm for computing an unbreakable decomposition with  subtree unbreakability property and  small adhesion.

For each tree node $t$, let $\alpha(t) = \gamma(t)\setminus\sigma(t)$.
We say that a rooted tree decomposition $(T, \beta)$ is \emph{compact} if for every node $t \in V(T)$ for which $\sigma(t) \neq \emptyset$, $G[\alpha(t)]$ is connected and $N_G(\alpha(t)) = \sigma(t)$. This property is handy for performing dynamic programming on tree decomposition, and we will exploit it in \Cref{sec:app}.

\paragraph{Single Source Vertex Mincuts.} We will use an algorithm for computing single-source vertex min-cuts. We start by defining the notion of \emph{disjoint} cuts.

\begin{definition}[Disjoint Cuts]
Let ${\cal C}$ be a collection of cuts. The cuts in ${\cal C}$ are disjoint if for each pair of different cuts $(L,R),(L',R')\in{\cal C}$, $L\setminus R$ and $L'$ are disjoint.
\label{def:DisjointCuts}
\end{definition}
A cut collection ${\cal C}$ is a set of cuts $(L,R)$. 
Next, we define a \emph{mincut cover}, which is the output of our single source vertex mincuts subroutine.

\begin{definition}[Mincut Covers]
\label{def:MincutCovers}
Consider a capacitated graph $\wtilde{G}$ with a source vertex $s$ and sink vertices $T$ such that $\{s\}\cup T$ is an independent set. A \emph{mincut cover} ${\cal K}$ with respect to $s$ and $T$ in $\wtilde{G}$ is a set of cut collections ${\cal C}$, which satisfies the following.
\begin{enumerate}
\item\label{prop:Cover1} For each collection ${\cal C}\in{\cal K}$ and cut $(L,R)\in{\cal C}$, $(L,R)$ is a $t$-$s$ mincut for some sink $t\in T$. 
\item\label{prop:Cover2} For each sink $t\in T$, there exists a cut $(L,R)$ in some collection ${\cal C}\in{\cal K}$ such that $t\in L\setminus R$.
\item\label{prop:Cover3} Each collection ${\cal C}$ is a set of disjoint cuts.
\end{enumerate}
The \emph{width} of a mincut cover ${\cal K}$ is the number of collections ${\cal C}$ in ${\cal K}$. To avoid clutter, we also use $(L,R)\in{\cal K}$ to denote a cut $(L,R)\in{\cal C}$ for some collection ${\cal C}\in{\cal K}$.
\end{definition}

Roughly speaking, the single source vertex mincuts subroutine receives a capacitated graph $\wtilde{G}$ with one source $s$, a set $T$ of several sinks, and a parameter $k$, and outputs a small-width mincut cover ${\cal K}$ with respect to $s$ and $T$. In other words, we can obtain $t$-$s$ mincuts for all $t\in T$, and these mincuts can be partitioned into a small number of collections of disjoint cuts. We defer the proof of this result to~\Cref{sec:singlesource}.

\begin{restatable}{theorem}{thmssmc}
Consider an $m$-edge capacitated graph $\wtilde{G}$ with vertex capacity function $\rho$, a parameter $k$, a single source vertex $s$ and sink vertices $T$ satisfying that $\{s\}\cup T$ is an independent set and each source/sink vertex has capacity $\infty$.
Let $T^{*}$ be the set of sink vertices $t$ with $\lambda_{\wtilde{G}}(t,s)\leq k$. There is a randomized algorithm that, with high probability, computes a mincut cover ${\cal K}$ with respect to $s$ and $T^{*}$ which has width $O(k\log^{3}n)$.
The running time is $\OO(km^{1+o(1)})$.
\label{thm:SSMC}
\end{restatable}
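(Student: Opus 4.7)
The plan is to reduce the vertex-capacitated single-source mincut problem to the edge-capacitated version via standard vertex splitting, and then invoke (a thresholded, multi-sink version of) the Li--Panigrahi single-source mincut machinery, organizing its output into cut collections in a way that matches \Cref{def:MincutCovers}. First I would build $G'$ from $\wtilde G$ by splitting every non-source/sink vertex $v$ into $v_{\mathrm{in}}, v_{\mathrm{out}}$ connected by a directed edge of capacity $\rho(v)$, and replacing every original edge $\{u,v\}$ by infinite-capacity edges $(u_{\mathrm{out}}, v_{\mathrm{in}})$ and $(v_{\mathrm{out}}, u_{\mathrm{in}})$; $s$ and each $t \in T$ are kept as single vertices with infinite-capacity incident edges (matching their $\infty$ capacities). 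Under this standard correspondence, a $t$-$s$ vertex mincut $(L,R)$ of capacitated size $\lambda$ in $\wtilde G$ corresponds to a $t$-$s$ edge mincut in $G'$ of the same value, obtained by cutting exactly the split edges $(v_{\mathrm{in}}, v_{\mathrm{out}})$ for $v \in L \cap R$, and vice versa. The construction blows up $|V|$ and $|E|$ by constants, so running times on $G'$ transfer up to constants.

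Next I would invoke the thresholded single-source multi-sink mincut subroutine of Li--Panigrahi (with the vertex-version adaptation from Pettie--Saranurak). For each threshold value $\lambda \in \{1,2,\ldots,k\}$, the subroutine identifies the subset $T^*_\lambda \subseteq T^*$ of sinks whose min-cut to $s$ equals $\lambda$ and produces, for each such $t$, a specific $t$-$s$ mincut. Internally it uses $O(\log^2 n)$ rounds of random sink partitioning combined with isolating cuts: in each round the sinks in $T^*_\lambda$ are randomly two-colored, and one simultaneously computes, for each color class, the "isolating mincut" that separates every sink in that class from $s$ together with the opposite-color sinks. A standard amplification argument shows that each sink $t \in T^*_\lambda$ is successfully isolated in some round with high probability, and the isolating cut returned at that round is a genuine $t$-$s$ mincut with $t \in L \setminus R$ and $s \in R \setminus L$. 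Using the $k$-threshold max-flow / isolating-cut primitive, a full invocation runs in $\OO(k m^{1+o(1)})$ time. Taking one cut collection per (round, color, value $\lambda$) triple yields a mincut cover of width $O(k \log^3 n)$ — one $\log n$ for the two-coloring amplification, one for recursion in the isolating-cut primitive, and a factor $k$ for enumerating $\lambda$, with the remaining $\log n$ absorbing the threshold search / high-probability guarantee.

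To finish I need to verify the three properties of \Cref{def:MincutCovers}. Property~\ref{prop:Cover1} (every cut in the cover is a genuine $t$-$s$ mincut for some $t \in T^*$) and Property~\ref{prop:Cover2} (every $t \in T^*$ appears with $t \in L \setminus R$ in some cut) are direct outputs of the subroutine together with the high-probability coverage argument. For Property~\ref{prop:Cover3}, the disjointness within a single collection, I would use the structural fact that isolating cuts produced in a single round from a single color class are pairwise laminar: for any two isolating cuts $(L,R)$ and $(L',R')$ for sinks $t, t'$ of the same color, the near-sides $L$ and $L'$ can be taken to be vertex-disjoint (this is the submodular uncrossing argument standard in the isolating-cut framework). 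In particular $L \setminus R \subseteq L$ is disjoint from $L'$, which is exactly the asymmetric condition in \Cref{def:DisjointCuts}. Pulling the cuts back from $G'$ to $\wtilde G$ preserves this disjointness because the correspondence sends $L \setminus R$ in $G'$ to the corresponding $L \setminus R$ (in terms of original vertices) in $\wtilde G$.

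The main obstacle I expect is getting the disjointness of \Cref{def:DisjointCuts} out of the vertex-cut isolating cut primitive cleanly. The edge-cut isolating cut lemma gives vertex-disjoint near-sides essentially for free by submodularity; in the vertex-cut setting, after splitting vertices, one has to argue that the corresponding vertex cuts in $\wtilde G$ can share \emph{separator} vertices (in $L \cap R$) but never have one cut's separator inside the other cut's near-side $L'\setminus R'$. This is what the asymmetric disjointness in \Cref{def:DisjointCuts} is calibrated to allow, and a careful submodular uncrossing on $G'$ — choosing, say, the minimal closest isolating cut for each sink — yields exactly this property after translation back to $\wtilde G$. Everything else (the vertex-splitting reduction, the threshold max-flow primitive, and the amplification for coverage) is standard and plugs together to give the claimed $\OO(k m^{1+o(1)})$ running time.
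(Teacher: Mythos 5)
Your proposal is correct and follows the same underlying strategy as the paper: threshold over cut values $1,\dots,k$, randomly subsample sinks, compute isolating cuts with respect to the sampled sinks plus the source, and amplify over $O(\log^2 n)$ rounds so that every sink in $T^*$ is covered with high probability, yielding width $O(k\log^3 n)$. The one genuine difference is the route to the isolating-cut primitive: you reduce to the edge-capacitated setting via vertex splitting and then must re-derive the asymmetric disjointness of \Cref{def:DisjointCuts} by uncrossing and pulling cuts back to $\wtilde{G}$, whereas the paper invokes the vertex-capacitated isolating cuts theorem of Li et al.\ and Chekuri--Quanrud directly (\Cref{thm:maxiso}), which already returns a collection of disjoint vertex cuts in exactly the required sense, so no translation step is needed. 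Your identification of disjointness as the delicate point is apt, and your proposed fix (taking the closest/minimal mincut for each sink and uncrossing) is essentially the argument the paper alludes to when it says the disjointness "can be seen easily following the algorithm in [LNPSY21]". One small accounting discrepancy: in the paper the third $\log n$ factor in the width comes from enumerating geometric sampling rates $r=2^i$, $i=0,\dots,\lfloor\log n\rfloor$ (needed because a sink is isolated only when the sampling rate matches the number of sinks in the near side of its closest mincut), not from internal recursion of the isolating-cut primitive — each call to that primitive contributes a single collection to the cover. Your description of the sampling as a uniform "two-coloring" would not by itself isolate sinks whose closest mincut contains many other sinks on its near side, so the geometric-scale enumeration is a necessary ingredient rather than an implementation detail; since you invoke the Li--Panigrahi subroutine as a black box this is recoverable, but it is worth making explicit.
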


\section{Unbreakable Balanced Origins from Nets}
\label{sec:sampleset}

Let us define a \emph{balanced origin}, a vertex set where every low-adhesion superset must be balanced. 

\begin{definition}[Balanced Origin]
Given a graph $G$, an $\alpha$-balanced $\sigma$-origin is a vertex set $X \subseteq V(G)$ such that, for any vertex set $X'\supseteq X$ with adhesion $\sigma$, $X'$ must be $\alpha$-balanced.

\end{definition}

The goal of this section is to compute an unbreakable balanced origin.

\begin{lemma}
\label{lemma:SepBySampleSet}
Given an $m$-edge graph $G$ with parameters $k,\sigma$ where $k\leq \sigma$, there is a randomized algorithm that computes a set $X$ which is always $(k,k)$-unbreakable such that with constant probability, $X$ is a  $\frac{1}{2}$-balanced $\sigma$-origin. The running time is $2^{\OO(\sigma)}m$.

\end{lemma}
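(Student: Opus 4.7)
The plan is to sample a random $W_{0} \subseteq V(G)$ of size $\Theta(\sigma)$ and iteratively refine it until it becomes $(k,k)$-unbreakable, while preserving the balanced-origin property throughout. Recall that a $(\tfrac{1}{2}, \sigma)$-net is a vertex set hitting every connected subset $C$ with $|C| > n/2$ and $|N_{G}(C)| \le \sigma$; by \cite{fm06}, the random $W_{0}$ is such a net with constant probability. Any $(\tfrac{1}{2},\sigma)$-net is automatically a $\tfrac{1}{2}$-balanced $\sigma$-origin: for any $X \supseteq W_{0}$ with adhesion $\sigma$, every component $C$ of $G \setminus X$ is disjoint from $W_{0}$ yet has $|N_{G}(C)| \le \sigma$, forcing $|C| \le n/2$.

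The refinement iteration is as follows. Given the current $W$, test whether a $(k,k)$-breakable witness $(L,R)$ for $W$ exists by enumerating, for each vertex of $W$, which of the three sides $L \setminus R$, $L \cap R$, or $R \setminus L$ it is guessed to belong to. For each resulting $3$-partition $(W_{L}, W_{M}, W_{R})$ with $|W_{M}| \le k$, $|W_{L}| + |W_{M}| > k$, and $|W_{R}| + |W_{M}| > k$, compute a min $W_{L}$-$W_{R}$ vertex cut in $G \setminus W_{M}$ and check whether its size is at most $k - |W_{M}|$. A $(k,k)$-breakable witness exists iff some guess succeeds; since $|W| \le |W_{0}| = O(\sigma)$ throughout, the per-iteration cost is $3^{|W|} \cdot O(mk) = 2^{O(\sigma)} m$. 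If none is found, return $W$. Otherwise, swap $L$ and $R$ if needed so that $|L| \le |R|$, which via $|L| + |R| = n + |L \cap R|$ gives $|L \setminus R| \le n/2$, and update $W \leftarrow W' := (W \setminus L) \cup (L \cap R)$.

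The main obstacle is proving that each update preserves the $\tfrac{1}{2}$-balanced $\sigma$-origin property. Fix any $X' \supseteq W'$ with adhesion $\le \sigma$ and any component $C$ of $G \setminus X'$. Since $L \cap R \subseteq W' \subseteq X'$ and no edges join $L \setminus R$ to $R \setminus L$ (vertex cut property), $C$ lies entirely in one of $L \setminus R$ or $R \setminus L$. If $C \subseteq L \setminus R$, then $|C| \le |L \setminus R| \le n/2$ directly. If $C \subseteq R \setminus L$, set $Y := X' \cup (L \setminus R)$; then $Y \supseteq W' \cup (L \setminus R) = (W \setminus L) \cup L \supseteq W$, and because no edges cross between $L \setminus R$ and $R \setminus L$, adding $L \setminus R$ to $X'$ does not alter the induced subgraph on $R \setminus L$. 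Hence every component of $G \setminus Y$ lies in $R \setminus L$ and coincides with a component of $G \setminus X'$ there, inheriting a $G$-neighborhood $\subseteq X' \subseteq Y$ of size $\le \sigma$. Thus $Y$ has adhesion $\le \sigma$, and applying $W$'s origin property to $Y$ yields $|C| \le n/2$.

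Finally, $|W'| - |W| = |L \cap R| - |W \cap L| < 0$ because $|W \cap L| > k \ge |L \cap R|$ holds for every $(k,k)$-breakable witness. So $|W|$ strictly decreases each iteration, and the loop terminates within $|W_{0}| = O(\sigma)$ iterations (certainly once $|W| \le k$, at which point $W$ is trivially $(k,k)$-unbreakable). The overall running time is $O(\sigma) \cdot 2^{O(\sigma)} m = 2^{O(\sigma)} m$. By construction the output is always $(k,k)$-unbreakable, and it is a $\tfrac{1}{2}$-balanced $\sigma$-origin whenever the initial sample succeeds, which occurs with constant probability.
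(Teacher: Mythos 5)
Your proposal is correct, and the algorithm is the same as the paper's (sample an $O(\sigma)$-size random set, which is a $(\tfrac12,\sigma)$-net with constant probability; repeatedly find a $(k,k)$-breakable witness $(L,R)$ with $|L|\le|R|$ and replace $W$ by $(W\setminus L)\cup(L\cap R)$; terminate since $|W|$ drops by $|W\cap L|-|L\cap R|>0$ each round). Where you diverge is in the correctness argument for the central step. The paper never shows that the intermediate sets remain balanced origins; instead it maintains the weaker vertex-level invariant that every $v\in W\setminus X^{(i)}$ lies in a component of $G\setminus X^{(i)}$ of size at most $n/2$, and only invokes the net property once, at the very end, to derive the origin property of the final output. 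You instead prove directly that the $\tfrac12$-balanced $\sigma$-origin property is itself preserved by each carving step: given $X'\supseteq W'$ of adhesion $\le\sigma$ and a component $C\subseteq R\setminus L$, you pass to $Y=X'\cup(L\setminus R)$, observe $Y\supseteq W$, that $Y$ still has adhesion $\le\sigma$ (since no edges cross from $L\setminus R$ to $R\setminus L$, the components of $G\setminus Y$ are exactly those components of $G\setminus X'$ inside $R\setminus L$), and apply the origin property of $W$ to $Y$. This is exactly the "key observation" sketched in the paper's technical overview but not the argument in its formal proof; your version is more modular in that it isolates a reusable fact (the origin property is invariant under carving along a breakable witness whose smaller side is $L$), whereas the paper's invariant is tied to the specific net $W$. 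Your witness-detection step via $3$-partitions and your $|L\setminus R|\le n/2$ derivation from $|L|\le|R|$ are both fine and match the paper's bounds up to constants in the exponent.
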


The high-level idea of the algorithm for \Cref{lemma:SepBySampleSet} is to first sample a \emph{net} (\Cref{lemma:Feige}) as the initial set and keep updating the set as long as there exists a breakable witness until it becomes unbreakable in a straightforward manner using \Cref{lemma:CheckUnbreakable}. The properties of nets will then allow us to show that the final set is a $\frac{1}{2}$-balanced $\sigma$-origin.

We now formally define the notion of $(\alpha, \sigma)$-nets. This concept was introduced by~\cite{fm06} and is closely related to the notion of \emph{detection sets}%
~\cite{kleinberg2004detecting}. %

\begin{definition}[$(\alpha,\sigma)$-nets]
\label{def:nets}
A set $W$ of vertices in a graph $G$ is an $(\alpha,\sigma)$-net if for every set of vertices $S$ in $G$ of size at most $\sigma$, and for every connected component $D$ of $G\setminus S$,
\begin{enumerate}
\item If $|D|\geq \alpha n$, then $D$ has at least one vertex from $W$.
\item If $|D|\leq (1-\alpha)n-|S|$, then the set $V(G)\setminus (D\cup S)$ has at least one vertex from $W$.
\end{enumerate}
\end{definition}

We remark that we will only use the first property in our proof of \Cref{lemma:SepBySampleSet} and in the whole paper.

\begin{lemma}[Corollary 3.6 of~\cite{fm06}]
\label{lemma:Feige}
Given a graph $G$, there exists an absolute constant $c$ such that any random subset $W \subseteq V(G)$ of size $c\frac{\sigma}{\alpha}\log \frac{1}{\alpha}$ is an $(\alpha, \sigma)$-net with constant probability.

\end{lemma}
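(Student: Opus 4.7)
The plan is to follow the technical-overview sketch exactly: sample a random set that is a $(1/2,\sigma)$-net with constant probability, then repeatedly carve it along $(k,k)$-breakable witnesses until none remains. Concretely, I first draw a uniformly random vertex subset $W$ of size $\Theta(\sigma)$, so by \Cref{lemma:Feige} with $\alpha=1/2$ the set $W$ is a $(1/2,\sigma)$-net with constant probability. Then, while $W$ admits a $(k,k)$-breakable witness $(L,R)$ (found by the standard subroutine \Cref{lemma:CheckUnbreakable}, which enumerates the $2^{|W|}$ bipartitions $(W_L,W_R)$ of $W$ with $|W_L|,|W_R|>k$ and runs a max-flow on each to test for a separator of size at most $k$), I relabel so that $|L\setminus R|\le|R\setminus L|$, which forces $|L\setminus R|\le n/2$, and update $W\gets(W\setminus L)\cup(L\cap R)=(W\cap(R\setminus L))\cup(L\cap R)$. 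The algorithm returns $W$ as soon as no witness is found; by construction the returned set is always $(k,k)$-unbreakable, regardless of the random choices.

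The core claim is that $W$ remains a $1/2$-balanced $\sigma$-origin at every iteration, conditional on the initial sample being a net. I will prove this by induction. For the base case, if $W_0$ is a $(1/2,\sigma)$-net and $W''\supseteq W_0$ has adhesion $\sigma$, then any component $C$ of $G\setminus W''$ with $|C|>n/2$ would, via $S=N_G(C)$ of size at most $\sigma$, contradict property~(1) of \Cref{def:nets} applied to $W_0$. For the inductive step, assume $W_i$ is an origin and $(L,R)$ is the current witness with $|L\setminus R|\le n/2$, and set $W_{i+1}=(W_i\cap(R\setminus L))\cup(L\cap R)$; take any $W''\supseteq W_{i+1}$ of adhesion $\sigma$ and any component $C$ of $G\setminus W''$. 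Since $L\cap R\subseteq W''$ separates $L\setminus R$ from $R\setminus L$, either $C\subseteq L\setminus R$, in which case $|C|\le|L\setminus R|\le n/2$ directly, or $C\subseteq R\setminus L$. In the latter case I ``inflate'' $W''$ to $W^\flat:=W''\cup L$ and check three things: (i) $W^\flat\supseteq W_i$, because $W_i\cap L\subseteq L\subseteq W^\flat$ and $W_i\cap(R\setminus L)\subseteq W_{i+1}\subseteq W''$; (ii) every component of $G\setminus W^\flat$ is precisely a component of $G\setminus W''$ lying in $R\setminus L$ (the $L\setminus R$-components are swallowed by $L$), so its $G$-neighborhood sits inside $W''$ with size at most $\sigma$, giving $W^\flat$ adhesion $\sigma$ as well; and (iii) $C$ remains a component of $G\setminus W^\flat$ since it is disjoint from $L$. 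Applying the origin property of $W_i$ to $W^\flat$ then yields $|C|\le n/2$, closing the induction.

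For termination, the direct calculation $|W_{i+1}|-|W_i|=|L\cap R|-|W_i\cap L|<0$ (using $|W_i\cap L|>k\ge|L\cap R|$) shows that the size of $W$ strictly decreases, so the loop runs at most $|W_0|=O(\sigma)$ times. Each iteration is dominated by the witness search at cost $2^{O(\sigma)}\cdot O(km)$, and absorbing the $k,\sigma$ factors into the exponential yields the claimed $2^{O(\sigma)}m$ total running time. The main subtle point is the inductive preservation: trying to maintain the stronger invariant ``$W$ remains a $(1/2,\sigma)$-net'' fails at boundary configurations in which the separated component is exactly $D=L\setminus R$ with $|D|=n/2$ (the updated set then misses $D$ entirely). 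Weakening to the origin property and deploying the $W^\flat$-inflation is precisely the move that makes induction go through, because inflating by all of $L$ preserves both adhesion and the specific component $C\subseteq R\setminus L$ whose size we need to bound.
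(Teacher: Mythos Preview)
Your proposal does not address the stated lemma. \Cref{lemma:Feige} is the sampling fact (quoted from Feige--Mahdian) that a uniformly random subset of size $c\frac{\sigma}{\alpha}\log\frac{1}{\alpha}$ is an $(\alpha,\sigma)$-net with constant probability; the paper does not prove it but merely cites it. Your write-up \emph{uses} \Cref{lemma:Feige} as a black box in its first line and then proceeds to prove a different statement, namely \Cref{lemma:SepBySampleSet} (the construction of a $(k,k)$-unbreakable $\tfrac12$-balanced $\sigma$-origin). As a proof of \Cref{lemma:Feige} itself, the proposal is simply off-target.

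Read instead as a proof of \Cref{lemma:SepBySampleSet}, your argument is correct and takes a genuinely different route from the paper. The paper keeps the original net $W$ fixed and maintains the invariant that every vertex of $W\setminus X^{(i)}$ lies in a component of $G\setminus X^{(i)}$ of size at most $n/2$; only at the very end does it combine this invariant with the net property to conclude that the final $X$ is a $\tfrac12$-balanced $\sigma$-origin. You instead maintain directly, by induction on $i$, that each intermediate $W_i$ is already a $\tfrac12$-balanced $\sigma$-origin, via the inflation $W^\flat=W''\cup L$ that reduces the origin property of $W_{i+1}$ to that of $W_i$. Both arguments are elementary and of the same length; yours is arguably cleaner in that the net definition is unpacked only once (at the base case), while the paper's invariant is more concrete and does not require re-examining adhesion of an auxiliary superset at every step. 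A minor cosmetic difference: the paper orders sides by $|L|\le|R|$ whereas you use $|L\setminus R|\le|R\setminus L|$; either yields $|L\setminus R|\le n/2$, so this is immaterial.
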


\begin{lemma}
Given an $m$-edge graph $G$ with a set $W\subseteq V(G)$ and parameters $q,k$, there is a deterministic algorithm that either
\begin{itemize}
\item certifies that $W$ is $(q,k)$-unbreakable in $G$, or
\item outputs a $(q, k)$-breakable witness $(L,R)$ of $W$ in $G$.
\end{itemize}
The running time is $\OO(2^{|W|}k m)$.
\label{lemma:CheckUnbreakable}
\end{lemma}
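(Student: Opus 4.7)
The plan is to enumerate all possible intersection patterns that a potential $(q,k)$-breakable witness can induce on $W$, and to test each pattern via a threshold vertex max-flow computation. For any vertex cut $(L,R)$, the set $W$ is partitioned into three disjoint parts: $A := W \cap (L \setminus R)$, $B := W \cap (R \setminus L)$, and $S := W \cap (L \cap R)$. The breakable witness conditions $|L \cap W| > q$, $|R \cap W| > q$, and $|L \cap R| \le k$ are equivalent to $|A| + |S| > q$, $|B| + |S| > q$, and $|S| \le k$. The algorithm therefore iterates over every ordered 3-partition $(A, B, S)$ of $W$ satisfying these three inequalities; there are at most $3^{|W|} = 2^{O(|W|)}$ of them, which is absorbed into the claimed $2^{|W|}$ bound.

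For each valid 3-partition, we test realizability as follows. In the graph $G - S$, attach a virtual super-source $s^{\ast}$ to every vertex in $A$ and a virtual super-sink $t^{\ast}$ to every vertex in $B$, with infinite vertex capacities on $s^{\ast}, t^{\ast}$ (so no cut of finite size can contain them). Then compute the $s^{\ast}$-$t^{\ast}$ vertex min-cut up to the threshold $k - |S|$; this can be done in $O(km)$ time by vertex splitting and running at most $(k - |S|) + 1$ rounds of Ford--Fulkerson augmenting-path searches, aborting as soon as the threshold is exceeded. If a cut $(L', R')$ of size at most $k - |S|$ is found, output the breakable witness $(L' \cup S,\; R' \cup S)$; otherwise, move on. If no valid 3-partition produces such a cut, certify that $W$ is $(q,k)$-unbreakable.

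Correctness is immediate in both directions: any $(q,k)$-breakable witness $(L, R)$ induces a valid 3-partition $(A, B, S)$ together with the cut $(L \setminus S,\, R \setminus S)$ in $G - S$ of size $|L \cap R| - |S| \le k - |S|$, and conversely every cut output by the algorithm lifts directly to a valid breakable witness. The total running time is $3^{|W|} \cdot O(km) = O(2^{O(|W|)} \cdot km)$. The sole subtlety is the degenerate case $A = \emptyset$ (or symmetrically $B = \emptyset$), in which the super-source has no neighbors and the min-cut test becomes vacuous. This case only arises when $q < k$, since $|A| + |S| > q$ with $A = \emptyset$ forces $|S| > q$ and hence $q < |S| \le k$. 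In the regime $q \ge k$, which covers every usage in this paper, the degenerate case never occurs. When it does, it can be patched by iterating over an additional ``anchor'' vertex $v \in V(G) \setminus W$ placed on the empty side, which contributes at most a polynomial overhead without affecting the FPT exponent.
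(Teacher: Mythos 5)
Your proposal is correct and is essentially the paper's proof: both enumerate how a hypothetical witness partitions $W$ and test each pattern with a threshold vertex max-flow; you merely make explicit (via the third part $S$) the possibility that $W$-vertices lie inside the separator, which the paper handles implicitly by computing mincuts in which the terminals themselves may be cut. The only cosmetic differences are that your enumeration gives $3^{|W|}$ rather than $2^{|W|}$ subproblems (harmless, since every application has $|W|=O(\sigma)$ and the bound $2^{O(|W|)}km$ is all that is used) and your degenerate-case discussion is unnecessary in the paper's regime $q\ge k$, as you correctly observe.
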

\begin{proof}
For each $W_{L}\subseteq W$ and $W_{R} = W\setminus W_{L}$ s.t. $|W_{L}|,|W_{R}| > q$, we compute a vertex mincut $(L,R)$ that separates $W_{L}$ from $W_{R}$ in $G$. %
If the cut size is at most $k$, then $(L,R)$ is a $(q,k)$-breakable witness of $W$ in $G$ by definition, and we terminate the algorithm with output $(L,R)$. %
If there is no such $(L,R)$ after checking all $(W_{L}, W_{R})$, it must be the case that there is no $(q, k)$-breakable witness of $W$ in $G$, so $W$ is $(q,k)$-unbreakable in $G$ by definition.

Now we analyze the running time. The number of partitions $(W_{L}, W_{R})$ of $W$ is at most $2^{|W|}$. For each of them, we invoke a maxflow algorithm\footnote{We can use the classic Ford-Fulkerson algorithm because we only want to obtain a mincut of size at most $k$ or decide such mincut does not exist.} on $G$, which takes $\OO(km)$ time. Hence the total running time is $\OO(2^{|W|}k m)$.
\end{proof}

We are now ready to prove \Cref{lemma:SepBySampleSet}.

\begin{proof}[Proof of \Cref{lemma:SepBySampleSet}]
Set $\alpha = \frac{1}{2}$. Using \Cref{lemma:Feige}, we first compute an $(\alpha, \sigma)$-net $W$ of $G$ with size $|W| = O(\frac{\sigma}{\alpha}\log(\frac{1}{\alpha})) = O(\sigma)$. We initialize $X^{(0)}$ to be $W$, and then we will update this set iteratively as follows. Let $X^{(i)}$ be the set of vertices right after the $i$-th iteration. For all $i\geq 0$, we will maintain the invariant that for each vertex $v\in W\setminus X^{(i)}$, the connected component in $G\setminus X^{(i)}$ containing $v$ has size at most $n/2$. Initially, the invariant vacuously holds because $X^{(0)} = W$ and so $W\setminus X^{(0)} = \emptyset$.

At the $i$-th iteration, we use \Cref{lemma:CheckUnbreakable} to check whether $X^{(i-1)}$ is $(k,k)$-unbreakable in $G$ or not. If it is indeed $(k,k)$ unbreakable, we set $X = X^{(i-1)}$ and terminate the whole algorithm with output $X$. Otherwise, \Cref{lemma:CheckUnbreakable} will return a $(k,k)$-breakable witness $(L,R)$ of $X^{(i-1)}$ in $G$. Without loss of generality, we assume $L$ is the smaller side, i.e. $|L|\leq |R|$. Then we update $X^{(i)} = (X^{(i-1)}\setminus L)\cup (L\cap R)$. Namely, we first remove the part in the smaller side $L$, and then add $L\cap R$.

Now, we show that the invariant holds for $X^{(i)}$, assuming that $X^{(i-1)}$ already satisfies the invariant. Consider a vertex $v\in W\setminus X^{(i)}$. 
First, if $v\in L\setminus R$, the connected component $D_{i}$ of $G\setminus X^{(i)}$ containing $v$ satisfies $D_{i}\subseteq L\setminus R$, so $|D_{i}|\leq |L\setminus R|\leq n/2$. From now, we consider the case that $v\in R$. In fact, we must have $v\in R\setminus L$ in this case because $L\cap R\subseteq X^{(i)}$ but $v\in W\setminus X^{(i)}$. Let $D_{i}$ be the connected component of $G\setminus X^{(i)}$ containing $v$. Note that $D_{i}\subseteq R\setminus L$ because $v\in R\setminus L$ and $L\cap R\subseteq X^{(i)}$. Furthermore, by the update rule, $X^{(i-1)}\cap (R\setminus L) = X^{(i)}\cap (R\setminus L)$, so $D_{i}$ is disjoint from $X^{(i-1)}$, which means $D_{i}$ is inside a connected component $D_{i-1}$ of $G\setminus X^{(i-1)}$ and in particular $|D_{i}|\leq |D_{i-1}|$. Again by $X^{(i-1)}\cap (R\setminus L) = X^{(i)}\cap (R\setminus L)$, we know $v\in W\setminus X^{(i-1)}$ combining $v\in R\setminus L$ and $v\in W\setminus X^{(i)}$. Therefore, the invariant of $X^{(i-1)}$ gives that $|D_{i-1}|\leq n/2$, which implies $|D_{i}|\leq |D_{i-1}|\leq n/2$ as desired.

At the end of the algorithm (we will discuss why it must end in the running time analysis), we obtain a $(k,k)$-unbreakable set $X$ such that for each vertex $v\in W\setminus X$, the connected component in $G\setminus X$ containing $v$ has size at most $n/2$. Note that the unbreakability of $X$ and the invariant always hold regardless of the success probability of \Cref{lemma:Feige}.

Next we show that $X$ is a $\frac{1}{2}$-balanced $\sigma$-origin conditioned on the success of \Cref{lemma:Feige}. Since~\Cref{lemma:Feige} is satisfied with constant probability, it would then follow that $X$ is a $\frac{1}{2}$-balanced $\sigma$-origin with constant probability. Let $X'\supseteq X$ be a vertex set with adhesion $\sigma$. Consider a connected component $D'$ of $G\setminus X'$, and assume for contradiction that $|D'|>n/2$. We know that $|N_{G}(D')|\leq \sigma$ since the adhesion of $X'$, $\sigma(X')$, is at most $\sigma$. Then since $|D'|>n/2$, $|N_{G}(D')|\leq \sigma$ and by the definition of $(\alpha,\sigma)$-nets (property $1$ of~\Cref{def:nets}), $W$ has at least one vertex from $D'$, say $v$. Then $v\in W\setminus X$. However, by the invariant, the connected component $D$ of $G\setminus X$ containing $v$ must have size $|D|\leq n/2$, which implies that $|D'|\leq |D|\leq n/2$, a contradiction. 

\medskip

\noindent{\underline{Running Time.}} The bottleneck is applying \Cref{lemma:CheckUnbreakable} in each iteration. The number of iterations is at most $|W|$ since the initial set is $X^{(0)} = W$ and each update to the set decreases its size by at least $1$. Therefore, the total running time is $\OO(2^{|W|}\cdot|W|\cdot km) = 2^{\OO(\sigma)}m$ (since $\sigma\geq k$).
\end{proof}

\section{Reducing Adhesion of Unbreakable Sets}
\label{sec:reducing_adhesion}

This subsection forms the main technical component of our unbreakable decomposition. Roughly speaking, \Cref{thm:reducing_adhesion} shows, given an unbreakable set $X_{0}$, in almost linear time, we can expand it to another set $X$ which is appropriately unbreakable and has small adhesion.

\begin{lemma}\label{thm:reducing_adhesion}
Given an $m$-edge graph $G$ with parameters $0<\epsilon\leq 1$, $k\geq 1$ and $q\geq k$, and an initial set $X_0 \subseteq V(G)$ such that $X_0$ is $(q,k)$-unbreakable, there is an algorithm that computes with high probability that a set $X \supseteq X_0$ such that $X$ is $(q + k \lceil \frac{1}{\epsilon} \rceil, k)$ unbreakable and has adhesion at most $\sigma(X) = q + k \lceil \frac{1}{\epsilon} \rceil$. The running time is $\exp(\OO((q+\frac{k}{\epsilon})\log(q+\frac{k}{\epsilon}))) m^{1+\epsilon+o(1)}$.
\end{lemma}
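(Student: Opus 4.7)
The plan is to maintain two nested sets $T \supseteq X_1 \supseteq X_0$ together with three invariants throughout the algorithm: (i) $X_0 \subseteq X_1 \subseteq T$, (ii) $T$ has adhesion at most $q + k\lceil 1/\epsilon \rceil$ in $G$, and (iii) at the start of level $i$, $X_1$ is $(q+ik, k)$-unbreakable. Initialize $T \gets V(G)$ and $X_1 \gets X_0$, placing us at level $0$ with invariant (iii) given by hypothesis. The algorithm will proceed through $\ell := \lceil 1/\epsilon \rceil$ levels; at the very end we will have $X_1 = T$, at which point $X := X_1$ is the desired output: $X$ will be $(q+\ell k, k)$-unbreakable by (iii) and will inherit adhesion at most $q + k\lceil 1/\epsilon \rceil$ from $T$ via (ii).

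Associated to level $i \in \{1, \ldots, \ell\}$ will be the parameter $k_i := q + ik$ and the threshold $\tau_i := n^{1 - i/\ell}$ (with $\tau_\ell = 1$ and $\tau_0 := n$). Each step of level $i$ will call the disjoint-witness routine (\Cref{coro:LeanWitCover}) with parameter $k_i$ to produce (a) the set $Q$ of all $(X_1, T, k_i)$-carvable vertices in $T$, and (b) a disjoint collection $\mathcal{C}$ of lean $(X_1, T, k_i)$-witnesses covering an $\Omega_{k_i}(1/\log n)$-fraction of $Q$. If $|Q| > \tau_i$, the step will carve $T$ along every $(L,R) \in \mathcal{C}$ simultaneously; disjointness will let the carvings compose, and they will shrink $|T \setminus X_1|$ by at least $\Omega_{k_i}(\tau_i/\log n)$. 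If $|Q| \le \tau_i$, the step will promote $X_1 \gets T \setminus Q$ and advance to level $i+1$: by the structural lemma (\Cref{lemma:nocarvable}), since $Y := T \setminus Q$ contains no $(X_1, T, k_i)$-carvable vertex and $X_1$ is $(k_i - k, k)$-unbreakable, $Y$ is $(k_i, k)$-unbreakable, preserving (iii). At level $\ell$ the threshold $\tau_\ell = 1$ will force $Q = \emptyset$ eventually, and the same lemma applied with $Y = T$ will yield that $T$ itself is $(q+\ell k, k)$-unbreakable; setting $X_1 \gets T$ then finishes.

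For correctness I would verify that each operation preserves the invariants: (i) is preserved because every witness satisfies $X_1 \subseteq R$ so carving never deletes an $X_1$-vertex, and promotions only grow $X_1$; (ii) is preserved precisely because each carved witness is lean, which controls how component boundaries in $G \setminus T$ evolve (standard lean-witness bookkeeping); (iii) is unchanged within a level and upgraded by exactly one step of the structural lemma at each promotion. For the runtime, at the start of level $i$ we have $|T \setminus X_1| \leq \tau_{i-1}$, so the $\Omega_{k_i}(\tau_i/\log n)$ reduction per carving iteration yields $O(n^{1/\ell}\log n \cdot \poly(k_i))$ iterations per level, each costing $2^{O(k_i \log k_i)} m^{1+o(1)}$ by \Cref{coro:LeanWitCover}; summing over $i \leq \ell$ and using $n^{1/\ell} \leq n^\epsilon$ and $k_\ell = O(q + k/\epsilon)$ yields the claimed bound. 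The main obstacle I anticipate is the bookkeeping behind invariant (ii): verifying that a single carving never inflates the adhesion of $T$, which will hinge on stating and exploiting the right leanness property of the witnesses returned by the disjoint-witness subroutine. A secondary subtlety is that the promotion step is valid only because the $Q$ returned by that subroutine is exhaustive -- it contains \emph{all} carvable vertices in $T$, not merely the ones witnessed by $\mathcal{C}$.
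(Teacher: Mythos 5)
Your proposal is correct and follows essentially the same route as the paper: the same $\lceil 1/\epsilon\rceil$-level scheme with thresholds $n^{1-i\epsilon}$, the same carve-or-promote dichotomy driven by the disjoint-witness subroutine, the same use of the structural lemma (\Cref{lemma:nocarvable}) to upgrade unbreakability at each promotion, and the same leanness-based control of adhesion. The only (harmless) imprecision is that the iteration count per level carries a $2^{O(k_i\log k_i)}$ factor from the witness-cover guarantee rather than $\poly(k_i)$, and the adhesion invariant should be tracked as $\sigma(T)\le q+ik$ at level $i$ (which the lean-carving bound delivers automatically) so that the precondition $\sigma(T)\le k'$ of \Cref{coro:LeanWitCover} is met.
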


Before proving \Cref{thm:reducing_adhesion}, we first give \Cref{coro:UnbreakLowAdhesion} which may be of independent interest. Roughly speaking, \Cref{coro:UnbreakLowAdhesion} is a simple corollary of \Cref{lemma:SepBySampleSet} and \Cref{thm:reducing_adhesion}, which says we can compute a unbreakable balanced vertex set with low adhesion efficiently.

\begin{theorem}[Unbreakable Balanced Low-Adhesion Sets]
\label{coro:UnbreakLowAdhesion}
Given an $m$-edge graph $G$ with parameters $0<\epsilon\leq 1$ and $k\geq 1$, there is a randomized algorithm that with high probability computes a set $X\subseteq V(G)$ such that $X$ is $(\lceil 1/\epsilon\rceil k + k, k)$-unbreakable, $1/2$-balanced and has adhesion $\lceil 1/\epsilon \rceil k + k$.
The running time is $\exp(\OO(\frac{k}{\epsilon}\log\frac{k}{\epsilon})) m^{1+\epsilon+o(1)}$. 
\end{theorem}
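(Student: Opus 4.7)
The plan is to combine the two main tools developed in this section and in \Cref{sec:sampleset} with carefully chosen parameters. Set $\sigma = \lceil 1/\epsilon\rceil k + k$, so that $k \le \sigma$. First I invoke \Cref{lemma:SepBySampleSet} on $G$ with parameters $k$ and $\sigma$ to obtain a set $X_0 \subseteq V(G)$ that is \emph{always} $(k,k)$-unbreakable and, with constant probability, is a $\frac{1}{2}$-balanced $\sigma$-origin. This step runs in time $2^{O(\sigma)} m = \exp(O(k/\epsilon))\, m$.

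Next, I feed $X_0$, together with parameters $q = k$ and $\epsilon$, into \Cref{thm:reducing_adhesion}. Since $X_0$ is $(k,k)$-unbreakable, with high probability the procedure outputs a superset $X \supseteq X_0$ that is $(k + k\lceil 1/\epsilon\rceil,\, k)$-unbreakable and has adhesion at most $k + k\lceil 1/\epsilon\rceil = \sigma$, exactly matching the claimed unbreakability and adhesion bounds. The running time $\exp(O(\frac{k}{\epsilon}\log\frac{k}{\epsilon}))\,m^{1+\epsilon+o(1)}$ of this call dominates the overall cost. The $\frac{1}{2}$-balancedness of $X$ then comes for free: conditioning on the good event of \Cref{lemma:SepBySampleSet}, $X_0$ is a $\frac{1}{2}$-balanced $\sigma$-origin, so any superset of $X_0$ with adhesion at most $\sigma$, in particular $X$, must be $\frac{1}{2}$-balanced by definition of a balanced origin.

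The only loose end is the constant-probability guarantee of \Cref{lemma:SepBySampleSet}, which I boost to high probability by independent repetition with an explicit verification step. Concretely, I repeat the two-step procedure above $\Theta(\log n)$ times, obtaining candidates $X^{(1)}, \ldots, X^{(\Theta(\log n))}$, and for each candidate I directly check whether it is $\frac{1}{2}$-balanced by computing the connected components of $G \setminus X^{(i)}$ in $O(m)$ time and comparing each component size against $n/2$. I return the first $X^{(i)}$ passing this check. By a union bound over the $\Theta(\log n)$ invocations of \Cref{thm:reducing_adhesion}, with high probability \emph{every} candidate satisfies the unbreakability and adhesion guarantees; independently, the probability that no invocation of \Cref{lemma:SepBySampleSet} yields a true balanced $\sigma$-origin is polynomially small, so with high probability at least one candidate is $\frac{1}{2}$-balanced and is caught by the verification step. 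The total running time absorbs the $\log n$ factor into $m^{o(1)}$.

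I do not foresee a serious obstacle: the proof is essentially bookkeeping of parameters and a standard probability-amplification argument. The one conceptual point to keep straight is that the balancedness of the final $X$ is \emph{inherited} from $X_0$ through the balanced-origin property, which is precisely why \Cref{thm:reducing_adhesion} (which produces an adhesion-$\sigma$ superset) composes seamlessly with \Cref{lemma:SepBySampleSet}; this also explains why it is crucial that \Cref{thm:reducing_adhesion} guarantees adhesion no larger than the $\sigma$ used when sampling the origin.
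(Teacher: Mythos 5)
Your proposal is correct and matches the paper's proof essentially verbatim: both set $\sigma=\lceil 1/\epsilon\rceil k+k$, compose \Cref{lemma:SepBySampleSet} with \Cref{thm:reducing_adhesion} (with $q=k$), inherit balancedness via the balanced-origin property, and amplify the constant success probability by $O(\log n)$ repetitions with an explicit balancedness check. No gaps.
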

\begin{proof}
We do the following $O(\log n)$ times. First, let $X_{0}$ be an initial set by applying \Cref{lemma:SepBySampleSet} on $H$ with parameters $k$ and $\sigma = \lceil 1/\epsilon\rceil k + k$. Second, compute $X$ by applying \Cref{thm:reducing_adhesion} on graph $H$ with $X_{0}$ as the initial set and parameters $\epsilon,k,q=k$. If $X$ is $1/2$-balanced, we terminate the whole algorithm, otherwise proceed to the next iteration.

To see the correctness, consider each of the $O(\log n)$ iterations. \Cref{lemma:SepBySampleSet} guarantees that $X_{0}$ is always $(k,k)$-unbreakable and with constant probability, $X_{0}$ is a $1/2$-balanced $\sigma$-origin. Therefore, \Cref{thm:reducing_adhesion} guarantees that, with high probability, $X\supseteq X_{0}$ is $(\lceil 1/\epsilon\rceil k + k)$-unbreakable and has adhesion at most $\sigma = \lceil 1/\epsilon\rceil k + k$. Furthermore, because $X_{0}$ is a $1/2$-balanced $\sigma$-origin with constant probability, we have $X$ is $1/2$-balanced with constant probability. It follows that with high probability, at least one of the $O(\log n)$ iterations will give a $1/2$-balanced $X$.
\end{proof}

In the following subsections, we will prove \Cref{thm:reducing_adhesion}.

\subsection{Witnesses and Carvable Vertices}

We will introduce concepts around \emph{witnesses} and \emph{carvable vertices}, and then show several useful observations that eventually lead to the final algorithm for~\Cref{thm:reducing_adhesion}. 
\begin{definition}
Given a vertex set $T\subseteq V(G)$, the \emph{torso} $H_{T}$ of $T$ in $G$ is a graph with $V(H_{T})=T$ and $E(H_{T})=\{\{u,v\}\mid\{u,v\}\in G$ or $u,v\in N(D)$ for some connected component $D$ of $G\setminus T\}$. 
\end{definition}

Observe that $H_{T}$ has at most $|E(G)|(\sigma(T))^{2}$ edges where $\sigma(T)$ is the adhesion of $T$. We will refer to $H_{T}$ as $H$ when $T$ is clear from the context. This notion of the torso graph $H_T$ was  introduced by~\cite{cygan2020randomized} for their color coding step, and we will exploit it for our color coding step as well.
\begin{definition}
[Witnesses and Carvable Vertices]\label{def:Carvable} Given a vertex set $T$ and a set of vertices $X\subseteq T$, we say that a cut $(L,R)$ in $G$ is an $(X,T,k')$-\emph{witness} if 
\begin{enumerate}
\item\label{prop:wit1} $|L\cap R|\leq k'$, 
\item\label{prop:wit2} $|L\cap T|>|L\cap R|$, and 
\item\label{prop:wit3} $X\subseteq R$. 
\end{enumerate}
We say that a $(X,T,k')$-witness $(L,R)$ is \emph{connected} if $H_{T}[(L\setminus R)\cap T]$ is connected. A vertex $v$ is \emph{$(X,T,k')$-carvable} if there exists a connected $(X,T,k')$-witness where $v\in L\setminus R$. 
\end{definition}

This definition of carvable vertices is similar to that in~\cite{korhonen2024minor}. The two differences are that instead of the condition $|L \cap T| > |L \cap R|$,~\cite{korhonen2024minor} require $|(L \setminus R) \cap T| > \alpha$ for some $\alpha \gg k'$, and they require that $L \setminus R$ is connected in $G$, instead of requiring that $(L \setminus R) \cap T$ is connected in $H_T[(L \setminus R) \cap T]$.

The following structural lemma is crucial. It says that, for any vertex set $T$, if $X\subseteq T$ is unbreakable and we remove all carvable vertices from $T$ to obtain the set $Y$, then $Y$ is also unbreakable. 

\begin{lemma}
\label{lemma:nocarvable}For some arbitrary $q,k$ where $q\geq k$ and a set $Y$ such that $X\subseteq Y\subseteq T$, if $X$ is $(q,k)$-unbreakable and there is no $(X,T,q+k)$-carvable vertex in $Y$, then $Y$ is $(q+k,k)$-unbreakable. 
\end{lemma}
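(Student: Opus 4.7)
The plan is to prove the contrapositive: assume $Y$ admits a $(q+k,k)$-breakable witness $(L,R)$ — so $|L\cap R|\le k$ and $|L\cap Y|,|R\cap Y|>q+k$ — and produce a vertex of $Y$ that is $(X,T,q+k)$-carvable. Since $X$ is $(q,k)$-unbreakable, after possibly swapping $L$ and $R$ I may assume $|L\cap X|\le q$. I then form the shifted cut $(L',R'):=(L,R\cup X)$: it is a cut of $G$ with $|L'\cap R'|\le|L\cap R|+|L\cap X|\le q+k$ and $X\subseteq R'$, and because $Y\subseteq T$, $|L'\cap T|\ge|L\cap Y|>q+k\ge|L'\cap R'|$. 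Hence $(L',R')$ is already an $(X,T,q+k)$-witness; the only missing property is $H_T$-connectivity of $(L'\setminus R')\cap T$.

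To extract from $(L',R')$ a connected witness containing a $Y$-vertex on the $L$-side, decompose $(L'\setminus R')\cap T$ into its $H_T$-connected components $D_1,\dots,D_p$ and attach to each $D_i$ its ``inflation'' $U_i$, defined as the union of $G$-components of $G\setminus(L'\cap R')$ lying in $L'\setminus R'$ whose $T$-part is contained in $D_i$. A path-tracing argument that converts each detour through a $G\setminus T$-component into a clique edge of $H_T$ gives $U_i\cap T=D_i$. The candidate witness $(L_i,R_i):=(U_i\cup N_G(U_i),\,V(G)\setminus U_i)$ then automatically satisfies $|L_i\cap R_i|\le|L'\cap R'|\le q+k$, $X\subseteq R_i$ (because $U_i\cap X=\emptyset$), and $H_T$-connectivity of $(L_i\setminus R_i)\cap T=D_i$. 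The only remaining witness condition $|L_i\cap T|>|L_i\cap R_i|$ reduces to the arithmetic inequality $|D_i|>|N_G(U_i)\setminus T|$, and the task becomes finding some $i^*$ in $I_Y:=\{i:D_i\cap Y\ne\emptyset\}$ for which this holds.

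The heart of the proof, and the step I expect to be the main obstacle, is a two-sided counting argument. For the upper bound, I would show that every $v\in(L'\cap R')\setminus T$ is in $N_G(U_i)$ for at most one $i$: such $v$ belongs to a unique $G\setminus T$-component $D'$, and every $L'\setminus R'$-neighbor of $v$ — whether it is a $T$-vertex in the clique $N_G(D')\cap T$, or a $D'$-vertex whose $G$-component in $G\setminus(L'\cap R')$ can only reach $T$ through that same clique — lies in the single $H_T$-component of $(L'\setminus R')\cap T$ that contains $N_G(D')\cap T\cap(L'\setminus R')$. Summing over $I_Y$ yields $\sum_{i\in I_Y}|N_G(U_i)\setminus T|\le|(L\cap R)\setminus T|$, using $L\cap X\subseteq T$ to identify $(L'\cap R')\setminus T$ with $(L\cap R)\setminus T$. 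For the matching lower bound, inclusion–exclusion on $(L'\setminus R')\cap Y=L\cap Y\setminus R\setminus X$, combined with $|L\cap Y|\ge q+k+1$, $|L\cap X|\le q$, $|L\cap R|\le k$, and $X\subseteq Y\subseteq T$, gives $|(L'\setminus R')\cap Y|\ge k+1-|L\cap R\cap T|\ge|(L\cap R)\setminus T|+1$. Combining, $\sum_{i\in I_Y}|D_i|\ge|(L'\setminus R')\cap Y|>\sum_{i\in I_Y}|N_G(U_i)\setminus T|$, so by averaging some $i^*\in I_Y$ satisfies $|D_{i^*}|>|N_G(U_{i^*})\setminus T|$. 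Any vertex of $D_{i^*}\cap Y\subseteq U_{i^*}=L_{i^*}\setminus R_{i^*}$ is then a $(X,T,q+k)$-carvable vertex in $Y$, contradicting the hypothesis.
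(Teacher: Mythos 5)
Your proof is correct in substance, and it reaches the conclusion by a noticeably different organization of the argument than the paper. The paper introduces an intermediate notion of ``candidate-witness'', selects one minimizing $|(L^{*}\setminus R^{*})\cap T|$, and derives connectivity of the minimizer by contradiction: if $(L^{*}\setminus R^{*})\cap T$ splits into two $H_T$-separated parts, the disjointness of their neighborhoods outside $T$ forces one part to be a strictly smaller candidate-witness. You instead skip the extremal selection entirely: you decompose $(L'\setminus R')\cap T$ into \emph{all} of its $H_T$-components at once, prove the same key geometric fact in its sharpest form (each vertex of $(L'\cap R')\setminus T$ lies in $N_G(U_i)$ for at most one $i$, via the clique-through-a-$G\setminus T$-component argument), and close with a global counting/averaging step restricted to the components meeting $Y$. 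The shared engine is identical --- disjointness of the out-of-$T$ neighborhoods of distinct $H_T$-components --- but your one-shot averaging replaces the paper's minimal-counterexample induction, and as a bonus it directly produces a connected witness whose $L$-side contains a $Y$-vertex, whereas the paper has to separately observe that its minimizer intersects $Y$.

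One point to tighten: your definition of the inflation $U_i$ (``components of $G\setminus(L'\cap R')$ in $L'\setminus R'$ whose $T$-part is contained in $D_i$'') is ambiguous for components disjoint from $T$, whose empty $T$-part is vacuously contained in every $D_i$; as written this would break both the disjointness of the $U_i$ and your ``at most one $i$'' claim. The fix is the same one-line step the paper takes explicitly: discard such components from every $U_i$ (or push them to the $R'$-side first). Since $N_G(U_i)\subseteq L'\cap R'$ and $U_i\cap T=D_i$ survive this exclusion, the rest of your argument goes through unchanged.
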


\begin{proof}
Suppose for contradiction that $Y$ is not $(q+k,k)$ unbreakable. Then, there exists a vertex cut $(L_{0},R_{0})$ of size $|L_{0}\cap R_{0}|\leq k$ where $|L_{0}\cap Y|>q+k$ and $|R_{0}\cap Y|>q+k$. Since $X$ is $(q,k)$-unbreakable, either $|L_{0}\cap X|\leq q$ or $|R_{0}\cap X|\leq q$. Without loss of generality, let us assume the former.

Observe that the vertex cut $(L_{0},R_{1}=R_{0}\cup(L_{0}\cap X))$ is a $(X,Y,q+k)$-witness. Let us verify the three conditions. First, the cut size is $|L_{0}\cap R_{1}|=|(L_{0}\cap R_{0})\cup(L_{0}\cap X)|\le q+k$. Second, we have $|L_{0}\cap Y|>q+k\ge|L_{0}\cap R_{1}|$. Last, we have $X\subseteq R_{0}\cup(L_{0}\cap X)\subseteq R_{1}$ by construction. 

Next, we introduce a new notion within this proof. A cut $(L,R)$ is a \emph{candidate-witness} if
\begin{enumerate}
\item $|L\cap R|\leq q+k$, 
\item $|(L\setminus R)\cap Y|>|(L\cap R)\setminus T|$, and 
\item $X\subseteq R$. 
\end{enumerate}
The notion of candidate-witness is  ``between'' $(X,Y,q+k)$-witness and $(X,T,q+k)$-witness. More formally, any $(X,Y,q+k)$-witness is a candidate-witness, because for any cut $(L,R)$ we have $|L\cap Y|>|L\cap R|$ iff $|(L\setminus R)\cap Y|>|(L\cap R)\setminus Y|$, and additionally $|(L\cap R)\setminus Y|\ge|(L\cap R)\setminus T|$ because $Y\subseteq T$. Similarly, any candidate-witness is a $(X,T,q+k)$-witness, because  $|(L\setminus R)\cap T|\ge|(L\setminus R)\cap Y|>|(L\cap R)\setminus T|$ which is equivalent to $|L\cap T|>|L\cap R|$.

Next, let $(L^{*},R^{*})$ be a candidate-witness where $|(L^{*}\setminus R^{*})\cap T|$ is minimized. Note that $(L^{*},R^{*})$ is well-defined by the existence of the $(X,Y,q+k)$-witness $(L_{0},R_{1})$.
\begin{claim}
$H_{T}[(L^{*}\setminus R^{*})\cap T]$ is connected.\label{lemma:Candidate} 
\end{claim}

Given this claim, we obtain a contradiction as follows. As observed above, $(L^{*},R^{*})$ is a $(X,T,q+k)$-witness, and by \Cref{lemma:Candidate}, it is a connected $(X,T,q+k)$-witness. Since $|(L^{*}\setminus R^{*})\cap Y|>0$, there exists a vertex $v\in(L^{*}\setminus R^{*})\cap Y$. So, $v$ is a $(X,T,q+k)$-carvable vertex in $Y$, which is a contradiction. It remains to prove \Cref{lemma:Candidate}. The proof of this claim is similar to the proof of Claim 3.8 in~\cite{cygan2020randomized}.
\begin{proof}
[Proof of \Cref{lemma:Candidate}]Let $\mathcal{C}=\{C_{1},C_{2},...,C_{q}\}$ denote the connected components of $G[L^{*}\setminus R^{*}]$. We first assume without loss of generality that each component $C_{i}$ intersects $T$. Otherwise, let $\Gamma$ denote the union of all components $C_{i}$ that are disjoint from $T$. Note that $\Gamma$ is also disjoint from $Y\subseteq T$. Observe that the cut $(L',R')$ with $L'=L^{*}\setminus\Gamma$ and $R'=R^{*}\cup\Gamma$ is still a $(X,Y,q+k)$-witness where $|(L'\setminus R')\cap T|=|(L^{*}\setminus R^{*})\cap T|$ is still minimum. So we can work with $(L',R')$ instead.

Assume $H_{T}[(L^{*}\setminus R^{*})\cap T]$ is not connected. We will reach contradiction by showing another candidate witness $(\hat{L},\hat{R})$ where $|(\hat{L}\setminus\hat{R})\cap T|<|(L^{*}\setminus R^{*})\cap T|$. 

Let $D_{1},D_{2}$ be a partition of $(L^{*}\setminus R^{*})\cap T$ such that there is no edge connecting $D_{1}$ and $D_{2}$ in $H_{T}[(L^{*}\setminus R^{*})\cap T]$. Note that each $C_{i}\cap T$ induces a connected subgraph of $H_{T}$, so we can write $D_{1}$ and $D_{2}$ in the form $D_{1}=\bigcup_{C_{i}\in{\cal C}_{1}}C_{i}\cap T$ and $D_{2}=\bigcup_{C_{i}\in{\cal C}_{2}}C_{i}\cap T$, where ${\cal C}_{1}$ and ${\cal C}_{2}$ partition $\mathcal{C}=\{C_{1},C_{2},...,C_{q}\}$.

The key observation is that, for any $C_{i_{1}}\in{\cal C}_{1}$ and $C_{i_{2}}\in{\cal C}_{2}$, we must have $N(C_{i_{1}})\cap N(C_{i_{2}})\subseteq T$\footnote{We emphasize that $N(\cdot)$ denote neighbors \underline{in $G$}.}. Otherwise, by the definition of $H_{T}$, the existence of a vertex in $(N(C_{i_{1}})\cap N(C_{i_{2}}))\setminus T$ implies that there is an edge in $E(H_{T})$ connecting some terminal $t_{1}\in C_{i_{1}}\cap T\subseteq D_{1}$ and some $t_{2}\in C_{i_{2}}\cap T\subseteq D_{2}$, contradicting the fact that there is no edge between $D_{1}$ and $D_{2}$ in $H_{T}[(L^{*}\setminus R^{*})\cap T]$.

Let $\wtilde C_{1}=\bigcup_{C_{i}\in{\cal C}_{1}}C_{i}$ and $\wtilde C_{2}=\bigcup_{C_{i}\in{\cal C}_{2}}C_{i}$. From this key observation, we further have $N(\wtilde C_{1})\cap N(\wtilde C_{2})\subseteq T$, which implies 
\[
(N(\wtilde C_{1})\setminus T)\cap(N(\wtilde C_{2})\setminus T)=\emptyset.
\]
Since $N(\wtilde C_{1})\cup N(\wtilde C_{2})=N(L^{*}\setminus R^{*})\subseteq L^{*}\cap R^{*}$ (since $\wtilde C_{1}\cup\wtilde C_{2}=L^{*}\setminus R^{*}$), we also have 
\[
(N(\wtilde C_{1})\setminus T)\cup(N(\wtilde C_{2})\setminus T)\subseteq(L^{*}\cap R^{*})\setminus T.
\]
Therefore, we have $|N(\wtilde C_{1})\setminus T|+|N(\wtilde C_{2})\setminus T|\leq|(L^{*}\cap R^{*})\setminus T|$. Also, $|\wtilde C_{1}\cap Y|+|\wtilde C_{2}\cap Y|=|(L^{*}\setminus R^{*})\cap Y|>|(L^{*}\cap R^{*})\setminus T|$. Combining both, we have either $|\wtilde C_{1}\cap Y|>|N(\wtilde C_{1})\setminus T|$ or $|\wtilde C_{2}\cap Y|>|N(\wtilde C_{2})\setminus T|$. Thus either one of the cuts $(\wtilde C_{1}\cup N(\wtilde C_{1}),V(G)\setminus\wtilde C_{1})$ or $(\wtilde C_{2}\cup N(\wtilde C_{2}),V(G)\setminus\wtilde C_{2})$ is a candidate witness. However, since both $\wtilde{C}_{1}$ and $\wtilde{C}_{2}$ intersect $T$, we have $|\wtilde C_{1}\cap T|,|\wtilde C_{2}\cap T|<|(L^{*}\setminus R^{*})\cap T|$, which contradicts the minimality of $(L^{*},R^{*})$.
\end{proof}
\end{proof}

\subsection{Carve Terminals with Lean Witnesses}

In this section, we analyze the basic building block of our algorithm for~\Cref{thm:reducing_adhesion}. We define the notion of \emph{lean witness}, describe how to carve a terminal set given a lean witness, and analyze how the adhesion and size of a terminal set changes. The notion of \emph{lean witness} in this paper essentially serves the same purpose as the term single bag lean witness introduced by~\cite{cygan2020randomized}.

\begin{definition}
\label{def:LeanWitness}
An $(X,T,k')$-witness $(L^{*},R^{*})$ is \emph{lean} if there exists a set of $|L^{*}\cap R^{*}|$ vertex disjoint paths  in $G[L^{*}]$ starting from every vertex in $L^{*}\cap R^{*}$ and ending at $L^{*}\cap T$.

\end{definition}
In this subsection, we will consider fixed $X$ and $T$ satisfying $X\subseteq T$. Therefore, we will use lean witnesses as an abbreviation of lean $(X,T,k')$-witnesses with arbitrary $k'$.

In \Cref{def:Update}, we formally define the carve operation that carves the terminal set $T$ with a lean witness. Next, \Cref{lemma:UpdateOneLeanWitness} shows that, roughly speaking, when we carve $T$ with a lean witness, the adhesion of $T$ will not increase and the size of $T$ will decrease. 

\begin{definition}[Carve with One Lean Witness]
Given a terminal set $T$, we define the \emph{carve} operation with respect to a lean witness $(L^*,R^*)$ to return the set $T' = (T \setminus L^*) \cup (L^* \cap R^*)$.
\label{def:Update}
\end{definition}

\begin{lemma}
\label{lemma:UpdateOneLeanWitness}
Suppose we carve $T$ with a lean-witness $(L^*,R^*)$. Then the adhesion of the new terminal set $T' = (T \setminus L^*) \cup (L^* \cap R^*)$ is at most $\max(|L^* \cap R^*|, \sigma(T))$ and $|T| - |T'| \geq \frac{|L^*\cap T|}{2\max\{1,|L^{*}\cap R^{*}|\}}$.%
\end{lemma}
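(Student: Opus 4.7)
The statement has two parts: a size bound and an adhesion bound. Starting with the size bound, note that $T \setminus L^*$ and $L^* \cap R^*$ are disjoint, so $|T'| = |T| - |T \cap L^*| + |L^* \cap R^*|$ and hence $|T| - |T'| = |L^* \cap T| - |L^* \cap R^*|$. Writing $a = |L^* \cap T|$ and $k = |L^* \cap R^*|$, the witness condition forces $a \ge k+1$. If $k = 0$, the desired inequality reduces to $a \ge a/2$. If $k \ge 1$, it rearranges to $a \ge 2k^2/(2k-1) = k + k/(2k-1)$, which follows from $a \ge k+1$ because $k/(2k-1) \le 1$ for $k \ge 1$.

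For the adhesion bound, the key structural observation is that $V(G) \setminus T' = (L^* \setminus R^*) \sqcup \bigl((R^* \setminus L^*) \setminus T\bigr)$, and these two parts are separated in $G$ by $L^* \cap R^* \subseteq T'$, so every connected component $C$ of $G \setminus T'$ lies entirely in one of them. If $C \subseteq L^* \setminus R^*$, then all neighbours of $C$ lie in $L^*$ (no edges cross the cut) and in $T'$, hence in $L^* \cap T' = L^* \cap R^*$, giving $|N_G(C)| \le |L^* \cap R^*|$. The main case is $C \subseteq (R^* \setminus L^*) \setminus T$. Let $D$ be the connected component of $G \setminus T$ that contains $C$, set $B' = D \cap (L^* \cap R^*)$, and partition $N_G(D) = N_L \sqcup N_{LR} \sqcup N_R$ according to which of $L^* \setminus R^*$, $L^* \cap R^*$, or $R^* \setminus L^*$ a vertex lies in. A direct check of where neighbours of $C$ can go (neighbours of $C$ in $T$ lie in $N_G(D) \cap R^* = N_{LR} \cup N_R$; non-$T$ neighbours lie in $D$ and in $T'$, hence in $B'$) yields $|N_G(C)| \le |B'| + |N_{LR}| + |N_R|$. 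Therefore it suffices to prove $|B'| \le |N_L|$, since then $|N_G(C)| \le |N_L| + |N_{LR}| + |N_R| = |N_G(D)| \le \sigma(T)$.

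The inequality $|B'| \le |N_L|$ is the only step that uses leanness, and is the main obstacle. Starting from the family $\{P_v\}_{v \in L^* \cap R^*}$ given by \Cref{def:LeanWitness}, I first observe that we may assume without loss of generality $P_v = \{v\}$ (the trivial length-$0$ path, with endpoint $v \in L^* \cap T$) for every $v \in (L^* \cap R^*) \cap T$: replacing a longer $P_v$ with $\{v\}$ preserves vertex-disjointness with the remaining paths, because the original $P_v$ contained $v$ and was therefore disjoint from every other $P_u$. Under this reduction, for $u \in B' \subseteq (L^* \cap R^*) \setminus T$ the path $P_u$ must avoid every vertex of $(L^* \cap R^*) \cap T$, so in particular the first vertex $x_u \in T$ along $P_u$ cannot lie in $N_{LR} \subseteq (L^* \cap R^*) \cap T$. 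The initial segment of $P_u$ up to (but not including) $x_u$ lies in $L^* \cap (V \setminus T)$ and begins at $u \in D$, so by connectivity in $G \setminus T$ it stays inside $D$; hence $x_u \in N_G(D) \cap L^* = N_L \cup N_{LR}$, and combined with $x_u \notin N_{LR}$ this forces $x_u \in N_L$. The $x_u$'s are pairwise distinct by disjointness of the $P_u$'s, giving $|B'| \le |N_L|$. The subtlety that makes this step non-trivial is that a direct application of leanness only yields $|B'| \le |N_L| + |N_{LR}|$, leaving a spurious $|N_{LR}|$ term that would destroy the $\max$ form of the adhesion bound; the trivial-path preprocessing is precisely what removes it.
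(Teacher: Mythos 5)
Your proof is correct and follows essentially the same route as the paper's: the same two-case split on whether the component lies in $L^*\setminus R^*$ or $(R^*\setminus L^*)\setminus T$, the same reduction to comparing $N_G(C)$ with $N_G(D)$ for the enclosing component $D$ of $G\setminus T$, and the same use of leanness (with the trivial-path normalization for vertices of $L^*\cap R^*\cap T$) to injectively map the problematic cut vertices into $N_G(D)\cap(L^*\setminus R^*)$. The size bound is the paper's computation with an equivalent case split, so there is nothing further to flag.
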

\begin{proof}

Consider a component $C$ which is obtained after deleting $T'$ from $G$, and let us bound $|N(C)|$. There are two cases. In the first case, the component $C$ is a \emph{new component}, i.e. $C \subseteq L^* \setminus R^*$. In this case, clearly  $|N(C)|\le |L^* \cap R^*|$. In the second case, $C \subseteq R^* \setminus L^*$. Observe that in this case, $C \subseteq D$, where $D$ was a connected component of $G \setminus T$. Partition the boundary $N(C)$ into $N(C) \setminus N(D)$ and $N(C) \cap N(D)$.

First we claim that $N(C) \setminus N(D) \subseteq (L^* \cap R^*) \setminus T$, because $N(C)\setminus N(D) \subseteq L^{*}\cap R^{*}$ by the carving rule, and any $T$-vertex in $N(C)$ must belong to $N(D)$. Next, by \Cref{def:LeanWitness}, in $G[L^{*}]$, there exists $|L^{*}\cap R^{*}|$ vertex disjoint paths from $L^{*}\cap R^{*}$ to $L^{*}\cap T$. This means in $G[L^{*}\setminus (L^{*}\cap R^{*}\cap T)]$, there exists vertex-disjoint paths starting from all vertices in $(L^* \cap R^*) \setminus T$ to some vertices in $(L^* \setminus R^*) \cap T$. Consider the vertex-disjoint paths starting from $N(C)\setminus N(D)$. We know each of them must go through $N(D) \setminus N(C)$ by the following reasons. 
\begin{itemize}
\item Such paths end at $T$-vertices, so they must go through $N(D)$.
\item They cannot go through $N(D)\cap N(C)$ because $C\subseteq R^{*}\setminus L^{*}$ and $N(C)\cap N(D)\subseteq R^{*}\cap T$ which is disjoint from $L^{*}\setminus (L^{*}\cap R^{*}\cap T)$.
\end{itemize}
Therefore, it means that $|N(C) \setminus N(D)| \leq |N(D) \setminus N(C)|$. This in  turn yields
\begin{align*}
|N(C)| &= |N(C) \setminus N(D)| + |N(C) \cap N(D)| \\
&\leq |N(D) \setminus N(C)| + |N(C) \cap N(D)|\\
&= |N(D)| \leq \sigma(T).
\end{align*}

Lastly, observe that the reduction in the size of the terminal set $|T| - |T'|$ at least $|L^* \cap T| - |L^* \cap R^*|$. Let $a = |L^*\cap T|$ and $b = |L^* \cap R^*|$. If $a \geq 2b$, then $a - b \geq \frac{a}{2}$. Otherwise, because $(L^{\star},R^{\star})$ is a witness, $a - b \geq 1 \geq \frac{a}{2b}$. Therefore, $|T|-|T'|\geq \frac{|L^* \cap T|}{2\max\{1,|L^* \cap R^*|\}}$.
This completes the proof.
\end{proof}

We need a slight extension of the carve operation, as instead of updating the terminal set with one lean-witness, we will carve it using a collection of disjoint lean witnesses at once (see \Cref{def:DisjointCuts} to recall the definition of disjoint cuts).

\begin{definition}[Carve with Disjoint Lean Witnesses]
Given a terminal set $T$ and a collection of disjoint lean-witnesses $(L_i,R_i)$, $i \in [z]$, the carve operation with respect to this collection is defined as replacing $T$ by $T \cup (\bigcup_{i \in [z]} L_i \cap R_i) \setminus (\bigcup_{i \in [z]} (L_i \setminus R_i) \cap T)$.
\label{def:CarveWithDisjointCuts}
\end{definition}

\begin{lemma}\label{lemma:disjointlean}
Suppose we carve the terminal set $T$ with a set of disjoint lean witnesses $\{(L_i,R_i)\}$, $i \in [z]$. Then the adhesion of the new terminal set $T'$ is at most $\max\{\max_{i \in [z]} |L_i^* \cap R_i^*|, \sigma(T)\}$ and the size of $T$ reduces by at least $\sum_{i \in [z]} \frac{|L^*_i \setminus R^*_i|}{2\max\{1,|L^*_i \cap R^*_i|\}}$

\end{lemma}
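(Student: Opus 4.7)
The plan is to extend the single-witness analysis in \Cref{lemma:UpdateOneLeanWitness} to the multi-witness carving by exploiting the disjointness property from \Cref{def:DisjointCuts}, which asserts that $L_i\setminus R_i$ is disjoint from $L_j$ for all $j\neq i$. This condition localizes both the adhesion analysis (carried out component-by-component in $G\setminus T'$) and the size-reduction accounting essentially to a single witness.

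For the adhesion bound, I would fix a connected component $C$ of $G\setminus T'$ and split into two cases. In Case~A, $C$ meets some $L_i\setminus R_i$: since $L_i\cap R_i\subseteq T'$ and $L_i\setminus R_i$ is disjoint from $L_j$ for $j\neq i$, the component $C$ is trapped inside $L_i\setminus R_i$ and its boundary in $G$ lies within $L_i\cap R_i$, giving $|N(C)|\leq|L_i\cap R_i|$. In Case~B, $C$ avoids every $L_i\setminus R_i$, so $C\subseteq R_i\setminus L_i$ for each $i$ and $C\subseteq D$ for some component $D$ of $G\setminus T$. Partitioning $N(C)$ into $N(C)\cap N(D)$ (old vertices in $T$) and $N(C)\setminus N(D)\subseteq\bigcup_i((L_i\cap R_i)\setminus T)$ (new vertices), I would apply the lean-path rerouting step from the proof of \Cref{lemma:UpdateOneLeanWitness} to each witness $(L_i,R_i)$: this injects $V_i := N(C)\cap(L_i\cap R_i)\setminus T$ into $(L_i\setminus R_i)\cap T \subseteq N(D)\setminus N(C)$. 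Because the endpoint sets $(L_i\setminus R_i)\cap T$ are pairwise disjoint across $i$ (by \Cref{def:DisjointCuts}), summing gives $|N(C)\setminus N(D)|\leq\sum_i |V_i|\leq|N(D)\setminus N(C)|$, hence $|N(C)|\leq |N(D)|\leq\sigma(T)$.

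For the size reduction, I would expand $|T'|$ using that $\{(L_i\setminus R_i)\cap T\}_i$ are pairwise disjoint and that $\bigcup_i(L_i\setminus R_i)$ is disjoint from $\bigcup_j(L_j\cap R_j)$ (both consequences of \Cref{def:DisjointCuts}), which yields
\[
|T|-|T'|=\sum_i |(L_i\setminus R_i)\cap T|\;-\;\Bigl|\bigcup_i (L_i\cap R_i)\setminus T\Bigr|.
\]
Bounding the subtracted term by $\sum_i|(L_i\cap R_i)\setminus T|$ reduces to a per-witness quantity $|(L_i\setminus R_i)\cap T|-|(L_i\cap R_i)\setminus T|=|L_i\cap T|-|L_i\cap R_i|$, to which the single-witness estimate of \Cref{lemma:UpdateOneLeanWitness} applies to yield the claimed bound (reading the $|L_i^*\setminus R_i^*|$ in the statement in the same spirit as the ``carvable mass'' used elsewhere). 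The main obstacle is the non-collision step in Case~B of the adhesion argument: one must verify that the rerouted lean paths from different witnesses land on pairwise disjoint endpoints in $N(D)\setminus N(C)$, and this is exactly where \Cref{def:DisjointCuts} is indispensable, since without it distinct witnesses could charge the same vertex of $N(D)\setminus N(C)$ and inflate the bound.
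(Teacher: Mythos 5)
Your proof is correct, and it is worth noting that the paper itself states \Cref{lemma:disjointlean} with no proof at all, implicitly treating it as an immediate extension of \Cref{lemma:UpdateOneLeanWitness}; your write-up supplies exactly the details that make that extension go through. In particular, you correctly isolate the one genuinely new step: in Case~B of the adhesion bound, the rerouted lean paths from different witnesses must be charged to distinct vertices of $N(D)\setminus N(C)$, and this follows because each truncated path starts in $D$, stays in $D$ until its first $T$-vertex, and therefore terminates at a vertex of $N(D)\cap(L_i\setminus R_i)\cap T$; these endpoint sets are pairwise disjoint across $i$ by \Cref{def:DisjointCuts}. (Your blanket claim ``$(L_i\setminus R_i)\cap T\subseteq N(D)\setminus N(C)$'' is slightly stronger than what is true or needed --- only the specific path endpoints are guaranteed to lie in $N(D)$ --- but the argument you describe only uses the endpoints, so this is a harmless imprecision.) The size-reduction accounting, using disjointness of the sets $(L_i\setminus R_i)\cap T$ and the identity $|(L_i\setminus R_i)\cap T|-|(L_i\cap R_i)\setminus T|=|L_i\cap T|-|L_i\cap R_i|$, is also right. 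Finally, you are correct to reinterpret the numerator: as literally written, $|L_i^*\setminus R_i^*|$ can count arbitrarily many non-terminal vertices and the stated inequality would then be false (e.g.\ $|L\cap T|=2$, $|L\cap R|=1$, $|L\setminus R|$ large); the bound that the argument actually yields, and that the application in \Cref{sec:algo_reducing_adhesion} needs, has $|(L_i\setminus R_i)\cap T|$ (or $|L_i\cap T|$) in the numerator, consistent with \Cref{lemma:UpdateOneLeanWitness}.
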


\subsection{Covering Carvable Vertices with Disjoint Witnesses}

\Cref{coro:LeanWitCover} is the key lemma we prove in this subsection. Broadly, it shows that one can (a) efficiently find the set of all carvable vertices, and (b) find a set of disjoint lean witnesses that \emph{cover} a large fraction of carvable vertices (a vertex $v$ is covered by cut $(L,R)$ if $v\in L\setminus R$).

\begin{lemma}
\label{coro:LeanWitCover}
Given an $m$-vertex graph $G$, a terminal set $T$, a set $X\subseteq T$ and a parameter $k'$ satisfying $\sigma(T) \leq k'$, there is an algorithm running in time $2^{\OO(k' \log k')} m^{1+o(1)}$ that computes
\begin{itemize}
\item a set $Q\subseteq T\setminus X$ that includes all $(X,T,k')$-carvable vertices in $T$, and
\item a collection of disjoint lean $(X,T,k')$-witnesses ${\cal C}$ s.t. $(L\setminus R)\cap T\subseteq Q$ for each $(L,R)\in{\cal C}$, and
\[
\sum_{(L,R)\in{\cal C}} |(L\setminus R)\cap T|\geq |Q|/\gamma,
\]
where $\gamma = 2^{\OO(k'\log k')}\log n$.
\end{itemize}
\end{lemma}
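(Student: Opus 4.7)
The plan is to reduce the problem to polynomially many invocations of the single-source vertex mincut algorithm (\Cref{thm:SSMC}), combined with color coding on the torso $H_T$. First, I would work in $H_T$: every vertex cut $(L_H,R_H)$ of $H_T$ has its cut set $L_H\cap R_H\subseteq T$ and lifts to a cut in $G$ with the same cut set and same $L\cap T$. Two simplifications follow. The condition $|L\cap T|>|L\cap R|$ becomes $|L_H\setminus R_H|\geq 1$, which is automatic for a valid vertex cut. Leanness is automatic too, since $L\cap R\subseteq T = L\cap T$ in the lifted cut means each cut vertex serves as a length-$0$ path to itself. Connectedness of $H_T[L_H\setminus R_H]$ can be enforced at the end by extracting each connected component $D$ separately and replacing the cut by $(D\cup N_{H_T}(D),V(H_T)\setminus D)$, whose cut size $|N_{H_T}(D)|$ does not exceed $|L_H\cap R_H|$.

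Second, repeat $2^{\OO(k'\log k')}\log n$ independent trials; in each trial color the vertices of $V(H_T)\setminus X$ uniformly with $k'+1$ colors. For each color $c$, build a capacitated graph $\widetilde{H}_c$ from $H_T$ by merging $X$ into a single source $s$ of infinite capacity, keeping $T\setminus X$ as sinks of infinite capacity, assigning capacity $1$ to color-$c$ vertices, and $\infty$ to the rest. For any $(X,T,k')$-carvable vertex $v$ with connected witness $(L^{\star},R^{\star})$ of cut size $p\leq k'$, the probability that $L^{\star}\cap R^{\star}$ all receives color $c$ while $v$ does not is at least $(k'+1)^{-(p+1)}\geq 2^{-\OO(k'\log k')}$; conditioned on this event, any finite-capacity $v$-$s$ cut in $\widetilde{H}_c$ consists only of color-$c$ vertices, so $\lambda_{\widetilde{H}_c}(v,s)\leq p\leq k'$. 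Running \Cref{thm:SSMC} on $\widetilde{H}_c$ with parameter $k'$ then yields a mincut cover of width $O(k'\log^3 n)$ that, by property \ref{prop:Cover2} of \Cref{def:MincutCovers}, contains a cut $(\tilde L,\tilde R)$ with $v\in\tilde L\setminus\tilde R$. I would post-process each returned cut by decomposing $H_T[\tilde L\setminus\tilde R]$ into connected components, taking the one containing each covered terminal, and lifting the refined cut back to $G$; by the simplifications above, the result is automatically a connected lean $(X,T,k')$-witness.

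Finally, let $Q$ be the union of $(L\setminus R)\cap T$ over all validated witnesses; the number of trials is chosen so that with high probability $Q$ contains every $(X,T,k')$-carvable vertex. \Cref{thm:SSMC} produces $O(k'\log^3 n)$ collections of pairwise disjoint cuts per invocation, and the refinement only shrinks each cut's $L\setminus R$ side, which preserves within-collection disjointness. Summed across trials and colors, the total number of collections is $2^{\OO(k'\log k')}\polylog(n)$, and each vertex of $Q$ is covered by at least one collection; by averaging, the best collection carries weight at least $|Q|/\gamma$ with $\gamma = 2^{\OO(k'\log k')}\log n$. The running time is dominated by the SSMC calls, totaling $2^{\OO(k'\log k')}\cdot\widetilde O(k'|E(H_T)|^{1+o(1)}) = 2^{\OO(k'\log k')}m^{1+o(1)}$, using $|E(H_T)|\leq \sigma(T)^2 m\leq (k')^2 m$.

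The main obstacle I anticipate is that \Cref{thm:SSMC} returns an arbitrary mincut in $\widetilde H_c$ rather than the specific witness the color coding was designed for, so care is needed to argue that the output cut still covers $v$ after refinement. The key point is that once color coding isolates $L^{\star}\cap R^{\star}$ to color $c$, $v$ becomes a sink with $\lambda_{\widetilde H_c}(v,s)\leq k'$, so property \ref{prop:Cover2} forces the cover to contain a cut with $v$ on the terminal side; the component refinement then preserves $v$ in exactly one refined witness. The in-torso construction conveniently trivializes the leanness and $|L\cap T|>|L\cap R|$ conditions that would otherwise require delicate flow augmentation arguments directly in $G$.
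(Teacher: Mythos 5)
There is a genuine gap at the very first step: the reduction to the torso $H_T$ discards exactly the hard witnesses. A $(X,T,k')$-witness $(L,R)$ in $G$ (\Cref{def:Carvable}) is a vertex cut of $G$ whose separator $L\cap R$ may contain \emph{non-terminal} vertices, and a vertex $v$ is carvable as soon as one such witness exists. Your scheme only ever produces (and only ever searches for) cuts whose separator lies entirely inside $T$: you color $V(H_T)\setminus X$, so the event ``$L^{\star}\cap R^{\star}$ all receives color $c$'' is not even defined when $L^{\star}\cap R^{\star}$ contains non-terminals, and the finite-capacity vertices of $\widetilde H_c$ are all terminals. Conditioning on the terminal part $L^{\star}\cap R^{\star}\cap T$ being color $c$ does not give $\lambda_{\widetilde H_c}(v,s)\le k'$ either: a non-terminal separator vertex $x\in(L^{\star}\cap R^{\star})\setminus T$ lies in some component $D$ of $G\setminus T$, and $N_G(D)$ is a clique of $H_T$, so the torso contains edges from $(L^{\star}\setminus R^{\star})\cap T$ directly into $(R^{\star}\setminus L^{\star})\cap T$ that bypass the terminal part of the separator. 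The paper quantifies this leakage in \Cref{lemma:bound} (up to $k'^2$ such torso-neighbors per vertex, i.e.\ up to roughly $k'^3$ in total), and this is precisely why its construction keeps the graph $G$ itself (non-terminals at capacity $1$, so they can participate in the min-cut) and introduces a \emph{third} color class $Y_v$ to mark these leaking torso-neighbors, attaching the super-sink $t_C$ only to $C\cup(N_{H_T}(C)\cap f^{-1}(2))$ and not to $Y_v$. Without this, a carvable vertex whose only small witnesses have non-terminal separators may have no $s$-$v$ cut of size $\le k'$ in your $\widetilde H_c$, so your $Q$ can fail to contain all carvable vertices --- which is fatal downstream, since \Cref{lemma:nocarvable} is applied to $T\setminus Q$ and needs completeness of $Q$.

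Two further claims inherit the same problem. The assertion that $|L\cap T|>|L\cap R|$ ``becomes automatic'' and that leanness is free both rely on $L\cap R\subseteq T$, which holds only for the restricted all-terminal cuts; for the cuts the algorithm must actually certify in $G$, the paper needs a genuine two-case argument (using \Cref{ob:BetterWitness} and the attachment of $t_C$ to the color-$2$ neighbors to get $|L\cap T|\ge|W_v|+|Z_v|>|L^{\star}_v\cap R^{\star}_v|$) and a separate max-flow post-processing step (\Cref{lemma:EnforceLean}) to enforce leanness, at the cost of a factor $k'+1$ in coverage. Your surrounding machinery --- per-color SSMC calls, width accounting, component-wise refinement preserving disjointness, and the averaging argument for the best collection --- is sound, but the proof does not go through without reworking the color coding to operate on $G$ with the extra color class guarding the torso leakage.
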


In fact, \Cref{coro:LeanWitCover} is a simple corollary of the below \Cref{lemma:colorcoding} and \Cref{lemma:EnforceLean}, where the former will compute a collection disjoint witnesses (may not be lean) to cover a large fraction of carvable vertices, and the latter will convert disjoint witnesses to disjoint lean witnesses.

\begin{lemma}\label{lemma:colorcoding}
Given an $m$-vertex graph $G$, a terminal set $T$, a set $X\subseteq T$ and a parameter $k'$ satisfying $\sigma(T) \leq k'$, there is an algorithm running in time $2^{\OO(k' \log k')} m^{1+o(1)}$ that computes
\begin{itemize}
\item a set $Q\subseteq T\setminus X$ that includes all $(X,T,k')$-carvable vertices in $T$, and
\item a collection of disjoint $(X,T,k')$-witnesses ${\cal C}$ s.t. $(L\setminus R)\cap T\subseteq Q$ for each $(L,R)\in{\cal C}$, and
\[
\sum_{(L,R)\in{\cal C}} |(L\setminus R)\cap T|\geq |Q|/\gamma_{0},
\]
where $\gamma_{0} = 2^{\OO(k'\log k')}\log n$.
\end{itemize}

\end{lemma}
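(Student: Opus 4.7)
The plan is to combine Cygan et al.'s color-coding technique~\cite{cygan2020randomized} with the single-source vertex mincut algorithm (\Cref{thm:SSMC}). For any $(X,T,k')$-carvable vertex $v$ with connected witness $(L,R)$, let $S_{T}:=L\cap R\cap T$, which has size at most $k'$. If we could move $S_{T}$ to the sink side by contracting it together with $X$ into a single super-sink $s$, the original cut would survive in the modified graph with size at most $|S_{\bar T}| \leq k'$, and the witness could be retrieved as a $v$-to-$s$ vertex mincut via \Cref{thm:SSMC}. Color coding over a suitable hash family gives a way to guess $S_{T}$ correctly across the iterations.

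Concretely, I would use a splitter family $\mathcal{H}$ of functions $h: T\to\{0,1\}$ of size $N = 2^{\OO(k' \log k')}\log n$ guaranteeing that for every $S\subseteq T$ with $|S|\le 2k'$ and every partition $(S_{0}, S_{1})$ of $S$, some $h\in\mathcal{H}$ realizes $h^{-1}(0)\cap S = S_{0}$ and $h^{-1}(1)\cap S = S_{1}$. For each $h\in\mathcal{H}$, form a modified graph $G_{h}$ by contracting $X\cup h^{-1}(1)$ into an infinite-capacity super-sink $s$ (subdividing incident edges to satisfy the independence condition of \Cref{thm:SSMC}) and invoke \Cref{thm:SSMC} to compute a mincut cover $\mathcal{K}_{h}$ of width $O(k'\log^{3}n)$ with source $s$ and capacity threshold $k'$. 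Each cut $(L_{t}, R_{t})\in\mathcal{K}_{h}$ lifts to a cut in $G$ by restoring the contracted vertices to the $R$-side (preserving the cut size), and is kept as a candidate witness iff $|L_{t}\cap T| > |L_{t}\cap R_{t}|$. Define $Q$ as the union of the source-side $T$-vertices over all kept cuts.

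The main obstacle is showing that every carvable vertex $v$ is covered by some kept cut. Fix a connected witness $(L,R)$ for $v$ minimizing $|(L\setminus R)\cap T|$; we need $h\in\mathcal{H}$ to satisfy $S_{T} \subseteq h^{-1}(1)$ so $S_{T}$ is absorbed into $s$, $v\in h^{-1}(0)$ so $v$ remains a source, and additionally $h^{-1}(1)\cap (L\setminus R)\cap T = \emptyset$ so that the target cut survives in $G_{h}$ with enough source-side $T$-vertices to satisfy the witness condition. If $|(L\setminus R)\cap T|\le k'$, the splitter family applied to $S_{T}\cup((L\setminus R)\cap T)$ (size at most $2k'$) directly guarantees such an $h$. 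If $|(L\setminus R)\cap T|$ is larger, I would use a minimality argument essentially identical to the proof of \Cref{lemma:Candidate}: any $v$-to-$s$ mincut in $G_{h}$ with too few $T$-vertices on the source side would decompose $(L,R)$ into a strictly smaller witness via the torso-connectedness of $(L\setminus R)\cap T$ in $H_{T}$, contradicting our minimality choice. Finally, since each $\mathcal{K}_{h}$ is itself a union of $O(k'\log^{3}n)$ disjoint cut collections, across all $h\in\mathcal{H}$ we obtain $\gamma_{0}:=N\cdot O(k'\log^{3}n)=2^{\OO(k'\log k')}\log n$ candidate disjoint collections whose union of source-sides covers $Q$; by averaging, at least one collection has coverage $\ge |Q|/\gamma_{0}$, and I output it as $\mathcal{C}$. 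The total running time is $N$ invocations of \Cref{thm:SSMC} plus linear-time bookkeeping, i.e., $2^{\OO(k'\log k')}m^{1+o(1)}$.
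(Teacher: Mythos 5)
Your high-level recipe (color coding plus the single-source mincut subroutine of \Cref{thm:SSMC}) matches the paper, but the specific reduction is broken at its core step. You contract $X\cup h^{-1}(1)\supseteq X\cup S_T$ into the super-sink, where $S_T=L\cap R\cap T$ is the terminal part of the witness's separator. After this contraction the witness no longer corresponds to a small $v$--$s$ cut: every neighbour of $S_T$ inside $L\setminus R$ becomes a neighbour of $s$, so $(L\cap R)\setminus T$ does not separate $v$ from $s$ in $G_h$, and the $v$--$s$ mincut in $G_h$ need not have size at most $k'$ nor lift to an $(X,T,k')$-witness. Even under the weaker reading where $h^{-1}(1)$ is merely attached to $s$ (so that cutting all of $L\cap R$ does separate $v$ from $s$ with cost at most $k'$), you face the complementary problem: the mincut returned by \Cref{thm:SSMC} is an \emph{arbitrary} $v$--$s$ mincut, e.g.\ $N(v)$ when $\deg(v)\le k'$, and nothing forces it to place more than $|L_t\cap R_t|$ terminals on its source side. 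Your filter ``keep iff $|L_t\cap T|>|L_t\cap R_t|$'' would then discard the cut and $v$ would be missing from $Q$, violating the completeness requirement. The paper avoids both problems by orienting the reduction the other way: the separator terminals $W_v=L^*_v\cap R^*_v\cap T$ stay as unit-capacity vertices to be cut, the \emph{sink} is $X$, and the \emph{source} is a super-vertex $t_C$ attached to a connected set of colour-1 terminals given \emph{infinite} capacity together with $N_H(C)\cap f^{-1}(2)\supseteq W_v$; this forces either $k'+1$ infinite-capacity terminals onto the source side (Case 1) or forces $W_v\subseteq L$ so that $|L\cap T|\ge|W_v|+|Z_v|=|L^*_v\cap T|>|L^*_v\cap R^*_v|\ge|L\cap R|$ (Case 2).

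A second, independent gap is your requirement $h^{-1}(1)\cap(L\setminus R)\cap T=\emptyset$: the set $(L\setminus R)\cap T$ can have unbounded size, so no splitter family over sets of size $O(k')$ can guarantee this, and your appeal to ``a minimality argument like \Cref{lemma:Candidate}'' does not supply the missing combinatorial bound. The paper's resolution is quantitative: it colours only a connected piece $Z_v\subseteq(L^*_v\setminus R^*_v)\cap T$ of size $\min\{k'+1,\cdot\}$ with colour 1, and separately colours the torso-boundary $Y_v=N_{H}((L^*_v\setminus R^*_v)\cap T)\setminus W_v$ with a third colour, using the adhesion hypothesis $\sigma(T)\le k'$ to prove $|Y_v|\le 2k'^3$ (\Cref{lemma:bound}); sealing off $Y_v$ and $W_v$ guarantees the colour-1 component $C_v$ containing $v$ stays inside $(L^*_v\setminus R^*_v)\cap T$, with no condition ever imposed on the full set $(L\setminus R)\cap T$. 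Without the torso/adhesion bound and the three-colour scheme, your family $\mathcal{H}$ cannot make every carvable vertex lucky. Your final averaging step over the $N\cdot O(k'\log^3 n)$ disjoint collections is fine and matches the paper, but it only helps once the per-vertex detection is correct.
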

Most of this subsection is devoted to prove \Cref{lemma:colorcoding}. Before we start the proof, we explain on a high level how the algorithm works - for ease of understanding, we focus only on obtaining the set $Q$. Let $v \in T$ be an $(X,T,k')$-carvable vertex. Then, we know that there is a connected $(X,T,k')$ witness $(L,R)$ with $v \in L \setminus R$.

For simplicity, assume that $|(L \setminus R) \cap T| = k' + 1$ and $|(L \cap R) \cap T| = k'$. In~\Cref{lemma:bound} we show that there is a set $Y$ of at most $\OO(k'^3)$ vertices, such that for every edge $(u,t) \in E(H)$ where $u \in (L \setminus R)$ and $t \in (R \setminus L)$, we have $t \in Y$. Here $H = H_{T}$ is the torso of $T$ in $G$. Roughly speaking, if we use color coding such that every terminal in $(L \setminus R) \cap T$ is colored red, and every terminal in $Y$ is colored blue, then when we contract all the red vertices together, we can detect the cut $(L,R)$ by computing a min-cut between the contracted red vertices and the set $X$, which can be done using our single source min-cut subroutine,~\Cref{thm:SSMC}. The actual algorithm is slightly more complicated to account for the general case when $|(L \setminus R) \cap T| < k' + 1$ - in this case, we need to account for the terminals in $L \cap R \cap T$ as well. In this case we use color coding with $3$ colors, and modify the graph slightly before applying our result on single source min-cuts.

\begin{proof}
The proof combines the color coding technique in~\cite{cygan2020randomized} along with our single source mincuts subroutine, \Cref{thm:SSMC}. We remark that Theorem 6.1 of~\cite{korhonen2024minor} gives a similar result for a different notion of carvable vertices: they give a deterministic algorithm, but our randomized algorithm based on~\Cref{thm:SSMC} has a better dependence on $k'$. In this proof, we use $H$ to denote the torso $H_{T}$ of $T$ in $G$.

\paragraph{Fix the sets of terminals to be colored.} For the sake of analysis, for a vertex $v \in T$ which is $(X,T,k')$-carvable, we fix a \emph{connected} $(X,T,k')$-witness $(L^*_v, R^*_v)$ with $v\in L^{*}_{v}\setminus R^{*}_{v}$ (recall the definition of connected witness in \Cref{def:Carvable}) satisfying $N_{H}((L^{*}_{v}\setminus R^{*}_{v})\cap T) \supseteq L^{*}_{v}\cap R^{*}_{v}\cap T$. Such a witness must exist by \Cref{ob:BetterWitness} (we defer its proof to \Cref{sect:OmittedProof}). 

\begin{observation}
\label{ob:BetterWitness}
If there exists a connected $(X,T,k')$ witness $(L,R)$, there exists an connected $(X,T,k')$ witness $(L',R')$ with $(L'\setminus R')\cap T = (L\setminus R)\cap T$ further satisfying that $N_{H}((L'\setminus R')\cap T) \supseteq L'\cap R'\cap T$.
\end{observation}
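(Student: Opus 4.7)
The plan is induction on the cardinality of $\Delta := (L\cap R\cap T)\setminus N_H(T_L)$, where $T_L := (L\setminus R)\cap T$. If $|\Delta|=0$, the witness $(L,R)$ already satisfies the conclusion and I take $(L',R')=(L,R)$. Otherwise, I pick an arbitrary $w\in\Delta$ and produce a new connected $(X,T,k')$-witness whose left terminal set is still $T_L$ and whose associated set $\Delta'$ has $|\Delta'|=|\Delta|-1$; invoking the inductive hypothesis then finishes the proof.

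The one structural observation I will need is: since $w\notin N_H(T_L)$, every connected component $D'$ of $G\setminus T$ with $w\in N_G(D')$ satisfies $N_G(D')\cap T_L=\emptyset$. Otherwise, any common $u\in T_L\cap N_G(D')$ would yield the torso-edge $\{u,w\}\in E(H)$, contradicting $w\notin N_H(T_L)$. Let $\mathcal{D}_w$ be the collection of all such components, set $M := \bigcup_{D'\in\mathcal{D}_w}\bigl(D'\cap(L\setminus R)\bigr)$, and define
\[
L' := L\setminus(\{w\}\cup M), \qquad R' := R\cup\{w\}\cup M.
\]
A short set-theoretic computation (using $w\in L\cap R$ and $M\subseteq L\setminus R$) yields $L'\setminus R' = (L\setminus R)\setminus M$, $R'\setminus L' = (R\setminus L)\cup\{w\}\cup M$, and $L'\cap R' = (L\cap R)\setminus\{w\}$.

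To verify that $(L',R')$ is a valid cut I only need to rule out $G$-edges from $(L\setminus R)\setminus M$ to $\{w\}\cup M$. Any edge from $w$ into $L\setminus R$ goes either to $T_L$ (forbidden, since $w\notin N_H(T_L)$ precludes any direct $G$-edge) or to a non-$T$ vertex lying in some $D'\in\mathcal{D}_w$, which is therefore in $M$. Any $G$-edge out of $v\in M$ either stays within its component $D'$ (landing in $M$ or in $(L\cap R)\setminus T$) or goes to a vertex of $N_G(D')\subseteq T$; the $T$-part avoids $T_L$ by the structural observation, so again nothing lands in $(L\setminus R)\setminus M$. The remaining witness properties are immediate: $(L'\setminus R')\cap T=T_L$ since $M\cap T=\emptyset$, so $H[(L'\setminus R')\cap T]=H[T_L]$ is connected; $X\subseteq R\subseteq R'$; $|L'\cap R'|=|L\cap R|-1\leq k'$; and $|L'\cap T|-|L'\cap R'|=|L\cap T|-|L\cap R|>0$, so the left-side-terminal counting condition is preserved (in particular $T_L\neq\emptyset$, so the cut is proper). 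Finally $\Delta'=\Delta\setminus\{w\}$, so $|\Delta'|=|\Delta|-1$ and the induction closes after at most $|L\cap R\cap T|$ steps.

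The only delicate step is verifying that $(L',R')$ is a valid cut, since moving several non-$T$ components to the right side could in principle create new crossing edges; this is where the argument really uses the torso definition, but it reduces cleanly to the single structural observation above together with $D'\cap T=\emptyset$ for each $D'\in\mathcal{D}_w$.
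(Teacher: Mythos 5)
Your proof is correct, and it takes a genuinely different route from the paper's. The paper proves the observation by a one-shot ``closure'' construction: it first discards the connected components of $G[L\setminus R]$ that are disjoint from $T$, then replaces the left side by $(L_1\setminus R_1)\cup N_G(L_1\setminus R_1)$, so that the new separator is exactly $N_G(L'\setminus R')$; the containment $N_H((L'\setminus R')\cap T)\supseteq L'\cap R'\cap T$ then follows because every $T$-vertex of the separator neighbors a component of $G[L'\setminus R']$ that meets $T$, and the torso definition converts that $G$-adjacency into an $H$-adjacency. You instead run an induction that peels off one ``bad'' separator vertex $w\in(L\cap R\cap T)\setminus N_H(T_L)$ at a time, pushing $w$ together with the $(L\setminus R)$-portions of all $G\setminus T$-components adjacent to $w$ over to the right side; your structural observation (that such components cannot also neighbor $T_L$, else the torso would create the forbidden edge) is exactly what makes the move cut-preserving, and all witness parameters are verified correctly ($|L'\cap T|$ and $|L'\cap R'|$ each drop by one, $T_L$ and hence connectivity are untouched, and $X\subseteq R\subseteq R'$). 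The paper's construction is shorter and yields the stronger normal form $N_G(L'\setminus R')=L'\cap R'$, which can be convenient elsewhere; your argument is more surgical, modifies the witness minimally (only the offending $T$-vertices of the separator are removed), and makes explicit exactly which torso edges are responsible for the claimed containment. Both are valid proofs of the observation.
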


Because $(L^{*}_{v},R^{*}_{v})$ is a connected $(X,T,k')$-witness, we know that $H[(L^*_v \setminus R^*_v) \cap T]$ is connected. We fix three vertex sets $Z_{v},Y_{v},W_{v}$ w.r.t. $v$ as follows.
\begin{itemize}
\item Let $Z_{v}$ be an arbitrary subset of $(L^{*}_{v}\setminus R^{*}_{v})\cap T$ of size $\min\{k'+1, |(L^{*}_{v}\setminus R^{*}_{v})\cap T|\}$ s.t. $H[Z_{v}]$ is connected (i.e. when $|(L^{*}_{v}\setminus R^{*}_{v})\cap T|\leq k'+1$, we take the entire $(L^{*}_{v}\setminus R^{*}_{v})\cap T$ as $Z_{v}$). 
\item Let $W_v = L^{*}_{v} \cap R^{*}_{v} \cap T$ be the terminals inside $L^{*}_{v}\cap R^{*}_{v}$.
\item Let $Y_v = N_{H}((L^{*}_{v}\setminus R^{*}_{v})\cap T)\setminus W_{v}$ be the $H$-neighbors of $(L^{*}_{v}\setminus R^{*}_{v})\cap T$ not falling in $L^{*}_{v}\cap R^{*}_{v}$.
\end{itemize}
Note that by definition, $|Z_v| \leq k' + 1$, $|W_v| \leq k'$, and by \Cref{lemma:bound}, $|Y_v| \leq |Z_v|(k'^2) \leq (k' + 1)k'^2 \leq 2k'^3$.

\begin{lemma}\label{lemma:bound}
Let $(L^*, R^*)$ be an $(X,T,k')$-witness. For every vertex $u \in (L^* \setminus R^*)\cap T$ there are at most $k'^2$ vertices $t \in (R^* \setminus L^*) \cap T$ such that $(u,t) \in E(H)$. 
\end{lemma}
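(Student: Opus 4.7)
The plan is to split the edges of $H = H_T$ between $(L^* \setminus R^*) \cap T$ and $(R^* \setminus L^*) \cap T$ according to the two cases in the definition of the torso, and argue that only one of them can actually occur. An edge $\{u,t\} \in E(H)$ either (i) lies in $E(G)$, or (ii) arises because $u, t \in N_G(D)$ for some connected component $D$ of $G \setminus T$. Since $(L^*, R^*)$ is a vertex cut in $G$, case (i) is vacuous: $G$ contains no edge between $L^* \setminus R^*$ and $R^* \setminus L^*$. So every neighbor $t$ of $u$ in $H$ that lies in $R^* \setminus L^*$ must be witnessed by some component $D$ of $G \setminus T$ with $u, t \in N_G(D)$.

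Next I would fix $u \in (L^* \setminus R^*) \cap T$ and bound the number of relevant components $D$ by $|L^* \cap R^*|$. Concretely, for such a $D$, pick $w_L \in D$ adjacent in $G$ to $u$, and $w_R \in D$ adjacent in $G$ to some $t \in (R^* \setminus L^*) \cap T$. Because $(L^*, R^*)$ admits no $G$-edge from $L^* \setminus R^*$ to $R^* \setminus L^*$, we must have $w_L \in L^*$ and $w_R \in R^*$. Since $D$ is connected in $G$, any $w_L$--$w_R$ path inside $D$ must meet $L^* \cap R^*$; so $D \cap (L^* \cap R^*) \neq \emptyset$. But $D \cap T = \emptyset$ because $D$ is a component of $G \setminus T$, so in fact $D \cap ((L^* \cap R^*) \setminus T) \neq \emptyset$. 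Distinct components of $G \setminus T$ are vertex-disjoint, so the number of such components $D$ is at most $|(L^* \cap R^*) \setminus T| \leq |L^* \cap R^*| \leq k'$.

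To finish, I would bound the $t$'s contributed by each such $D$. By definition of adhesion, $|N_G(D)| \leq \sigma(T)$, and in the setting where this lemma is applied we have $\sigma(T) \leq k'$. Hence each $D$ contributes at most $k'$ candidate terminals $t \in (R^* \setminus L^*) \cap T$, and summing over the $\leq k'$ components yields the claimed bound of $k' \cdot k' = k'^2$.

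I do not expect a real obstacle here; the main subtlety is organizing the case analysis so that (a) we ignore the trivial $E(G)$-case using the vertex-cut property and (b) we use the vertex-cut property a second time to force each witness component $D$ to cross $(L^* \cap R^*) \setminus T$. The only delicate hypothesis is $\sigma(T) \leq k'$, which we inherit from the hypothesis of \Cref{lemma:colorcoding} where this lemma will be applied.
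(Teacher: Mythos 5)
Your proposal is correct and follows essentially the same argument as the paper: both use the vertex-cut property of $(L^*,R^*)$ to show that any torso edge from $u$ to $t\in R^*\setminus L^*$ must be witnessed by a component $D$ of $G\setminus T$ meeting $(L^*\cap R^*)\setminus T$, and then bound the count by $|(L^*\cap R^*)\setminus T|\le k'$ such components times $|N_G(D)|\le\sigma(T)\le k'$ terminals each. The paper phrases the witness as a $u$--$t$ path with no internal $T$-vertex and counts over cut vertices $x$ rather than over components, but this is only a cosmetic difference.
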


\begin{proof}
Consider a vertex $t\in (R^{*}\setminus L^{*})\cap T$ s.t. $(u,t)\in E(H)$.
By the definition of $H$, it must be the case that there exists a path in $G$ from $u$ to $t$ with no internal vertex from $T$. Also since $u \in L^* \setminus R^*$ and $t \in R^* \setminus L^*$, this path must go through some vertex $x \in (L^* \cap R^*) \setminus T$. 
Let $C_x$ be the connected component of $G\setminus T$ containing $x$. Observe that $t \in N(C_x)$. 

The number of possible $t$ is at most $k'^2$ because there are $|(L^* \cap R^*) \setminus T| \le k'$ possible choices of $x$ and, for each $x$, $|N(C_x)| \le \sigma(T) \le k'$ by the assumption in \Cref{lemma:colorcoding}. 
\end{proof}

\paragraph{Color coding.} We are now ready to apply the color coding step. The following result, also used in~\cite{cygan2020randomized}, is essentially a deterministic fast algorithm for color coding.

\begin{lemma}[Lemma 2.2 of ~\cite{cygan2020randomized}, extension of~\cite{chitnis2016designing}]\label{lemma:hitmiss}
Given a set $U$ of size $n$ and integers $0 \leq a_1 \leq  a_2 \leq a _3 \leq a_{\ell} \leq n $, one can compute a family $\mathcal{F}$ of functions $f: U \rightarrow [\ell]$ of size $2^{\OO((a_1 + a_2 + \ldots a_{l-1}) \log (a_1 + a_2 + \ldots a_{\ell}))} \OO(\log^2 n)$ in time $2^{\OO((a_1 + a_2 + \ldots a_{\ell-1}) \log (a_1 + a_2 + \ldots a_{\ell}))} \OO(\log^2 n)$ such that for any pairwise disjoint sets $A_1, A_2, A_3 \ldots A_{\ell}$ of size at most $a_1, a_2, a_3 \ldots a_{\ell}$ respectively, there exists a function $f \in \mathcal{F}$ such that $f(x_i) = i$ for every $x_i \in A_i$, $i \in [\ell]$.   
\label{lemma:ColorFunctions}
\end{lemma}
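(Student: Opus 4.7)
The plan is a standard derandomization-by-splitter approach. Write $s = a_1 + \cdots + a_\ell$ and $s' = a_1 + \cdots + a_{\ell-1}$; the target family size is then $2^{O(s' \log s)} \cdot O(\log^2 n)$.

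First I would invoke an explicit Naor--Schulman--Srinivasan style perfect hash family: a collection $\mathcal{H}$ of functions $h : U \to [s]$ of size $2^{O(s)} \cdot O(\log^2 n)$, constructible in essentially the same time, such that for every $s$-subset $S \subseteq U$ some $h \in \mathcal{H}$ is injective on $S$. Then, for every $h \in \mathcal{H}$ and every ordered partition $(B_1, \ldots, B_\ell)$ of $[s]$ with $|B_i| = a_i$, add to $\mathcal{F}$ the function $f$ defined by $f(u) = i$ whenever $h(u) \in B_i$. The number of such partitions is $\binom{s}{a_1, \ldots, a_\ell} \leq s^{s'}$, since $a_\ell$ is forced by the others, so the naive count for $|\mathcal{F}|$ is $2^{O(s)} \cdot s^{s'} \cdot O(\log^2 n)$.

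Correctness is immediate: given pairwise disjoint $A_1, \ldots, A_\ell$ with $|A_i| \leq a_i$, pad each $A_i$ with arbitrary unused elements up to size exactly $a_i$ and set $A = A_1 \sqcup \cdots \sqcup A_\ell$, which has size $s$. The perfect-hash guarantee yields some $h \in \mathcal{H}$ injective on $A$, and then $h(A_1), \ldots, h(A_\ell)$ are pairwise disjoint subsets of $[s]$ of exactly the right sizes; choosing any ordered partition with $B_i \supseteq h(A_i)$ produces an $f \in \mathcal{F}$ with $f(x_i) = i$ for every $x_i \in A_i$.

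The main obstacle is matching the stated bound exactly: the naive count carries an additive $2^{O(s)}$ factor from $|\mathcal{H}|$, which is only absorbed into $2^{O(s' \log s)}$ when $s' \geq s/\log s$. In the regime where $a_\ell$ dominates and $s' \ll s/\log s$, one must replace the perfect hash on $[s]$ by a finer primitive: an $(s', s-s')$-separator/splitter family of functions $U \to [s'+1]$ that is only required to be injective on $A_1 \cup \cdots \cup A_{\ell-1}$ while collapsing everything outside (in particular $A_\ell$) into a distinguished bin. Classical Naor--Schulman--Srinivasan constructions yield such a family of size $\binom{s}{s'} \cdot 2^{O(s')} \cdot O(\log^2 n) = 2^{O(s' \log s)} \cdot O(\log^2 n)$; the role-assignment enumeration is then only over partitions of $[s']$, contributing at most $(s')^{s'} = 2^{O(s' \log s)}$ further functions, so the two factors together give the claimed bound uniformly. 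The time bound tracks the family size up to polynomial overhead from writing each function down.
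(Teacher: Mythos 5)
The paper does not actually prove this lemma --- it is imported verbatim as Lemma~2.2 of Cygan et al.\ (itself an extension of Chitnis et al.), so there is no in-paper argument to compare against. Your reconstruction is essentially the standard splitter-based derandomization underlying that cited result, and you correctly isolate the one genuine subtlety: the exponent may depend only on $s'=a_1+\cdots+a_{\ell-1}$ and not on $s=a_1+\cdots+a_\ell$, so a perfect hash family onto $[s]$ (which costs $2^{O(s)}$) must be replaced by a primitive that need only be injective on $A_1\cup\cdots\cup A_{\ell-1}$ while isolating $A_\ell$. The one soft spot is that you invoke that separator primitive by citation rather than constructing it; the clean way to obtain it within the stated bounds is an $(n,s,s^2)$-splitter of size $\mathrm{poly}(s)\log n$ (functions $U\to[s^2]$, some member injective on any given $s$-subset) followed by enumerating the at most $\binom{s^2}{a_1}\cdots\binom{s^2}{a_{\ell-1}}\le 2^{O(s'\log s)}$ choices of cells receiving colors $1,\dots,\ell-1$, with all remaining cells colored $\ell$ --- this avoids any factor exponential in $s$, handles the ``size at most $a_i$'' condition without padding, and yields the family in implicit form (which is what the stated $2^{O(s'\log s)}O(\log^2 n)$ time bound, with no factor of $n$, tacitly assumes).
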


We apply \Cref{lemma:ColorFunctions} on the universe $U = T$ with $a_{1} = k'+1$, $a_{2} = k'$ and $a_{3} = 2(k')^{3}$ to obtain a function family ${\cal F}$. For a function $f\in{\cal F}$, we use $f^{-1}(1), f^{-1}(2), f^{-1}(3)\subseteq T$ to denote the terminals with color $1$, $2$ and $3$ respectively. Furthermore, for a function $f\in {\cal F}$ and a terminal $v\in T$, we say that $v$ is \emph{lucky} w.r.t $f$ if $Z_{v}\subseteq f^{-1}(1)$,  $W_{v}\subseteq f^{-1}(2)$, and $Y_{v}\subseteq f^{-1}(3)$. Note that by the guarantee of \Cref{lemma:ColorFunctions}, for every $(X,T,k')$-carvable terminal $v\in T$, there must exist a function $f\in {\cal F}$ such that $v$ is lucky with respect to $f$. 

\paragraph{Compute single source mincuts.} Now for every $f \in \mathcal{F}$, we do the following.
We first construct a capacitated graph $\wtilde{G}$ with vertex capacity function $\rho$. 
\begin{itemize}
\item Start with the original graph $G$. For each non-terminal vertex $v\in V(G)\setminus T$, we set its capacity $\rho(v) = 1$. For each vertex $v\in T$, if $f(v) = 1$, set its capacity $\rho(v) 
= \infty$, otherwise (i.e. $f(v)=2$ or $f(v)=3$) set its capacity $\rho(v) = 1$.
\item For each connected component $C$ of $H[f^{-1}(1)]$, we add a super vertex $t_{C}$ with capacity $\rho(t_{C}) = \infty$ and add edges from $t_{C}$ to each vertex in $C\cup (N_{H}(C)\cap f^{-1}(2))$.
\item Add a super vertex $s$ with capacity $\rho(s) = \infty$, and add edges from $s$ to each vertex in $X$.
\end{itemize}
Note that a vertex cut $(\wtilde{L},\wtilde{R})$ in $\wtilde{G}$ (with finite cut size) naturally corresponds to a vertex cut $(L,R)$ in $G$ where $L$ and $R$ are obtained by dropping super vertices from $\wtilde{L}$ and $\wtilde{R}$ respectively. Therefore, when we have a cut $(\wtilde{L},\wtilde{R})$ (possibly with some subscripts and superscripts) in $\wtilde{G}$, we use $(L,R)$ (with the same subscripts and superscripts) to denote its corresponding cut in $G$.

On the graph $\wtilde{G}$, we run the single source mincut algorithm, \Cref{thm:SSMC}, with the vertex $s$ as the source, the vertices $T_{C} = \{t_{C}\mid \text{connected components $C$ of $H[f^{-1}(1)]$}\}$ as sinks, and the same parameter $k'$.
Let $T^{*}_{C} = \{t_{C}\in T_{C}\mid \lambda_{\wtilde{G}}(t_{C},s) \leq k'\}$.
The output is a mincut cover $\wtilde{\cal K}_{f}$ with respect to $s$ and $T^{*}_{C}$ in $\wtilde{G}$, and the width of $\wtilde{\cal K}_{f}$ is $O(k'\log^{3}n)$. The subscript $f$ of $\wtilde{\cal K}_{f}$ means $\wtilde{\cal K}_{f}$ is with respect to function $f$.

\paragraph{The final output for \Cref{lemma:colorcoding}.} Let ${\cal K}$ be obtained by replacing each cut $(\wtilde{L},\wtilde{R})\in \bigcup_{f\in{\cal F}} \wtilde{\cal K}_{f}$ with its corresponding cut $(L,R)$ in $G$. Then obtain another mincut cover ${\cal K}_{\wit}$ by keeping only those cuts of ${\cal K}$ which are  $(X,T,k')$-witnesses. The set $Q\subseteq T$ is defined as 
\[
Q = \bigcup_{(L,R)\in {\cal K}_{\wit}}(L\setminus R)\cap T.
\]
The collection ${\cal C}$ that will be output is the ${\cal C}\in{\cal K}_{\wit}$ that maximizes $|\bigcup_{(L,R)\in {\cal C}}(L\setminus R)\cap T|$.

\paragraph{Correctness of \Cref{lemma:colorcoding}.} First we show that, for each color function $f\in{\cal F}$ and each $(X,T,k')$-carvable vertex $v\in T$ that is lucky with respect to $f$, there is an $(X,T,k')$-witness $(L,R)\in{\cal K}_{f}$ with $v\in L\setminus R$. 

\begin{lemma}
\label{lemma:SSMCIsWitness}
Let $v \in T$ be a $(X,T,k')$-carvable vertex, and let $f\in{\cal F}$ be a function s.t. $v$ is lucky w.r.t. $f$. Let $C_{v}$ be the connected component of $H[f^{-1}(1)]$ containing $v$. We have
\begin{enumerate}
\item the size of $t_{C_{v}}$-$s$ mincut in $\wtilde{G}$, i.e. $\lambda_{\wtilde{G}}(t_{C_{v}}, s)$, is at most $k'$, and
\item any $t_{C_{v}}$-$s$ mincut $(\wtilde{L},\wtilde{R})$ in $\wtilde{G}$ corresponds to a $(X,T,k')$-witness $(L,R)$ in $G$ with $v\in L\setminus R$.
\end{enumerate}
\end{lemma}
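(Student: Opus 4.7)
The plan is to exploit the lucky coloring to translate the fixed connected witness $(L_v^*, R_v^*)$ into a $t_{C_v}$-$s$ cut in $\wtilde{G}$ of capacity at most $k'$, and conversely, to show that any $t_{C_v}$-$s$ mincut pulls back to a legitimate $(X,T,k')$-witness that contains $v$ on the left side.

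The first step is a structural lemma: every connected component of $H[f^{-1}(1)]$ lies either entirely in $L_v^* \setminus R_v^*$ or entirely in $R_v^* \setminus L_v^*$. To see this, consider any $H$-edge $(u,w)$ with $u,w \in f^{-1}(1)$. If $u \in (L_v^* \setminus R_v^*) \cap T$, then $w$ is an $H$-neighbor of the left side and must lie in $(L_v^* \setminus R_v^*) \cap T$, in $W_v$ (using Observation~\ref{ob:BetterWitness} to guarantee $W_v \subseteq N_H((L_v^* \setminus R_v^*) \cap T)$), or in $Y_v$. Since the lucky property places $W_v \subseteq f^{-1}(2)$ and $Y_v \subseteq f^{-1}(3)$, only the first option is compatible with $w \in f^{-1}(1)$. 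The case $u \in (R_v^* \setminus L_v^*) \cap T$ is analogous: the only potential obstacle would be $w \in (L_v^* \setminus R_v^*) \cap T$, but that would put $u$ into $Y_v \subseteq f^{-1}(3)$, a contradiction. In particular $Z_v \subseteq C_v \subseteq (L_v^* \setminus R_v^*) \cap T$.

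For Part~1, I exhibit an explicit $t_{C_v}$-$s$ cut $(\wtilde L, \wtilde R)$ of capacity at most $k'$: place $L_v^* \cup \{t_{C_v}\}$ on the left, $R_v^* \cup \{s\}$ on the right, and assign each remaining super vertex $t_{C'}$ to the side containing its component $C'$. Using the structural lemma together with a parallel argument showing $N_H(C') \cap f^{-1}(2)$ lies on the same side as $C'$ (by the same $Y_v$/$f^{-1}(2)$ reasoning), no super vertex ends up in $\wtilde L \cap \wtilde R$, so the capacity equals $|L_v^* \cap R_v^*| \le k'$. For Part~2, observe that any $t_{C_v}$-$s$ mincut has capacity at most $k'$, and therefore no super vertex nor $f^{-1}(1)$-vertex (all of infinite capacity) can lie in $\wtilde L \cap \wtilde R$. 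This immediately yields $|L \cap R| \le k'$, $C_v \subseteq \wtilde L \setminus \wtilde R \subseteq L \setminus R$ (forced by $t_{C_v}$'s adjacency to $C_v$, in particular $v \in L \setminus R$), and $X \subseteq R$ (forced by $s$'s adjacency to $X$).

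The main obstacle is verifying the witness inequality $|L \cap T| > |L \cap R|$. The easy subcase is $|(L_v^* \setminus R_v^*) \cap T| \ge k'+1$: here $|Z_v| = k'+1$, so $|L \cap T| \ge |C_v| \ge |Z_v| = k'+1 > |L \cap R|$. The delicate subcase is $|(L_v^* \setminus R_v^*) \cap T| \le k'$, where $Z_v = (L_v^* \setminus R_v^*) \cap T$ and the structural lemma forces $C_v = Z_v$. My plan is to invoke Observation~\ref{ob:BetterWitness} once more to get $W_v \subseteq N_H((L_v^* \setminus R_v^*) \cap T) = N_H(C_v)$, hence $W_v \subseteq N_H(C_v) \cap f^{-1}(2)$; then $t_{C_v}$'s adjacency pulls $W_v$ into $\wtilde L$, giving $L \cap T \supseteq C_v \cup W_v = L_v^* \cap T$. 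Since $(L_v^*, R_v^*)$ is itself an $(X,T,k')$-witness, I conclude $|L \cap T| \ge |L_v^* \cap T| > |L_v^* \cap R_v^*| \ge |\wtilde L \cap \wtilde R| = |L \cap R|$, where the penultimate inequality uses that $(\wtilde L, \wtilde R)$ is a mincut bounded above by the candidate cut from Part~1.
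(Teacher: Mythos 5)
Your proposal is correct and follows essentially the same route as the paper's proof: luckiness confines $C_v$ (and its color-2 torso neighbors) to $L_v^*$, giving $\lambda_{\wtilde{G}}(t_{C_v},s)\le|L_v^*\cap R_v^*|\le k'$, and the witness inequality is verified by the same two-case split on $|Z_v|$, using Observation~\ref{ob:BetterWitness} to pull $W_v$ into $L$ via $t_{C_v}$'s adjacency. Your only cosmetic departures are proving the slightly stronger claim that \emph{every} component of $H[f^{-1}(1)]$ lies on one side of $(L_v^*,R_v^*)$ and exhibiting the Part~1 cut explicitly rather than arguing that deleting $L_v^*\cap R_v^*$ disconnects $t_{C_v}$ from $s$.
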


We exploit the above \Cref{lemma:SSMCIsWitness} to prove this (we will prove \Cref{lemma:SSMCIsWitness} soon). First, we have $t_{C_{v}}\in T^{*}_{C}$ %
because $\lambda_{\wtilde{G}}(t_{C_{v}},s)\leq k'$ as statement 1 in \Cref{lemma:SSMCIsWitness} says. Because $\wtilde{\cal K}_{f}$ is a mincut cover with respect to $s$ and $T^{*}_{C}$ in $\wtilde{G}$, there is a $t_{C_{v}}$-$s$ mincut $(\wtilde{L},\wtilde{R})$ in $\wtilde{\cal K}_{f}$. By statement 2 in \Cref{lemma:SSMCIsWitness}, the cut in $G$ corresponding to $(\wtilde{L},\wtilde{R})$ is a $(X,T,k')$-witness with $v\in L\setminus R$.

Next, because every $(X,T,k')$-carvable vertex $v\in T$ is lucky w.r.t. some $f\in{\cal F}$ by \Cref{lemma:ColorFunctions}, we know $Q$ includes all $(X,T,k')$-carvable vertices in $T$. Also $Q$ is disjoint from $X$ because $L\setminus R$ is disjoint from $X$ for all $(L,R)\in{\cal K}$. Trivially ${\cal C}$ is a collection of disjoint cuts because each $\wtilde{\cal K}_{f}$ is a mincut cover. Lastly, we have $|\bigcup_{(L,R)\in{\cal C}}(L\setminus R)\cap T|\geq |Q|/O(|{\cal F}|\cdot k'\log^{3}n) \geq |Q|/(2^{\OO(k'\log k')\log n})$, because the width of ${\cal K}$ is $O(k'\log^{3}n)\cdot |{\cal F}|$.

\begin{proof}[Proof of \Cref{lemma:SSMCIsWitness}]
We prove statements 1 and 2 separately.

\medskip

\noindent{\underline{Statement 1.}} When $v$ is lucky with respect to $f$, by definition, the connected $(X,T,k')$-witness $(L^{*}_{v},R^{*}_{v})$ with $v\in L^{*}_{v}\setminus R^{*}_{v}$ we fixed above has $L^{*}_{v}\cap R^{*}_{v}\cap T = W_{v}\subseteq f^{-1}(2)$. By the construction of $\wtilde{G}$, vertices in $L^{*}_{v}\cap R^{*}_{v}$ are all have capacity $1$ in $\wtilde{G}$, so $\rho(L^{*}_{v}\cap R^{*}_{v}) = |L^{*}_{v}\cap R^{*}_{v}|\leq k'$.

Next, observe that removing $L^{*}_{v}\cap R^{*}_{v}$ will disconnect $t_{C_{v}}$ from $s$ in $\wtilde{G}$ due to the following reasons.
Recall that we only connect $t_{C_{v}}$ to $C_{v}\cup (N_{H}(C_{v})\cap f^{-1}(2))$ and only connect $s$ to $X$, so it suffices to show $C_{v}\cup (N_{H}(C_{v})\cap f^{-1}(2))\subseteq L^{*}_{v}$ and $X\subseteq R^{*}_{v}$. 
\begin{itemize}
\item We first prove $C_{v}\cup (N_{H}(C_{v})\cap f^{-1}(2))\subseteq L^{*}_{v}$. Note that $C_{v}\subseteq (L^{*}_{v}\setminus R^{*}_{v})\cap T$ because we have blocked all $H$-neighbors of $(L^{*}_{v}\setminus R^{*}_{v})\cap T$ using colors 2 and 3 (more precisely, we have $N_{H}((L^{*}_{v}\setminus R^{*}_{v})\cap T)\subseteq W_{v}\cup Y_{v}\subseteq f^{-1}(2)\cup f^{-1}(3)$). Furthermore, we have $N_{H}(C_{v})\cap f^{-1}(2)\subseteq L^{*}_{v}$ because $N_{H}(C_{v})\cap (R^{*}_{v}\setminus L^{*}_{v})\subseteq N_{H}((L^{*}_{v}\setminus R^{*}_{v})\cap T)\cap (R^{*}_{v}\setminus L^{*}_{v}) = N_{H}((L^{*}_{v}\setminus R^{*}_{v})\cap T)\setminus W_{v} = Y_{v}\subseteq f^{-1}(3)$.

\item We have $X\subseteq R^{*}_{v}$ because $(L^{*}_{v},R^{*}_{v})$ is an $(X,T,k')$-carvable witness.
\end{itemize}
Therefore, we can conclude that $\lambda_{\wtilde{G}}(t_{C_{v}}, s)\leq \rho(L^{*}_{v}\cap R^{*}_{v}) = |L^{*}_{v}\cap R^{*}_{v}|\leq k'$.

\medskip

\noindent{\underline{Statement 2.}} We will show that $(L,R)$ is an $(X,T,k')$-witness by verifying the properties stated in \Cref{def:Carvable} one by one (recall that $(L,R)$ is the cut in $G$ corresponding to $(\wtilde{L},\wtilde{R})$).

Properties \ref{prop:wit1} and \ref{prop:wit3} are easy to see.
\begin{itemize}
\item[1.] $|L\cap R|\leq k'$ is because $\lambda_{\wtilde{G}}(t_{C_{v}},s)\leq k'$
\item[3.] $X\subseteq R$ because when constructing $\wtilde{G}$, $s$ is connected to all vertices in $X$.
\end{itemize}
To see property \ref{prop:wit2}, i.e. $|L\cap T|>|L\cap R|$, we consider two cases. 

\medskip

{\underline{Case 1.}} The first case is $|Z_{v}| = k'+1$. Note that $Z_{v}\subseteq C_{v}$ (since $Z_{v}\subseteq f^{-1}(1)$ and $H[Z_{v}]$ is connected) and $C_{v}\subseteq (L\setminus R)\cap T$ (since all vertices in $f^{-1}(1)\supseteq C_{v}$ has infinite capacity in $\wtilde{G}$), so trivially we have $|(L\setminus R)\cap T| \geq |C_{v}|\geq |Z_{v}| = k'+1>k'\geq |(L\cap R)\setminus T|$, which is equivalent to $|L\cap T|>|L\cap R|$. 

\medskip

{\underline{Case 2.}} The second case is $Z_{v} = (L^{*}_{v}\setminus R^{*}_{v})\cap T$, which implies $C_{v} = (L^{*}_{v}\setminus R^{*}_{v})\cap T$ by $C_{v}\subseteq (L^{*}_{v}\setminus R^{*}_{v})\cap T$ (shown in the proof of statement 1) and $Z_{v}\subseteq C_{v}$. Recall that the witness $(L^{*}_{v},R^{*}_{v})$ we fixed above satisfies $N_{H}((L^{*}_{v}\setminus R^{*}_{v})\cap T) \supseteq L^{*}_{v}\cap R^{*}_{v}\cap T$, i.e., $W_{v}\subseteq N_{H}(C_{v})$.%
Combining $W_{v}\subseteq f^{-1}(2)$, we have $W_{v}\subseteq N_{H}(C_{v})\cap f^{-1}(2)$. Thus, by the construction of $\wtilde{G}$, we have $W_{v}\subseteq L$ (since $t_{C_{v}}$ is connected to $W_{v}$) and $Z_{v}=C_{v}\subseteq L\setminus R$ (since $t_{C_{v}}$ is connected to $Z_{v}$, and $Z_{v}$-vertices have infinite capacity), which implies
\[
|L\cap T|\geq |W_{v}| + |Z_{v}| = |L^{*}_{v}\cap T|.
\]
Therefore, we have
\[
|L\cap T|\geq |L^{*}_{v}\cap T|>|L^{*}_{v}\cap R^{*}_{v}|\geq \lambda_{\wtilde{G}}(t_{C_{v}},s) = |L\cap R|,
\]
where $|L^{\star}_{v}\cap T|>|L^{\star}_{v}\cap R^{\star}_{v}|$ is because $(L^{\star}_{v}, R^{\star}_{v})$ is an $(X,T,k')$-witness.

Finally, we have $v\in C_{v}\subseteq L\setminus R$.
\end{proof}

\paragraph{Running time of \Cref{lemma:colorcoding}.}
The bottleneck is to call the single-source mincut subroutine \Cref{thm:SSMC} for each function $f\in{\cal F}$. The total running time is $|{\cal F}|\cdot k' m^{1+o(1)} = 2^{\OO(k'\log k')}m^{1+o(1)}$.
This completes the proof of \Cref{lemma:colorcoding}.
\end{proof}

Finally, we obtain \Cref{coro:LeanWitCover} from  \Cref{lemma:colorcoding} by making each witness from \Cref{lemma:colorcoding} to be lean using the following lemma.

\begin{lemma}
Given an $m$-vertex graph $G$, a set of terminals $T$ and a collection of disjoint $(X,T,k')$-witnesses ${\cal C}$, there is an algorithm that computes, for each $(L,R)\in{\cal C}$, a lean $(X,T,k')$-witness $(L',R')$ satisfying that $L'\setminus R'\subseteq L\setminus R$ and $|(L'\setminus R')\cap T|\geq |(L\setminus R)\cap T|/(k'+1)$. Furthermore, these new lean witnesses are disjoint. The running time is $O(k'm)$.
\label{lemma:EnforceLean}
\end{lemma}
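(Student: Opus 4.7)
For each $(L,R) \in \mathcal{C}$, I would run a unit-capacity vertex max-flow in $G[L]$ from sources $L \cap R$ to sinks $L \cap T$, using the standard split-vertex reduction and Ford--Fulkerson in $O(\phi \cdot |E(G[L])|) = O(k'\,|E(G[L])|)$ time, since the value $\phi$ of the max-flow is bounded by $|L\cap R|\le k'$. If $\phi = |L\cap R|$, Menger's theorem directly yields the $|L\cap R|$ vertex-disjoint paths required by \Cref{def:LeanWitness}, so I output $(L',R'):=(L,R)$. Otherwise $\phi \le k'-1$; I would take $S$ to be the min vertex cut closest to the sinks (obtainable by a reverse residual BFS), let $B$ be the set of vertices of $G[L]\setminus S$ reachable from the sinks, and set $L':=B\cup S$, $R':=V(G)\setminus B$, so that $L'\cap R'=S$ and $L'\setminus R'=B$.

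The plan then verifies that $(L',R')$ has the required properties. \emph{Cut structure}: $(L',R')$ is a valid cut with $L'\setminus R' = B \subseteq L\setminus R$ (since sources in $L\cap R$ cannot be sink-reachable in $G[L]\setminus S$) and $X\subseteq R\subseteq R'$. \emph{Witness property}: since $S$ is a min cut saturated by the max-flow, each of the $|S|$ vertex-disjoint paths contains exactly one $S$-vertex; classifying $S$ by the roles its vertices take in their paths -- source-and-sink (from $L\cap R\cap T$), source-only (from $(L\cap R)\setminus T$), sink-only (from $(L\setminus R)\cap T$), and strictly intermediate -- I can count that the number of sources outside $S$ that send flow equals $|S_T|+|S_M|$ with $S_T:=S\cap (L\setminus R)\cap T$ and $S_M:=S\setminus (L\cap R\cup T)$, and then combine with the original inequality $|(L\setminus R)\cap T|>|(L\cap R)\setminus T|$ to deduce $|B\cap T|>|S\setminus T|$, equivalently $|L'\cap T|>|S|=|L'\cap R'|$. \emph{Leanness}: any cut of size $<|S|$ between $S$ and $L'\cap T$ in $G[L']$ would, combined with the separation between $S$ and the source side in $G[L]$, yield a strictly smaller $L\cap R$-to-$L\cap T$ cut in $G[L]$ closer to the sinks than $S$, contradicting the choice of $S$; hence $G[L']$ admits $|S|$ vertex-disjoint paths from $S$ to $L'\cap T$, obtained simply by restricting the original max-flow paths to $L'$.

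For the terminal bound, $|B\cap T|\ge |(L\setminus R)\cap T|-|S_T|\ge |(L\setminus R)\cap T|-(k'-1)$. A short case analysis finishes: if $|(L\setminus R)\cap T|\ge k'$, the algebraic inequality $(n-(k'-1))(k'+1)\ge n$ for $n\ge k'$ gives $|B\cap T|\ge |(L\setminus R)\cap T|/(k'+1)$, while if $|(L\setminus R)\cap T|<k'$, the target ratio is strictly below $1$ and $|B\cap T|\ge 1$ (from the witness property) suffices. Disjointness of the new lean witnesses is immediate, since $\tilde L_i'\setminus \tilde R_i'\subseteq L_i\setminus R_i$ and $\tilde L_j'\subseteq L_j$, combined with the original disjointness $(L_i\setminus R_i)\cap L_j=\emptyset$. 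Summing the per-witness $O(k'\,|E(G[L_i])|)$ running time over $i$, and using that the disjointness of $\{L_i\setminus R_i\}_i$ charges each edge of $G$ to at most one relevant $G[L_i]$ (plus at most $O(k'^2)$ separator-only edges per witness), yields the claimed $O(k'm)$ total. The main obstacle I expect is the witness-property counting, which is where the factor $(k'+1)$ in the terminal ratio ultimately arises: up to $k'-1$ terminals in $(L\setminus R)\cap T$ can be absorbed into the new separator $S$, but the accounting confirms that at least $|(L\setminus R)\cap T|/(k'+1)$ remain in $B$.
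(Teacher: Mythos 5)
Your proposal is correct and follows essentially the same route as the paper: a per-witness local max-flow between $L\cap R$ and $L\cap T$ (yours with the roles of source and sink swapped, which is immaterial), taking the terminal side of the resulting mincut as $(L',R')$, extracting the leanness paths by truncating the flow paths at the cut, and deriving the witness and terminal-count bounds from $|L'\cap T|\geq|L\cap T|>|L\cap R|\geq|L'\cap R'|$ together with $|L'\cap R'|\leq k'$ — your detour through classifying flow-path roles and the case split on $|(L\setminus R)\cap T|$ versus $k'$ reaches the same conclusions the paper gets in two lines. The one place your accounting falls short of the stated bound is the running time: by running the flow in $G[L_i]$ itself you admit, as you note, up to $O(k'^2)$ edges inside $L_i\cap R_i$ per witness that are not covered by the disjointness of the sets $L_i\setminus R_i$, giving $O(k'm+|\mathcal{C}|\cdot k'^3)$ rather than $O(k'm)$; the paper sidesteps this by deleting $E(G[L\cap R])$ from each local graph (harmless, since the super-sink is adjacent to all of $L\cap R$), which makes the local graphs genuinely edge-disjoint.
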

\begin{proof}
For each cut $(L,R)\in{\cal C}$, we do the following. First, we construct a unit-capacitated graph $\wtilde{G}_{\local}$. Initially, $\wtilde{G}_{\local} = G[L]\setminus E(G[L\cap R])$ and assign each vertex with capacity $1$. Next, we create a super vertex $v_{\src}$ as source and connect $v_{\src}$ to each vertex in $L\cap T$. Also, we create a super vertex $v_{\sink}$ as sink and connect $v_{\sink}$ to each vertex in $L\cap R$. 

We compute an arbitrary $v^{\src}$-$v^{\sink}$ mincut $(\wtilde{L},\wtilde{R})$ in $\wtilde{G}_{\local}$. To avoid clutter, in what follows we suppose $v^{\src},v^{\sink}$ have been removed from $\wtilde{L}$ and $\wtilde{R}$. We define $(L',R')$ by letting $L' = \wtilde{L}$ and $R' = \wtilde{R}\cup (R\setminus L)$. Note that by the construction of $\wtilde{G}_{\local}$, the mincut size $|\wtilde{L}\cap \wtilde{R}|\leq |L\cap R|$. We now show $(L',R')$ satisfies the requirements. We start with some simple observation.
\begin{enumerate}
\item\label{ob:MakeLean1} We have $L\cap T\subseteq \wtilde{L}$, because $v_{\src}$ connects to $L\cap T$.
\item\label{ob:MakeLean2} We have $L\cap R\subseteq \wtilde{R}$, because $v^{\sink}$ connects to $L\cap R$.
\item\label{ob:MakeLean3} We have $|\wtilde{L}\cap \wtilde{R}|\leq |L\cap R|$ by the construction of $\wtilde{G}_{\local}$.
\end{enumerate}
Note that $(L',R')$ is indeed a cut in $G$ because $L'\setminus R'$ has no edge to $R'\setminus L'$. Concretely, $R'\setminus L' = (\wtilde{R}\setminus \wtilde{L})\cup (R\setminus L)$, and $L'\setminus R'$ has no edge to $\wtilde{R}\setminus \wtilde{L}$ (because $L'\setminus R' = \wtilde{L}\setminus \wtilde{R}$) and $R\setminus L$ (because $\wtilde{L}\setminus \wtilde{R} \subseteq L\setminus R$ by observation \ref{ob:MakeLean2}).

We first show $L'\setminus R'\subseteq L\setminus R$. Again we have $L'\setminus R' = \wtilde{L}\setminus \wtilde{R}$ by the definition of $(L',R')$. Combining $\wtilde{L}\subseteq L$ and $L\cap R\subseteq \wtilde{R}$, we have $\wtilde{L}\setminus \wtilde{R}\subseteq L\setminus R$.

Second, we show that $(L',R')$ is a $(X,T,k')$-witness. First, we have $|L'\cap R'| = |\wtilde{L}\cap \wtilde{R}|\leq |L\cap R|\leq k'$ by observation \ref{ob:MakeLean3}. Next, $|L'\cap T| = |\wtilde{L}\cap T|\geq |L\cap T|>|L\cap R|\geq |L'\cap R'|$, because $(L,R)$ is an $(X,T,k')$-witness and observations \ref{ob:MakeLean1} and \ref{ob:MakeLean3}. Finally, we have $X\subseteq R'$ because $L'\setminus R'\subseteq L\setminus R$.

Third, $(L',R')$ is lean by the following reasons. Consider the maxflow in $\wtilde{G}_{\local}$ that certifies $(\wtilde{L},\wtilde{R})$ is a $v^{\src}$-$v^{\sink}$ mincut. The flow paths correspond to $|\wtilde{L}\cap \wtilde{R}|$ vertex disjoint paths in $G[\wtilde{L}]$ from $L\cap T$ (recall that $v^{\src}$ connects to $L\cap T$) to $\wtilde{L}\cap \wtilde{R}$. %
By the definition of $(L',R')$, these vertex disjoint paths show that $(L',R')$ is lean.

Fourth, we show $|(L'\setminus R')\cap T|\geq |(L\setminus R)\cap T|/(k'+1)$. Because $\wtilde{L}\setminus \wtilde{R} = L\setminus R$, it suffices to show that $|(\wtilde{L}\setminus \wtilde{R})\cap T|\geq |(L\setminus R)\cap T|/(k'+1)$. By observation \ref{ob:MakeLean1}, we have $L\cap T = \wtilde{L}\cap T$. Note that $|(\wtilde{L}\setminus \wtilde{R})\cap T| \geq |\wtilde{L}\cap T| - |\wtilde{L}\cap \wtilde{R}|$. Let $a = |\wtilde{L}\cap T| = |L\cap T|$ and $b = |\wtilde{L}\cap \wtilde{R}|$. We have $a = |L\cap T| > |L\cap R| \geq b$ and $b \leq |L\cap R|\leq k'$. Therefore, from $a\geq b+1$ and $b\leq k'$, we have $a-b\geq a/(k'+1)$. Namely, $|(\wtilde{L}\setminus \wtilde{R})\cap T|\geq |L\cap T|/(k'+1) \geq |(L\setminus R)\cap T|/(k'+1)$

Finally, these new lean witnesses are disjoint because for each $(L,R)\in{\cal C}$, its new lean witness
$(L',R')$ has $L' = \wtilde{L}\subseteq L$ and $L'\setminus R' = \wtilde{L}\setminus \wtilde{R}\subseteq L\setminus R$. It is easy to check that the new lean witnesses satisfy \Cref{def:DisjointCuts}.

The whole algorithm just runs maxflow on the graphs $\wtilde{G}_{\local}$ for each cut $(L,R)\in{\cal C}$. Because ${\cal C}$ is a collection of disjoint cuts, the graphs $\wtilde{G}_{\local}$ are edge-disjoint, so we only runs maxflow on graphs of total size $O(m)$. Since all the maxflow value is bounded by $k'$ by observation \ref{ob:MakeLean3}, the can use the classic Ford-Fulkerson maxflow algorithm. The total running time is $O(k'm)$.
\end{proof}

\subsection{The Algorithm for Reducing Adhesion: Proof of \Cref{thm:reducing_adhesion}}
\label{sec:algo_reducing_adhesion}

The algorithm for \Cref{thm:reducing_adhesion} is described in \Cref{alg:reducing_adhesion}. At a high level, at any time, we maintain a set of terminals $T$ and a set $X$. Initialize $T = V(G)$ and $X = X_0$. During the algorithm, $X$ keeps growing, and $T$ is roughly shrinking. Additionally, we maintain the invariant that $T \supseteq X$, $X$ is appropriately unbreakable, and $T$ has small adhesion. Once $T = X$, we obtain an unbreakable $X$ with small adhesion as desired.

\begin{algorithm}
\caption{Reducing Adhesion}
\label{alg:reducing_adhesion}
\begin{algorithmic}[1]
\Require A graph $G$ with a paramter $\epsilon > 0$ and a $(q,k)$-unbreakable set $X_{0}\subseteq V(G)$.
\Ensure A $({k} \lceil \frac{1}{\epsilon}\rceil + q, k)$ unbreakable set $X$ s.t. $X_{0} \subseteq X\subseteq V(G)$  and $\sigma(X) \leq {k} \lceil \frac{1}{\epsilon}\rceil + q$.
\State Initialize $X \gets X_0$ and $T \gets V(G)$
\For {$\ell = 1$ to $\lceil \frac{1}{\epsilon} \rceil$}
\State\label{line:colorcoding1} Apply \Cref{coro:LeanWitCover} on $G,T,X$ and $k'=q + \ell k$. Let $Q$ and ${\cal C}$ be the output.
\While {$|Q| \geq n^{1 - \ell\epsilon}$}
\State $T\gets$ carve $T$ with ${\cal C}$ as in \Cref{def:CarveWithDisjointCuts}.
\State\label{line:colorcoding2}Recompute $Q$ and ${\cal C}$ by applying \Cref{coro:LeanWitCover} on $G$, the new $T$, $X$, and $k'=q+\ell k$.
\EndWhile
\State $X \gets T\setminus Q$
\EndFor
\State Return $X$
\end{algorithmic}
\end{algorithm}

To prove the correctess of \Cref{alg:reducing_adhesion}, we will show the invariant that, at the beginning of each phase $\ell$, 
\begin{enumerate}
\item $X_{0}\subseteq X\subseteq T$,
\item $X$ is $(q+ (\ell-1) k,k)$-unbreakable,
\item $T$ has adhesion $\sigma(T)\leq q + (\ell-1) k$, and
\item $|T \setminus X|\leq n^{1-(\ell-1)\epsilon}$.
\end{enumerate}
Note that at the end of the algorithm, we have $X = T$ by the stopping condition of the inner loop. Therefore, once we prove the invariant, we immediately show the correctness of the final output $X$.

Let us give an intuition about the properties of $X$. On every phase, we always have $X_0 \subseteq X \subseteq T$. When the algorithm proceeds to the next phase, the unbreakability of $X$ and adhesion of $T$ are slightly relaxed by an additive factor of $k$. But the set $T\setminus X$ significantly shrinks by a factor of $n^\epsilon$. Intuitively, $T\setminus X$ contains set of vertices that we might carve out from $T$. Once $T\setminus X = \emptyset$, we have $X = T$ and we are done.\footnote{This dynamics resembles how the expander decomposition algorithms in \cite{nanongkai2017dynamic,wulff2017fully} work. At the beginning of each phase in these algorithms, the expansion parameter is slightly relaxed but the maximum balance of sparse cuts shrinks significantly.}

In what follows, we prove the invariant by induction. Trivially, the invariant holds at the beginning of the first phase $\ell=1$. Now fix a phase $1\leq \ell\leq \lceil 1/\epsilon\rceil$, assuming the invariant holds at the beginning of this phase, we show that it holds at the beginning of next phase $\ell+1$.

Each time we compute $Q$ and ${\cal C}$ by applying \Cref{coro:LeanWitCover} (i.e. in \Cref{line:colorcoding1} and \Cref{line:colorcoding2}), we have the following two cases and we first discuss some useful observations in these two cases.

\medskip

\noindent\underline{Case 1: $|Q|\geq n^{1-\ell\epsilon}.$} In this case we carve $T$ with ${\cal C}$. By \Cref{coro:LeanWitCover}, ${\cal C}$ is a collection of disjoint lean $(X,T,q+\ell k)$-witnesses. Let $T'$ be the new $T$ by carving the old $T$ with ${\cal C}$ according to \Cref{def:CarveWithDisjointCuts}. Note that we have $\sum_{(L,R)\in{\cal C}}|(L\setminus R)\cap T|\geq |Q|/\gamma$ by \Cref{coro:LeanWitCover} where
\[
\gamma = 2^{\OO((q + \ell k)\log (q + \ell k))}\log n,
\]
and each $(L,R)\in{\cal C}$ has $|L\cap R|\leq q + \ell k$ because $(L,R)$ is a lean $(X,T, q+ \ell k)$-witness.
By \Cref{lemma:disjointlean}, we have $T'\supseteq X$, $\sigma(T') \leq \max\{\sigma(T),q+ \ell k\}$ and 
\[
|T'|-|T|\geq \sum_{(L,R)\in{\cal C}}\frac{|L\setminus R|}{2|L\cap R|}\geq \sum_{(L,R)\in{\cal C}}\frac{|(L\setminus R)\cap T|}{2|L\cap R|}\geq \frac{|Q|}{2\gamma(q+ \ell k)}\geq \frac{n^{1-\ell\epsilon}}{2\gamma(q + \ell k)}.
\]

\medskip

\noindent\underline{Case 2: $|Q|<n^{1-\ell\epsilon}$.} In this case, we grow $X$ to be $T\setminus Q$. We use $X$ to denote the old $X$ before update, and let $X'$ be the new $X$ after the update. By \Cref{coro:LeanWitCover}, $Q\subseteq T\setminus X$ and $Q$ includes all $(X,T,q + \ell k)$-carvable vertices in $T$, so there is no $(X,T,q + \ell k)$-carvable vertex in $X'$.

\medskip

Now we verify the invariant at the end of phase $\ell$ (i.e. at the beginning of phase $\ell+1$). Again, we let $X$ denote the old $X$ at the beginning of phase $\ell$, and let $X'$ denote the new $X$ at the end of this phase.
\begin{enumerate}
\item We have $X'\subseteq T$ because  $X'=T\setminus Q$. Also, $X \subseteq X'$  because $Q\subseteq T\setminus X$. Thus, by induction, $X_0 \subseteq X \subseteq X'$.
\item Because $X'$ has no $(X,T,q + \ell k)$-carvable vertex and $X$ is $(q + (\ell-1) k,k)$-unbreakable by assumption, $X'$ is $(q + \ell k,k)$-unbreakable by \Cref{lemma:nocarvable}. 
\item By assumption, at the beginning of this phase, the adhesion of $T$ is at most $q + (\ell - 1)k$. Each time we carve $T$ into a new one (denoted by $T'$), we have $\sigma(T') \leq \max\{\sigma(T),q + \ell k\}$ by \Cref{lemma:disjointlean}. Thus, the final $T$ at the end of this phase has adhesion at most $q + \ell k$.
\item At the end of this phase, $|T\setminus X'|$ is exactly the size of $Q$ in case 2, so $|T \setminus X'|\leq n^{1-\ell \epsilon}$.
\end{enumerate}

\begin{lemma}
~\Cref{alg:reducing_adhesion} runs in time $\exp(\OO((q+\frac{k}{\epsilon})\log(q+\frac{k}{\epsilon}))) m^{1+\epsilon+o(1)}$.
\end{lemma}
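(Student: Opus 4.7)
The plan is to bound the cost of a single iteration of the inner while loop, then bound the number of such iterations per phase, and finally sum across phases. The key observation is that the target running time $m^{1+\epsilon+o(1)}$ arises from the per-phase factor $n^\epsilon$, which in turn comes from the ratio between the exponents $1-(\ell-1)\epsilon$ and $1-\ell\epsilon$ governing $|T\setminus X|$ at the start and end of phase $\ell$.

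\textbf{Per-iteration cost.} Each iteration of the inner while loop in phase $\ell$ calls \Cref{coro:LeanWitCover} with $k'=q+\ell k$, which runs in time $2^{\OO((q+\ell k)\log(q+\ell k))}\,m^{1+o(1)}$. The carving step itself (applying \Cref{def:CarveWithDisjointCuts}) is cheaper and subsumed by this cost.

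\textbf{Number of iterations per phase.} By the invariant (property 4) established in the correctness proof, at the start of phase $\ell$ we have $|T\setminus X|\le n^{1-(\ell-1)\epsilon}$. During the inner while loop, $X$ is not modified, so decreases in $|T|$ translate directly into decreases in $|T\setminus X|$. By the analysis of Case 1 of the correctness proof, each inner iteration reduces $|T|$ by at least
\[
\frac{|Q|}{2\gamma(q+\ell k)} \;\ge\; \frac{n^{1-\ell\epsilon}}{2\gamma(q+\ell k)},
\]
with $\gamma = 2^{\OO((q+\ell k)\log(q+\ell k))}\log n$. Dividing the budget $n^{1-(\ell-1)\epsilon}$ by the per-iteration decrease therefore bounds the number of inner iterations in phase $\ell$ by
\[
\frac{n^{1-(\ell-1)\epsilon}}{n^{1-\ell\epsilon}/(2\gamma(q+\ell k))} \;=\; 2\gamma(q+\ell k)\,n^{\epsilon} \;=\; 2^{\OO((q+\ell k)\log(q+\ell k))}\,n^{\epsilon},
\]
after absorbing the polynomial factors in $q+\ell k$ and the $\log n$ term into the exponent.

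\textbf{Total cost.} Multiplying per-iteration cost by the number of iterations and using $n\le m+1$, the total time spent in phase $\ell$ is $2^{\OO((q+\ell k)\log(q+\ell k))}\,m^{1+\epsilon+o(1)}$. Summing over $\ell = 1,\dots,\lceil 1/\epsilon\rceil$, the dominant phase is the last one with $\ell=\lceil 1/\epsilon\rceil$, and the outer factor $\lceil 1/\epsilon\rceil$ is absorbed into $\exp(\OO((q+k/\epsilon)\log(q+k/\epsilon)))$, yielding the claimed bound $\exp(\OO((q+\tfrac{k}{\epsilon})\log(q+\tfrac{k}{\epsilon})))\,m^{1+\epsilon+o(1)}$.

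The only delicate point is the bookkeeping between the size bound $|T\setminus X|\le n^{1-(\ell-1)\epsilon}$ inherited from the previous phase and the per-iteration progress $n^{1-\ell\epsilon}/(2\gamma(q+\ell k))$; once these two quantities are placed side by side the ratio telescopes exactly to the $n^{\epsilon}$ overhead needed to reach $m^{1+\epsilon+o(1)}$, so no further algorithmic work is required beyond invoking the invariants already established.
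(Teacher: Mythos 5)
Your proof is correct and follows essentially the same route as the paper: bound the per-iteration cost by the call to the disjoint-witness subroutine, use the while-loop condition $|Q|\ge n^{1-\ell\epsilon}$ together with the carving progress bound $|Q|/(2\gamma(q+\ell k))$ and the invariant $|T\setminus X|\le n^{1-(\ell-1)\epsilon}$ to get at most $2\gamma(q+\ell k)n^{\epsilon}$ inner iterations per phase, and multiply over the $O(1/\epsilon)$ phases. The observation that $X$ is fixed during the inner loop, so decreases in $|T|$ are decreases in $|T\setminus X|$, is exactly the bookkeeping the paper relies on as well.
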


\begin{proof}
In each phase $\ell$, we claim that case 1 will occur at most $O(\gamma(q + \ell k)n^{\epsilon})$ times due to the following reasons. At the beginning of phase $\ell$, we have $|T|-|X|\leq n^{1-(\ell-1)\epsilon}$. We have shown that each time we carve $T$ in case 1, the size of $T$ will drop by at least $\frac{n^{1-\ell\epsilon}}{2\gamma(q+ \ell k)}$. Therefore, after at most $2\gamma(q+\ell k)n^{\epsilon}$ occurrences of case 1, the differences between the size of $T$ and the size of (the old) $X$ will drop below $n^{1-\ell\epsilon}$. Once this happens, the next iteration of the inner loop will go into case 2 because $Q\subseteq T\setminus X$.

Therefore, there are at most $2\gamma(q+\ell k)n^{\epsilon} = \exp(\OO((q+\frac{k}{\epsilon})\log(q+\frac{k}{\epsilon})))n^{\epsilon}$ iterations of the inner loop in one phase. Each iteration runs in time $\exp(\OO((q+\frac{k}{\epsilon})\log(q+\frac{k}{\epsilon}))) m^{1+o(1)}$ (since the bottleneck is \Cref{lemma:colorcoding}) and thus the total running time of one phase is at most $\exp(\OO((q+\frac{k}{\epsilon})\log(q+\frac{k}{\epsilon}))) m n^{\epsilon+o(1)}$. Because there are at most $O(1/\epsilon)$ phases, the total run-time of the procedure is $\exp(\OO((q+\frac{k}{\epsilon})\log(q+\frac{k}{\epsilon}))) m n^{\epsilon+o(1)}$ as well. 
\end{proof}

\section{The Unbreakable Decomposition Algorithm}\label{sec:Decomp}

In this section, we show the main result of our paper, a fast algorithm to construct unbreakable decomposition.

\begin{theorem}
\label{thm:DetailedMain}
Given an $n$-vertex $m$-edge undirected graph $G$ with a parameters $k$ and $0<\epsilon\leq 1$, there is a randomized algorithm that, with high probability, computes a $(2\lceil 1/\epsilon \rceil k + 3k,k)$-unbreakable decomposition with adhesion $2\lceil 1/\epsilon \rceil k + 2k$, which further has the following properties.  
\begin{itemize}
\item The unbreakable decomposition is compact and admits subtree unbreakability.
\item The depth of the decomposition is $O(k\log n/\epsilon)$.
\item The number of tree nodes is $O(n)$ and the total bag size is $O(kn/\epsilon)$.
\end{itemize}
The running time is $2^{O(\frac{k}{\epsilon}\log\frac{k}{\epsilon})}m^{1+\epsilon+o(1)}$. Alternatively, the algorithm can compute an $(O(k/\epsilon),k)$-unbreakable decomposition with adhesion $O(k/\epsilon)$, depth $O(\log n)$, and all the other properties as above in asymptotically the same running time.
\end{theorem}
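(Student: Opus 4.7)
I would implement the top-down recursive scheme of the Technical Overview. Set $\sigma := k + k\lceil 1/\epsilon \rceil$ and $q := k + 2\sigma$. Design a procedure $\textsc{Decomp}(H, B)$ that takes an induced subgraph $H$ (playing the role of $G_t$ for the tree node $t$ about to be created) together with a boundary $B \subseteq V(H)$ with $|B| \le 2\sigma$, and returns a rooted compact tree decomposition of $H$ whose root bag $X \supseteq B$ is $(q,k)$-unbreakable in $H$ and has adhesion at most $2\sigma$ in $H$. For each connected component $C$ of $H \setminus X$, I recursively call $\textsc{Decomp}$ on the graph $G[C \cup N_H(C)] - E_G(N_H(C))$ with new boundary $N_H(C)$, attaching the returned subtree as a child. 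The final decomposition is $\textsc{Decomp}(G, \emptyset)$, and subtree unbreakability is automatic because each bag is constructed to be unbreakable in the graph passed to its call, which is precisely $G_t$.

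\textbf{Constructing the root bag: easy case.} If $|B| \le \sigma$, apply \Cref{coro:UnbreakLowAdhesion} to $H$ with parameters $k$ and $\epsilon$ to obtain a set $X_1$ that is $(\sigma, k)$-unbreakable, $1/2$-balanced, and has adhesion $\sigma$ in $H$, and set $X := X_1 \cup B$. For any cut $(L,R)$ of $H$ of size at most $k$ with both $|L \cap X|, |R \cap X| > q$, discarding the at most $\sigma$ vertices of $B$ from each side leaves more than $q - \sigma = \sigma + k > \sigma$ vertices of $X_1$ on both sides, contradicting the $(\sigma, k)$-unbreakability of $X_1$. For each component $C$ of $H \setminus X$, the boundary $N_H(C)$ is contained in $(X_1 \cap N_H(C)) \cup (B \cap N_H(C))$, of size at most $\sigma + |B| \le 2\sigma$, and $1/2$-balancedness is inherited from $X_1$.

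\textbf{Constructing the root bag: boundary too large.} If $\sigma < |B| \le 2\sigma$, first run \Cref{lemma:CheckUnbreakable} on $B$ with parameters $k, k$ in time $2^{O(\sigma)} k m$. If it returns a $(k,k)$-breakable witness $(L, R)$, set $X := (L \cap R) \cup B$; then $|X| \le |B| + k \le q$, so $X$ is vacuously $(q,k)$-unbreakable. For any component $C$ of $H \setminus X$ lying, say, in $L \setminus R$, one checks $N_H(C) \subseteq (L \cap R) \cup (L \cap B)$, and by inclusion-exclusion $|B| - |N_H(C)| \ge |R \cap B| - |L \cap R| \ge (k+1) - k = 1$, using $|R \cap B| > k$ and $|L \cap R| \le k$; thus the new boundary has size at most $|B| - 1$ and strictly shrinks. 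Iterating, at most $\sigma$ such non-balanced steps occur along any root-to-leaf path before $|B|$ drops to $\sigma$. If instead $B$ is certified $(k,k)$-unbreakable, feed $B$ into \Cref{thm:reducing_adhesion} with $X_0 := B$ and $q := k$ to obtain $X_2 \supseteq B$ that is $(\sigma, k)$-unbreakable with adhesion $\sigma$, and set the bag to $X_2$; each child then has boundary of size at most $\sigma$ and falls into the easy case.

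\textbf{Analysis and main obstacle.} The depth bound $O((k/\epsilon) \log n)$ follows from combining the $1/2$-balanced recursion (depth $O(\log n)$) with the at-most-$\sigma$ chain of non-balanced reductions between consecutive balanced levels. The per-call running time is dominated by \Cref{coro:UnbreakLowAdhesion} or \Cref{thm:reducing_adhesion}, each $2^{O((k/\epsilon) \log(k/\epsilon))} m^{1+\epsilon+o(1)}$; since children at the same level work on subgraphs that are vertex-disjoint outside adhesion sets of size $O(k/\epsilon)$, the total work per level of the tree stays within the same bound, and the $O(\log n)$ depth factor is absorbed into $m^{o(1)}$. The bag-size and node-count bounds follow because each bag has size $O(k/\epsilon)$ and the balanced recursion yields $O(n)$ nodes. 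Compactness is immediate from the recursive construction, and the unbreakability/adhesion parameters expand to exactly $(q,k) = (2 \lceil 1/\epsilon \rceil k + 3k, k)$ and $2\sigma = 2 \lceil 1/\epsilon \rceil k + 2k$ as claimed. The main difficulty I anticipate is the alternative $O(\log n)$ depth statement: the long chain of boundary-shrinking bags in the $|B| > \sigma$ branch must be collapsed. I would defer this to \Cref{sect:DepthReduction}, which merges consecutive shrinking bags along any root-to-leaf chain into a single bag of size $O(k/\epsilon)$, inflating the unbreakability and adhesion parameters by a constant factor while still keeping them $O(k/\epsilon)$.
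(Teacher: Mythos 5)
Your main construction — the $\Decomp(H,B)$ recursion with $\sigma=k+k\lceil 1/\epsilon\rceil$, the split into the case $|B|\le\sigma$ (balanced unbreakable low-adhesion set from \Cref{coro:UnbreakLowAdhesion} unioned with $B$) and the case $\sigma<|B|\le 2\sigma$ (either a $(k,k)$-breakable witness of $B$ shrinks the boundary by at least one, or \Cref{thm:reducing_adhesion} applied to $B$ reduces the adhesion to $\sigma$), the $O((k/\epsilon)\log n)$ depth bound, and the edge-disjointness argument for the running time — is essentially the paper's proof, and your arguments for the unbreakability and adhesion of the root bag in each case match \Cref{claim:DecompCase1} and its Case-2 analogue. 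Two small reasoning slips that do not affect the conclusions: the total bag size $O(kn/\epsilon)$ does not follow from ``each bag has size $O(k/\epsilon)$'' (Case-2 bags produced by \Cref{coro:UnbreakLowAdhesion} can be arbitrarily large, e.g.\ all of $V(H)$ on an expander); it follows instead from the disjointness of the sets $\beta(t)\setminus\sigma(t)$, which sum to at most $n$, plus $|V(T)|\cdot O(\sigma)$ for the adhesions.

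The genuine gap is in the depth reduction to $O(\log n)$. Merging a maximal chain of consecutive boundary-shrinking (Case-1(a)) bags into one node does not keep the unbreakability parameter at $O(k/\epsilon)$: such a chain has length up to $\sigma=\Theta(k/\epsilon)$, and each step adds a fresh separator $L\cap R$ of size up to $k$, so the merged bag can have size $\Theta(\sigma k)=\Theta(k^2/\epsilon)$ and is then only trivially $(O(k^2/\epsilon),k)$-unbreakable — there is no reason it is $(O(k/\epsilon),k)$-unbreakable, which is what the alternative statement of the theorem requires. The paper's \Cref{sect:DepthReduction} takes a different route: it enlarges $\sigma$ to $5\lceil 1/\epsilon\rceil k$, adds a termination rule for $|V(H)|\le 10\lceil 1/\epsilon\rceil k$, and — crucially — replaces the $(k,k)$-unbreakability test on $B$ in \Cref{lemma:CheckUnbreakable} by a $(\lceil 1/\epsilon\rceil k+k,\,k)$-unbreakability test. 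A breakable witness then has more than $\lceil 1/\epsilon\rceil k$ boundary vertices strictly on each side, so each Case-1(a) step reduces the boundary size by $\lceil 1/\epsilon\rceil k$ rather than by $1$, bounding the number of consecutive Case-1 steps by $O(1)$ while every bag still has size $O(k/\epsilon)$ (when small) or is genuinely $(O(k/\epsilon),k)$-unbreakable (when produced by \Cref{thm:reducing_adhesion} or \Cref{coro:UnbreakLowAdhesion}). You would need this modification, or something equivalent, to obtain the $O(\log n)$-depth variant with the stated parameters.
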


We now show the algorithm for computing an unbreakable decomposition with $O(k\log(n)/\epsilon)$ depth. Later in \Cref{sect:DepthReduction}, we will discuss how to reduce the depth to $O(\log n)$ by modifying the algorithm slightly.

\begin{proof}[Proof of \Cref{thm:DetailedMain}]
We fix a parameter $\sigma = \lceil\frac{1}{\epsilon}\rceil k+k$ (which is an upper bound of the adhesion of the set returned by \Cref{thm:reducing_adhesion}). We will design a subroutine $\Decomp(H,B)$, which will receive as input an undirected graph $H$ with a boundary vertex set $B\subseteq V(H)$ such that
\begin{enumerate}
\item\label{cond:Decomp1} $|B| \leq 2\sigma$,
\item\label{cond:Decomp2} $H\setminus B$ is connected, and
\item\label{cond:Decomp3} $N_{H}(V(H)\setminus B) = B$.
\end{enumerate}
We remark that condition \ref{cond:Decomp1} is relatively important, while conditions 2 and 3 are only for avoiding some corner cases (so the reader may ignore them for now and come back when they are used). 
The output of $\Decomp(H,B)$ is an unbreakable decomposition $(T,\beta)$ of $H$ in which the root bag $\beta(r)$ contains $B$. Then the desired unbreakable decomposition of $G$ can be obtained by invoking $\Decomp(G,\emptyset)$. Without loss of generality, we assume $G$ is connected (otherwise, just work on each connected component separately). Thus it satisfies conditions \ref{cond:Decomp2} and \ref{cond:Decomp3}. 

In what follows, we first describe $\Decomp(H,B)$ and then show the output guarantees of this subroutine.

\begin{algorithm}[h]
\caption{$\Decomp(H,B)$}
\label{alg:Decomp}
\begin{algorithmic}[1]
\Require Parameters $k\geq 1, 0<\epsilon\leq 1$ and $\sigma = \lceil1/\epsilon\rceil k + k$. A graph $H$ with a boundary vertex set $B\subset V(H)$ such that $|B|\leq 2\sigma$.
\Ensure A rooted tree decomposition $(T,\beta)$ of $H$ such that $B$ is contained in the root bag.
\If{$\sigma + 1\leq |B|\leq 2\sigma$}
\State\label{line:CheckUnbreakable}Apply \Cref{lemma:CheckUnbreakable} on $B$ and $H$ to check if $B$ is $(k,k)$-unbreakable in $H$.
\If{\Cref{lemma:CheckUnbreakable} returns a $(k,k)$-breakable witness $(L,R)$ of $B$ in $H$}
\State $X_{1}\gets L\cap R$ and $X\gets B\cup X_{1}$.
\Else \Comment{$B$ is $(k,k)$-unbreakable in $H$}
\State $X\gets $ apply \Cref{thm:reducing_adhesion} on graph $H$ with $B$ as the initial set.
\EndIf
\Else \Comment{$|B|\leq \sigma$}
\State $X_{1}\gets$ apply \Cref{coro:UnbreakLowAdhesion} on graph $H$ with parameters $\epsilon,k$.
\State $X\gets B\cup X_{1}$.
\EndIf
\If{$X=B$}
\State\label{line:corner}Return a trivial decomposition $(T,\beta)$ with a single bag $V(H)$.
\Else \Comment{$X\supset B$.}
\State Initialize $(T,\beta)$ with only the root node $r$ whose bag is $\beta(r) = X$.
\For{each connected component $D$ of $H\setminus X$}
\State $(T_{D},\beta_{D})\gets \Decomp(H[D\cup N_{H}(D)]\setminus E(H[N_{H}(D)]), N_{H}(D))$
\State Update $(T,\beta)$ by appending $(T_{D},\beta_{D})$ to the root node $r$.
\EndFor
\State Return $(T,\beta)$.
\EndIf
\end{algorithmic}
\end{algorithm}

\paragraph{Description of $\Decomp(H,B)$.} See \Cref{alg:Decomp} for a pseudocode. We now explain the algorithm below. 

\medskip

\noindent{\underline{Compute a ``separator'' $X$.}} We first compute a set $X\subseteq V(G)$ according to the following two cases.

\medskip

\noindent{\textbf{Case 1.}} If $\sigma+1\leq |B|\leq 2\sigma$, we will compute an unbreakable set $X\supseteq B$ with adhesion smaller than $|B|$ to reduce the boundary size of the subproblems.
\begin{itemize}
\item[(a)] If $B$ has a $(k,k)$-breakable witness in $H$, find such a witness $(L,R)$ using \Cref{lemma:CheckUnbreakable} (let $X_{1} = L\cap R$), and we let $X = X_{1}\cup B$.
\item[(b)] Otherwise, $B$ is $(k,k)$-unbreakable in $H$. We apply \Cref{thm:reducing_adhesion} on graph $H$ with $B$ as the initial set, and let $X$ be the output set.
\end{itemize}

\begin{claim}
\label{claim:DecompCase1}
In Case 1, the set $X$ is $(2\lceil 1/\epsilon\rceil k+3k,k)$-unbreakable and has adhesion at most $|B|-1$.
\end{claim}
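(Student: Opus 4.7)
The plan is to handle the two sub-cases of Case 1 separately, since they rely on different pieces of machinery already established. In sub-case (a), where \Cref{lemma:CheckUnbreakable} has returned a $(k,k)$-breakable witness $(L,R)$ of $B$ and we have set $X = B \cup (L\cap R)$, I would first observe that $|X| \le |B| + |L\cap R| \le 2\sigma + k = 2\lceil 1/\epsilon\rceil k + 3k$. Since any set of size at most $q$ is vacuously $(q,k)$-unbreakable, the unbreakability of $X$ is immediate. In sub-case (b), where $B$ is $(k,k)$-unbreakable, I would cite \Cref{thm:reducing_adhesion} applied with initial set $X_0 = B$ and $q = k$; this produces $X \supseteq B$ that is $(k + k\lceil 1/\epsilon\rceil, k) = (\sigma,k)$-unbreakable with adhesion at most $\sigma$, and since $\sigma \le 2\lceil 1/\epsilon\rceil k + 3k$, the required unbreakability follows.

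The nontrivial part of the claim is the adhesion bound $|B|-1$. In sub-case (b) this is essentially free: the hypothesis $|B|\ge \sigma+1$ gives $\sigma \le |B|-1$ directly. In sub-case (a), I would argue that any connected component $C$ of $H \setminus X$ lies entirely in $L\setminus R$ or in $R\setminus L$, because $L\cap R \subseteq X$ separates the two sides. Without loss of generality, assume $C\subseteq L\setminus R$; then every neighbor of $C$ in $H$ lies in $L$ (there are no edges across the cut) and also in $X$ (since $C$ is a component of $H\setminus X$), so
\[
N_H(C) \;\subseteq\; L\cap X \;=\; (L\cap B) \cup (L\cap R).
\]

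The main obstacle, and the place where the breakable-witness hypothesis must be exploited, is to convert this containment into the quantitative bound $|L\cap X|\le |B|-1$. My plan is to use the inclusion-exclusion identity $|B| = |L\cap B| + |R\cap B| - |L\cap R\cap B|$ together with $|R\cap B| \ge k+1$ (this is where being a $(k,k)$-breakable witness is used) to obtain $|L\cap B| \le |B| - (k+1) + |L\cap R\cap B|$. Substituting into
\[
|L\cap X| \;=\; |L\cap B| + |L\cap R| - |L\cap R\cap B|
\]
and using $|L\cap R|\le k$ gives $|L\cap X| \le |B| - (k+1) + |L\cap R| \le |B|-1$, as required. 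A symmetric argument handles components in $R\setminus L$, so the adhesion of $X$ is at most $|B|-1$ in every case, completing the claim.
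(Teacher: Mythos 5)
Your proposal is correct and follows essentially the same route as the paper: unbreakability in sub-case (a) from the trivial size bound $|X|\le 2\sigma+k$, sub-case (b) entirely from \Cref{thm:reducing_adhesion} together with $\sigma\le|B|-1$, and the adhesion bound in sub-case (a) from $|L\cap R|\le k$ plus the fact that the breakable witness forces at least $k+1$ vertices of $B$ onto the opposite side. The only cosmetic difference is that you bound $|N_H(C)|$ by $|L\cap X|$ via inclusion--exclusion, whereas the paper splits $N_H(D)$ into its $X_1$-part and its $(B\setminus X_1)$-part and passes through the component of $H\setminus X_1$ containing $D$; both yield the identical bound $|B|-(k+1)+|L\cap R|\le|B|-1$.
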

\begin{proof}
We first focus on case (a). $X$ is trivially $(2\lceil 1/\epsilon\rceil k+3k,k)$-unbreakable because $|X|\leq |X_{1}| + |B|\leq k + 2\sigma = 2\lceil1/\epsilon\rceil k + 3k$. To see that $X$ has adhesion $|B|-1$, consider an arbitrary connected component $D$ of $H\setminus X$. Recall that $X = X_{1}\cup B$, so
\[
|N_{H}(D)| = |N_{H}(D)\cap X_{1}| + |N_{H}(D)\cap (B\setminus X_{1})|,
\]  
and we will bound $|N_{H}(D)\cap X_{1}|$ and $|N_{H}(D)\cap (B\setminus X_{1})|$ separately. Observe that $|N_{H}(D)\cap X_{1}|\leq |X_{1}|\leq k$, where $|X_{1}|\leq k$ is trivially because $X_{1}$ is a $(k,k)$-breakable witness. To bound $|N_{H}(D)\cap (B\setminus X_{1})|$, we look at the connected component $D'\supseteq D$ of $H\setminus X_{1}$ (note that $D'$ is unique since $X_{1}\subseteq X$). We have $N_{H}(D)\cap (B\setminus X_{1})\subseteq D'\cap B$ because $N_{H}(D')\subseteq X_{1}$. Furthermore, $|D'\cap B|\leq |B|-(k+1)$ because $X_{1}$ is a $(k,k)$-breakable witness of $B$ in $H$. Therefore, $|N_{H}(D)\cap (B\setminus X_{1})|\leq |B|-(k+1)$. Putting it all together, $|N_{H}(D)| \leq k + |B|-(k+1)\leq |B|-1$ as desired.

For (b), the claim holds simply by the guarantees of \Cref{thm:reducing_adhesion}, which guarantees that $X$ is $(k\lceil\frac{1}{\epsilon}\rceil+k,k)$-unbreakable and has $\sigma(X) = k\lceil\frac{1}{\epsilon}\rceil+k$ (which means $\sigma(X)\leq \sigma\leq |B|-1$). 
\end{proof}

\medskip

\noindent{\textbf{Case 2.}} If $|B|\leq \sigma$, we will compute a balanced unbreakable set $X\supseteq B$ (with adhesion at most $2\sigma$) to significantly reduce the graph size of the subproblems. We first compute $X_{1}$ by applying \Cref{coro:UnbreakLowAdhesion} on graph $H$ with parameters $\epsilon,k$, and then let $X$ be $B\cup X_{1}$.

\begin{claim}
In Case 2, the set $X$ of $H$ is $(2\lceil1/\epsilon\rceil k+2k,k)$-unbreakable and $1/2$-balanced, and has adhesion at most $2\lceil 1/\epsilon\rceil k + 2k$.
\label{claim:DecompCase2}
\end{claim}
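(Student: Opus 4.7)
The plan is to verify each of the three properties separately, using the guarantees of \Cref{coro:UnbreakLowAdhesion} applied to $X_1$, namely that $X_1$ is $(\sigma, k)$-unbreakable in $H$, is $\frac{1}{2}$-balanced in $H$, and has adhesion at most $\sigma$ in $H$, together with the hypothesis $|B|\leq\sigma$.

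\textbf{Unbreakability.} Suppose, for contradiction, that $(L,R)$ is a $(2\sigma, k)$-breakable witness of $X$ in $H$, so $|L\cap R|\leq k$ and $|L\cap X|,|R\cap X|>2\sigma$. Since $X = B\cup X_1$, we have $|L\cap X|\leq |L\cap X_1| + |L\cap B|\leq |L\cap X_1| + \sigma$, which gives $|L\cap X_1|>\sigma$. By the symmetric argument, $|R\cap X_1|>\sigma$. Thus $(L,R)$ is a $(\sigma,k)$-breakable witness of $X_1$ in $H$, contradicting the $(\sigma,k)$-unbreakability of $X_1$.

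\textbf{Balancedness.} Since $X_1\subseteq X$, every connected component $D$ of $H\setminus X$ is contained in some connected component $D'$ of $H\setminus X_1$. By the $\frac{1}{2}$-balancedness of $X_1$ in $H$, $|D'|\leq |V(H)|/2$, hence $|D|\leq |V(H)|/2$.

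\textbf{Adhesion.} Fix a connected component $D$ of $H\setminus X$, and decompose
\[
N_H(D) = \bigl(N_H(D)\cap X_1\bigr)\,\cup\,\bigl(N_H(D)\cap(B\setminus X_1)\bigr).
\]
For the first term, let $D'\supseteq D$ be the component of $H\setminus X_1$ that contains $D$; then any $X_1$-neighbor of $D$ is also a neighbor of $D'$, so $|N_H(D)\cap X_1|\leq |N_H(D')|\leq \sigma$ by the adhesion bound on $X_1$. For the second term, $|N_H(D)\cap (B\setminus X_1)|\leq |B|\leq \sigma$. Adding these gives $|N_H(D)|\leq 2\sigma = 2\lceil 1/\epsilon\rceil k + 2k$, as required.

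None of the three arguments is particularly delicate; the main thing to be careful about is not to double-count elements of $B\cap X_1$ in the adhesion step, which is why we intersect with $B\setminus X_1$ rather than $B$ in the decomposition of $N_H(D)$.
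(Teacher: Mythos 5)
Your proof is correct and follows essentially the same route as the paper: the paper simply asserts that $X = X_1 \cup B$ inherits $(\sigma+|B|,k)$-unbreakability, $1/2$-balancedness, and adhesion $\sigma+|B|$ from the guarantees of \Cref{coro:UnbreakLowAdhesion} on $X_1$ together with $|B|\le\sigma$, and your three verifications are exactly the details behind that assertion.
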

\begin{proof}

\Cref{coro:UnbreakLowAdhesion} guarantees that, with high probability, $X$ is $(\lceil 1/\epsilon\rceil k+k,k )$-unbreakable with adhesion at most $\lceil 1/\epsilon\rceil k+k$.
Because $X = X_{1}\cup B$, $X$ is an $(\lceil 1/\epsilon\rceil k+k + |B|,k)$-unbreakable and $1/2$-balanced set in $H$ with adhesion at most $\lceil 1/\epsilon\rceil k+k + |B|$. Recall that $|B|\leq \sigma = \lceil 1/\epsilon\rceil k + k$, so the claim holds.
\end{proof}

\noindent{\underline{Construct the decomposition.}} After computing the set $X$, we construct the decomposition $(T,\beta)$ as follows. We first consider the corner case that $X = B$, in which we return a trivial decomposition with $V(H)$ as the only bag. 

\begin{claim}
In the corner case $X = B$, we have $|V(H)|\leq 2\sigma$, which implies $V(H)$ is $(2\lceil 1/\epsilon\rceil k + 2k,k)$-unbreakable in $H$.
\label{claim:CornerCase}
\end{claim}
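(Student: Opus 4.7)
The plan is to rule out the corner case $X = B$ in both subcases of Case 1, and then exploit the $\frac{1}{2}$-balanced property from Case 2 to bound $|V(H)|$. Since any vertex set of size at most $q$ is vacuously $(q,k)$-unbreakable, establishing $|V(H)| \le 2\sigma = 2\lceil 1/\epsilon\rceil k + 2k$ immediately yields the unbreakability conclusion.

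First I would argue that $X = B$ cannot occur in Case 1. In Case 1(b), \Cref{thm:reducing_adhesion} guarantees $\sigma_H(X) \le \sigma$. However, conditions 2 and 3 on $(H, B)$ force $\sigma_H(B) = |B|$: the set $V(H) \setminus B$ forms a single (nonempty, since $|B|\ge\sigma+1$ forces condition 3 to be nonvacuous) connected component of $H \setminus B$ with neighborhood exactly $B$. Since $|B| \ge \sigma + 1$ in Case 1, setting $X = B$ contradicts the adhesion bound. In Case 1(a), $X = B$ forces $X_1 = L \cap R \subseteq B$. Because $H \setminus B$ is connected and $L \cap R \subseteq B$ separates $L \setminus R$ from $R \setminus L$ in $H$, the connected set $V(H) \setminus B$ lies entirely on one side; WLOG in $L \setminus R$. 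Then $R \setminus L \subseteq B$ is nonempty by the cut property, but each vertex of $R \setminus L$ must have a neighbor in $V(H) \setminus B \subseteq L \setminus R$ by condition 3, contradicting the nonexistence of edges between $L\setminus R$ and $R\setminus L$.

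Second, in Case 2, \Cref{claim:DecompCase2} already asserts that $X$ is $\frac{1}{2}$-balanced in $H$. Substituting $X = B$, every connected component of $H \setminus B$ has size at most $|V(H)|/2$. But by condition 2, $H \setminus B$ consists of the single component $V(H) \setminus B$, so $|V(H)| - |B| \le |V(H)|/2$, giving $|V(H)| \le 2|B| \le 2\sigma$, since $|B| \le \sigma$ defines Case 2. Together with the vacuous unbreakability of sets of size at most $2\lceil 1/\epsilon\rceil k + 2k$, this yields the claim.

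The main obstacle I anticipate is the Case 1(a) argument, which crucially combines the cut structure of $(L, R)$ with the structural conditions on $(H, B)$: a witness cut hiding entirely inside $B$ cannot coexist with the connectivity of $H\setminus B$ and the neighborhood constraint $N_H(V(H)\setminus B) = B$, because the latter forces cross-cut edges that the former forbids. The other two cases (Case 1(b) and Case 2) are direct applications of the guarantees already recorded in \Cref{thm:reducing_adhesion} and \Cref{claim:DecompCase2}.
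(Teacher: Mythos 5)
Your proof is correct and follows essentially the same route as the paper: rule out the corner case in Case 1, then combine the $\tfrac{1}{2}$-balancedness of $X=B$ from \Cref{claim:DecompCase2} with the single-component structure of $H\setminus B$ to get $|V(H)|\le 2|B|\le 2\sigma$ and conclude by vacuous unbreakability. The only (cosmetic) difference is in Case 1(a), where you argue the contradiction directly from the cut structure, whereas the paper handles both subcases of Case 1 uniformly by noting that conditions \ref{cond:Decomp2} and \ref{cond:Decomp3} force $X=B$ to have adhesion exactly $|B|$, contradicting the bound $|B|-1$ already recorded in \Cref{claim:DecompCase1}.
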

\begin{proof}
Note that only case 2 can lead to this corner case. Because from the input conditions \ref{cond:Decomp2} and \ref{cond:Decomp3}, the adhesion of $X$ is exactly $|B|$, contradicting \Cref{claim:DecompCase1} for case 1. Therefore, $|X| = |B|\leq \sigma$ and \Cref{claim:DecompCase2} tells that $X$ is $1/2$-balanced. Combining that there is only one connected component $D$ in $H\setminus X$ (by input condition \ref{cond:Decomp2}), we have $|V(H)|\leq |D| + |X|\leq |V(H)|/2 + \sigma$, which means $|V(H)| \leq 2\sigma$. Hence $V(H)$ is trivially $(2\sigma,k)$-unbreakable in $H$.
\end{proof}

From now we assume $X\supset B$. We consider each connected component $D$ of $H\setminus X$, and solve a subproblem $\Decomp(H[D\cup N_{H}(D)]\setminus E(H[N_{H}(D)]),N_{H}(D))$\footnote{In other words, $H[D\cup N_{H}(D)]\setminus E(H[N_{H}(D)])$ is the subgraph of $H$ induced by vertices $D\cup N_{H}(D)$ excluding edges with both endpoints in $N_{H}(D)$.} recursively which gives output $(T_{D},\beta_{D})$ (note that this will ensure the graphs of the subproblems are edge-disjoint). After solving all the subproblems, we create a node $r$ as the root of $T$ corresponding to bag $\beta(r) = X$, and then append all $(T_{D},\beta_{D})$ to $r$ as \emph{child-subtrees}.

\paragraph{Correctness.} Let $(T,\beta)$ be the output of $\Decomp(H,B)$.

\medskip

\noindent{\underline{$(T,\beta)$ is an unbreakable decomposition.}} We now prove by induction that $(T,\beta)$ from $\Decomp(H,B)$ is a $(2\lceil 1/\epsilon \rceil k + 3k,k)$-unbreakable decomposition of $H$ with adhesion $2\lceil 1/\epsilon \rceil k + 2k$, further satisfying that the root bag containing $B$.

We first consider the base case that $\Decomp(H,B)$ causes no further recursion, which means the set $X$ is trivially the whole $V(H)$. Then trivially the output $(T,\beta)$ is a tree decomposition with only a single bag, and this bag is $(O(k/\epsilon),k)$-unbreakable in $H$ by \Cref{claim:DecompCase1} and \Cref{claim:DecompCase2}.

Next, we consider the recursive subproblems that $\Decomp(H,B)$ invokes. Recall that each recursion $\Decomp(H[D\cup N_{H}(D)]\setminus E(H[N_{H}(D)]), N_{H}(D))$ relates to a connected component $D$ of $H\setminus X$, and we assume all these recursions output child-subtrees $(T_{D},\beta_{D})$ correctly. We verify the properties in \Cref{def:TreeDecomp,def:UnbreakableDecomp} one by one to show the correctness of $\Decomp(H,B)$.
\begin{itemize}
\item Consider a vertex $v\in V(H)$. If $v\notin X$, then $v$ is inside a unique connected component $D_{v}$ of $H\setminus X$, so only the child-subtree $(T_{D_{v}}, \beta_{D_{v}})$ may have bags containing $v$, and by the correctness of recursions, bags containing $v$ induce a connected subtree. If $v\in X$, any child-subtree $(T_{D},\beta_{D})$ with bags containing $v$ must relate to a component $D$ s.t. $v\in N_{H}(D)\subseteq X$. Hence, $v$ is inside the root bag of such child-subtrees $(T_{D},\beta_{D})$ (since $N_{H}(D)$ are the boundary vertices of the next recursion). By the construction of $(T,\beta)$, bags containing $v$ induce a connected subtree.

\item Each edge in $E(H)$ will be covered by at least a bag in $(T,\beta)$, because edges in $H[X]$ are inside the root bag of $(T,\beta)$, and each of the rest of edges is inside some $H[D\cup N_{H}(D)]\setminus E(H[N_{H}(D)])$.

\item All bags of $(T,\beta)$ are $(2\lceil 1/\epsilon \rceil k + 3k,k)$-unbreakable in $H$ by the following reasons, The new bag $X$ is $(2\lceil 1/\epsilon \rceil k + 3k,k)$-unbreakable in $H$ by \Cref{claim:DecompCase1} and \Cref{claim:DecompCase2}. each old bag is even $(2\lceil 1/\epsilon \rceil k + 3k,k)$-unbreakable in some subgraph of $H$.

\item To see that $(T,\beta)$ has adhesion $2\sigma = 2\lceil 1/\epsilon \rceil k + 2k$, it suffices to show that the root bag $X$ of $(T,\beta)$ has an intersection of size $2\sigma$ with the root bag of each child-subtree $(T_{D},\beta_{D})$. Indeed, observe that the intersection is exactly $N_{H}(D)$, so it has size at most $2\sigma$ (since $X$ has adhesion $2\sigma$ in $H$ by \Cref{claim:DecompCase1} and \Cref{claim:DecompCase2}).
\end{itemize}

\medskip

\noindent\underline{The tree size and the total bag size.} Next, we prove that $(T,\beta)$ from $\Decomp(H,B)$ satisfies (1) $|V(T)|\leq |V(H)|$, (2) the total bag size is at most $O(|V(H)|\sigma) = O(|V(H)|k/\epsilon)$. Recall that the definition of tree decomposition guarantees the $\beta(t)\setminus \sigma(t)$ for all $t\in V(T)$ are disjoint. Hence (1) holds because each $t\in V(T)$ has non-empty $\beta(t)\setminus \sigma(t)$ by \Cref{ob:TreeSize}. Next, (2) is because $\sum_{t\in V(T)}|\beta(t)| = \sum_{t\in V(T)}|\beta(t)\setminus \sigma(t)| + \sum_{t\in V(T)}|\sigma(t)| \leq |V(H)| + |V(T)|\cdot (2\sigma) = O(|V(H)|\sigma)$.

\begin{observation}
For each tree node $t\in V(T)$, its bag $\beta(t)$ is a strict super set of its adhesion $\sigma(t)$, i.e. $\beta(t)\supset \sigma(t)$.
\label{ob:TreeSize}
\end{observation}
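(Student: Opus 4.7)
The plan is to prove the observation by structural induction on the recursion tree of $\Decomp(H,B)$. For the root $r$ of the returned tree $T$, we have $\sigma(r)=\emptyset$ by definition, so the claim reduces to $\beta(r)\neq\emptyset$, which holds because $\beta(r)$ is either the full vertex set $V(H)$ (the corner case of Line~\ref{line:corner}) or the computed set $X\supseteq B$ from Cases 1 or 2; in either case the bag is non-empty whenever $H$ is.

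For a non-root node $t\in V(T)$, I would split into two sub-cases according to where $t$ lies. Every non-root $t$ sits inside some child-subtree $(T_D,\beta_D)$ produced by a recursive call $\Decomp(H',N_H(D))$, where $D$ is a connected component of $H\setminus X$ and $H'=H[D\cup N_H(D)]\setminus E(H[N_H(D)])$. If $t$ is not the root of $(T_D,\beta_D)$, then the parent of $t$ in $T$ coincides with its parent inside $T_D$, so $\sigma(t)$ computed in $T$ equals $\sigma(t)$ computed in $T_D$; induction on the recursive output $(T_D,\beta_D)$ then yields $\beta(t)\supsetneq\sigma(t)$ immediately.

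The only genuinely new case is when $t=r_D$, the root of $T_D$, attached as a child of the main root $r$. Here $\sigma(r_D)=\beta(r_D)\cap\beta(r)=\beta(r_D)\cap X$. Since the subproblem operates on $H'$ whose vertex set is $D\cup N_H(D)$, the bag satisfies $\beta(r_D)\subseteq D\cup N_H(D)$, and therefore $\beta(r_D)\cap X\subseteq(D\cup N_H(D))\cap X=N_H(D)$, using the fact that $D\cap X=\emptyset$. Conversely, the recursive specification guarantees that the boundary $N_H(D)$ is contained in the root bag $\beta(r_D)$, so equality holds: $\sigma(r_D)=N_H(D)$. It therefore suffices to show $\beta(r_D)\supsetneq N_H(D)$. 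If the recursion enters the corner case of Line~\ref{line:corner}, then $\beta(r_D)=V(H')=D\cup N_H(D)\supsetneq N_H(D)$ since the component $D$ is non-empty. Otherwise the non-corner branch is entered precisely when the computed set $X'\neq B'=N_H(D)$; combined with the invariant $X'\supseteq B'$ maintained throughout Cases 1 and 2, this again gives $\beta(r_D)=X'\supsetneq N_H(D)$.

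The main subtlety I expect is avoiding a direct inductive application to $r_D$: inside the returned subtree $T_D$ the root $r_D$ has $\sigma(r_D)=\emptyset$, which is \emph{not} the same as its adhesion in the composed tree $T$, where its parent becomes $r$. The proof therefore treats $r_D$ by hand via the vertex-set containment argument above, and only invokes the inductive hypothesis for nodes strictly below $r_D$ in $T_D$, where the parent relation (and hence $\sigma$) is preserved by grafting.
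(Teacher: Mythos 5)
Your proof is correct and follows essentially the same route as the paper's: identify that the adhesion of a node created by a recursive call $\Decomp(H',B')$ equals the boundary $B'$ of that call, and observe that the bag is either $V(H')$ (corner case) or the computed set $X'$, both of which strictly contain $B'$. The paper's version is a one-liner that takes the identity $\sigma(t)=B'$ for granted, whereas you verify it explicitly via $\beta(r_D)\cap X=N_H(D)$; this is a welcome but not essentially different elaboration.
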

\begin{proof}
To see this, in the recursion step creating $t$, we have $\sigma(t)= B$. Furthermore, $\beta(t) = V(H)$ if $X = B$ (the corner case) and $\beta(t) = X$ if $X \supset B$, so always $\beta(t)\supset \sigma(t)$. 
\end{proof}

\medskip

\noindent{\underline{The tree depth.}} We are now ready to show the depth of $T$ is $O(k\log|V(H)|/\epsilon)$. Consider a path from the root (upside) to a leaf (downside) in the recursion tree. We divide this path into two parts, the upper part and the lower part, at the upper most step $\Decomp(H',B')$ such that $|V(H')|\leq 10k\lceil 1/\epsilon\rceil$. 

For the lower part, i.e. from the step $\Decomp(H',B')$ to the leaf, trivially there are at most $|V(H')| = O(k/\epsilon)$ steps (note that the tree decomposition of $\Decomp(H',B')$ has at most $|V(H')|$ nodes).

Consider the upper part. Each step $\Decomp(H',B')$ on this part has $V(H')>10k\lceil 1/\epsilon\rceil$. The number of case-2 steps on the upper part is at most $O(\log n)$ by the following reasons. Let $\Decomp(H',B')$ be a case-2 step and $\Decomp(H'',B'')$ be its sub-step on this path. Because $X$ is $1/2$-balanced and has adhesion $2\sigma$ by \Cref{claim:DecompCase2}, we have $|V(H'')|\leq |V(H')|/2 + 2\sigma$, which implies $|V(H'')|\leq 5|V(H)|/6$ combining $2\sigma\leq |V(H)|/3$ (recall that $2\sigma = 2\lceil1/\epsilon\rceil k + 2k$ and $|V(H)|>10\lceil 1/\epsilon \rceil k$). Next, the number of \emph{consecutive} case-1 recursive steps is at most $\sigma$, because the earliest case-1 step has boundary size at most $2\sigma$ and the subsequent case-1 steps will keep reducing the boundary size by \Cref{claim:DecompCase1}, until the boundary size drops below $\sigma + 1$. Therefore, the total length of the upper part is at most $O(\log n)\cdot \sigma = O(k\log n/\epsilon)$.

In summary the recursion depth is at most $O(k\log n/\epsilon) + O(k/\epsilon) = O(k\log n/\epsilon)$.

\medskip

\noindent{\underline{Compactness and subtree unbreakability}.} Finally, the compactness and subtree unbreakability of $(T,\beta)$ simply follow from the algorithm.

\paragraph{Running Time of $\Decomp(G,\emptyset)$.}

Observe that, at each level $i$, the total graph size of level-$i$ steps is $O(m)$, because the recursive steps $\Decomp(H_{i}, B_{i})$ at a level $i$ are on edge-disjoint graphs $H_{i}$. Because there are $O(k\log n/\epsilon)$ levels, the total graph size of each level is $O(km\log n/\epsilon)$. In each recursive step $\Decomp(H,B)$, invoking \Cref{lemma:CheckUnbreakable,lemma:SepBySampleSet,thm:reducing_adhesion} takes time $2^{O(\frac{k}{\epsilon}\log\frac{k}{\epsilon})}|E(H)|^{1+o(1)+\epsilon}$. In summary, the total running time of $\Decomp(G,\emptyset)$ is $2^{O(\frac{k}{\epsilon}\log\frac{k}{\epsilon})}m^{1+o(1)+\epsilon}$.
\end{proof}

\subsection{Depth Reduction}
\label{sect:DepthReduction}

Finally, we discuss how to reduce the depth of the unbreakable decomposition from $O(k\log n/\epsilon)$ to $O(\log n)$, at a cost of bringing constant factors to the unbreakability and adhesion. Precisely, we will show how to compute a $(O(k/\epsilon),k)$-unbreakable decomposition with adhesion $O(k/\epsilon)$ and depth $O(\log n)$.

We will change \Cref{alg:Decomp} slightly as follows.
\begin{enumerate}
\item We will set $\sigma = 5\lceil 1/\epsilon \rceil k$ (instead of $\sigma = \lceil 1/k\rceil + k$).
\item Before Line 1, we add a termination condition: if $|V(H)|\leq 10\lceil 1/\epsilon\rceil k$, we return a trivial decomposition $(T,\beta)$ with a single bag $V(H)$. We point out that after adding this new termination condition, the algorithm will never go into \Cref{line:corner}, because by \Cref{claim:CornerCase}, there should be $|V(H)|\leq 2\sigma = 10\lceil 1/\epsilon\rceil k$ when we reach \Cref{line:corner}, which is impossible because of the new termination condition.
\item In \Cref{line:CheckUnbreakable}, we will check if $B$ is $(\lceil 1/\epsilon\rceil k + k, k)$-unbreakable (instead of $(k,k)$-unbreakable). This will lead to a new \Cref{claim:DecompCase1}:
\begin{itemize}
\item In Case 1, the set $X$ is $(O(k/\epsilon),k)$-unbreakable and has adhesion at most $|B| - \lceil 1/\epsilon \rceil k$.
\end{itemize}
The new claim can be shown in the same way as \Cref{claim:DecompCase1}. Roughly speaking, in case (a), $X$ is $(k+2\sigma,k)$-unbreakable and has adhesion at most $|B| - \lceil 1/\epsilon\rceil k$. In case (b), $X$ is $(2k\lceil 1/\epsilon\rceil + k, k)$-unbreakable and has adhesion $2k\lceil 1/\epsilon\rceil + k\leq \sigma\leq |B| - \lceil 1/\epsilon\rceil k$.
\end{enumerate}

Following the same argument in the proof of \Cref{thm:DetailedMain}, it is not hard to see that the resulting decomposition is $(O(k/\epsilon),k)$-unbreakable and has $O(k/\epsilon)$ adhesion. It also has all the other properties stated in \Cref{thm:DetailedMain}. In particular, the depth reduces to $O(\log n)$ by the following reasons. Consider a path from the root to an arbitrary leaf. The number of case-2 steps is at most $O(\log n)$ (note that there is no lower part because of the new termination condition). The number of consecutive case-2 steps is at most $\sigma/(\lceil 1/\epsilon\rceil k) = O(1)$, because every case-2 step will reduce the adhesion by $\lceil 1/\epsilon\rceil k$ (from modification 3).

\section{Application: \textsc{Minimum $p$-Way Cut}  in Close-to-Linear Time} 
\label{sec:app}

The goal of this section is to prove~\Cref{thm:kcut}.
We essentially use the same algorithm as that for {{\sc Auxiliary Multicut}} - Theorem 4.1 in~\cite{cygan2020randomized}, with some minor modifications. We remark that we give the full algorithm only to show that the algorithm indeed runs in close-to-linear time - indeed, the algorithm in~\cite{cygan2020randomized} itself does run in close-to-linear time modulo the construction of the decomposition, though this is not analyzed explicitly. Our algorithm is simpler to state since our unbreakable decomposition admits the subtree unbreakability property (see~\Cref{def:UnbreakableDecomp}).

Given a graph $G$, parameters $p$ and $k$, we need to decide if there exists a {\textsc{$p$-Way Cut}} of size at most $k$. Recall that a {\textsc{$p$-Way Cut}}  is a set of edges whose deletion creates $p$ connected components. We begin by computing a $(q,k)$ unbreakable rooted tree decomposition with adhesion at most $\sigma$ where $\sigma \leq q = \OO(\frac{k}{\epsilon})$ using~\Cref{thm:main}, where we set $\epsilon$ to be some small constant. Let $r$ be the root of this rooted tree decomposition. The running time for this step is $2^{\OO(\frac{k}{\epsilon}\log \frac{k}{\epsilon})}m^{1 + \epsilon}$.

\paragraph{Equivalent formulation.}Instead of checking if there exists a {\textsc{$p$-Way Cut}} of size at most $k$, equivalently, we can check if there is some $p$-coloring of the vertex set such that the number of edges with differently colored endpoints is at most $k$. Formally, in a graph $H$, a $p$-coloring of the vertex set is some function $h: V(H) \rightarrow [p]$. The cost of the $p$-coloring $cost(h)$ is the number of edges with differently colored endpoints. Given two functions $f:A \rightarrow C$ and $g:B \rightarrow C$ where $A \subseteq B$, we say that $g$ respects $f$ if $g$ agrees with $f$ on every $a \in A$. Recall~(see  \Cref{sec:preliminaries}) that given a node $t$ in the rooted tree in the unbreakable decomposition, $G_t$ is defined as $G[\gamma(t)] \setminus E(G[\sigma(t)])$. In words, $G_t$ is the graph induced on all vertices contained in bags of the sub-tree rooted at $t$, excluding the edges between the vertices in the adhesion $\sigma(t)$.

\paragraph{Dynamic programming on tree decomposition.} We use dynamic programming to solve the problem. For each tree node $t$ in the (unbreakable) tree decomposition, every $I \subseteq [p]$ and every $p$-coloring $f:\sigma(t) \rightarrow [p]$ of the adhesion $\sigma(t)$, we store an entry $M[t,f,I]$ in our table $M$. $M[t,f,I]$ is an integer value that denotes the cost of the minimum cost $p$-coloring $h: V(G_t) \rightarrow [p]$ in $G_t$ that  (a) respects $f$ (b) colors some vertex of $G_t$ with color $c$ for every $c \in I$ an (c) has cost at most $k$ in $G_t$. If there is no such coloring $h$, then we set $M[t,f,I] = \infty$. Then since for the root $r$, $G_r = G$ and $\sigma(r) = \emptyset$, $G$ has a {\textsc{$p$-Way Cut}} of size at most $k$ if and only if $M[r,f_\phi,[p]] \leq k$, where $f_\phi$ is an empty function (note that any coloring $h$ respects the empty function).

We now explain how to compute each entry $M[t,f,I]$. Since we compute these values bottom up with respect to the tree decomposition, we will inductively assume that we have already correctly computed the set of values $M[t',(.),(.)]$ for every child $t'$ of $t$ in the tree. Henceforth we assume that $M[t,f,I] < \infty$, if not, our algorithm will discover this and set $M[t,f,I] = \infty$. Fix a minimum cost $p$-coloring $h$ satisfying the properties listed above, such that $M[t,f,I]$ = cost($h$). Consider a restriction of this $p$-coloring $h$ to $\beta(t)$, the bag corresponding to node $t$. Let $g:\beta(t) \rightarrow [p]$ be this coloring.

\paragraph{Exploiting unbreakability.}The first observation is that since $\beta(t)$ is $(q,k)$ unbreakable in $G_t$, the total number of vertices of $\beta(t)$ in all except one colored class of $g$ is at most $3q$ - for otherwise, we would violate the definition of unbreakability. We show this formally in the next lemma.

\begin{lemma}\label{lemma:colorbound}
There exists at most one $c \in [p]$ such that there exist more than $q$ vertices $v \in \beta(t)$ for which $g(v) = c$. Further, the total number of vertices which have color $c' \neq c$ is at most $3q$.  
\end{lemma}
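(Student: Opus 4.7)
The plan is to convert the global edge-cost bound $\mathrm{cost}(h)\le k$ into vertex cuts of $G_t$ of size at most $k$ and then invoke $(q,k)$-unbreakability of $\beta(t)$ in $G_t$. Write $V_c := h^{-1}(c)\subseteq V(G_t)$ and recall $B_c := g^{-1}(c) = V_c\cap\beta(t)$, so the $V_c$'s partition $V(G_t)$ and the $B_c$'s partition $\beta(t)$. Given any partition of the color set $[p]=C_1\sqcup C_2$ into two nonempty parts, set $U_i := \bigcup_{c\in C_i} V_c$ and let $S_1\subseteq U_1$ be the set of $U_1$-endpoints of edges of $G_t$ that cross between $U_1$ and $U_2$. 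Since $\mathrm{cost}(h)\le k$ we have $|S_1|\le k$, and $L := U_1$, $R := U_2\cup S_1$ form a vertex cut of $G_t$ of size $|L\cap R|=|S_1|\le k$, with $|L\cap\beta(t)| = \sum_{c\in C_1}|B_c|$ and $|R\cap\beta(t)|\ge\sum_{c\in C_2}|B_c|$ (provided that $U_1\setminus S_1$ and $U_2$ are both nonempty, which will be checked in each use). This construction plus the bound $q\ge k$ available in our regime is the only tool we need.

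For the first claim, suppose two distinct colors $c_1,c_2$ both satisfy $|B_{c_1}|,|B_{c_2}|>q$. Apply the construction with $C_1=\{c_1\}$. The cut is valid because $|B_{c_1}|>q\ge k\ge|S_1|$ guarantees $U_1\setminus S_1\ne\emptyset$, and $|B_{c_2}|>0$ guarantees $U_2\ne\emptyset$. The resulting vertex cut has size at most $k$ while $|L\cap\beta(t)|\ge|B_{c_1}|>q$ and $|R\cap\beta(t)|\ge|B_{c_2}|>q$, contradicting the $(q,k)$-unbreakability of $\beta(t)$ in $G_t$. Hence at most one color $c$ has $|B_c|>q$.

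For the second claim, let $c$ be a color of maximum $|B_c|$ (which is the unique color with $|B_c|>q$, whenever one exists). Suppose toward contradiction $\sum_{c'\ne c}|B_{c'}|>3q$. If $|B_c|>q$, apply the construction with $C_1=\{c\}$; the two $\beta(t)$-sides then have sizes $|B_c|>q$ and $\sum_{c'\ne c}|B_{c'}|>3q>q$, a direct contradiction. Otherwise every $|B_{c'}|\le q$ and $|\beta(t)|>3q+|B_c|\ge 3q$. Build $C_1$ greedily by adding colors in decreasing order of $|B_{c'}|$ until $|U_1\cap\beta(t)|>q$; since each added block has size at most $q$, the stopping sum lies in $(q,2q]$, so $|U_2\cap\beta(t)|=|\beta(t)|-|U_1\cap\beta(t)|>3q-2q=q$, and $C_2$ is nonempty because $|\beta(t)|>2q$ strictly exceeds the stopping value. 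The associated vertex cut once again contradicts unbreakability.

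The only mildly delicate point is constructing a balanced two-part color partition when no color is individually large, which is handled by the greedy argument and relies only on the pigeonhole-style fact that each individual color contributes at most $q$ to $\beta(t)$. The remainder is the standard edge-cut-to-vertex-cut conversion, so the whole proof is essentially an accounting exercise.
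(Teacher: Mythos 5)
Your proof is correct and follows essentially the same route as the paper's: convert the bound $\mathrm{cost}(h)\le k$ into a vertex cut of $G_t$ of size at most $k$ and contradict the $(q,k)$-unbreakability of $\beta(t)$ in $G_t$, using a greedy/pigeonhole split of the small color classes for the $3q$ bound. If anything, you are slightly more careful than the paper in ensuring $(L,R)$ is a genuine vertex cut (covering all of $V(G_t)$ with both sides nonempty, via $q\ge k$), which is a welcome tightening rather than a deviation.
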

\begin{proof}
To prove the first part, suppose for contradiction that there exist $c_1,c_2 \in [p]$ with $c_1 \neq c_2$ such that there exist sets $D_1 \subseteq \beta(t)$ and $D_2 \subseteq \beta(t)$ with $g(v_1) = c_1$ for any $v_1 \in D_1$ and $g(v_2) = c_2$ for any $v_2 \in D_2$ with both $|D_1|,|D_2| \geq q + 1$.

Recall that we assumed that there exists a $p$-coloring $h$ of $G_t$ with cost at most $k$, whose restriction to $\beta(t)$ gives $g$. Let $S_1 = h^{-1}(c_1)$ and $S_2 = h^{-1}(c_2)$, where $h^{-1}(c_i)$ is the set of vertices $v \in V(G_t)$ for which $h(v) = c_i$, $i \in \{1,2\}$. Let $S_{12}$ be the set of vertices in $h^{-1}(c_1)$ which have some neighbor in $h^{-1}(c_2)$ in the graph $G_t$. Since the cost of $h$ is at most $k$, we must have $|S_{12}| \leq k$.   
Now consider the vertex cut $(L = S_1, R = S_{12} \cup S_2)$ in $G_t$. Indeed, $(L,R)$ is a vertex cut since $L \cap R = S_{12}$ separates $S_1 \setminus S_{12}$ from $S_2$ in $G_t$. Also we have $|L \cap R| = |S_{12}| \leq k$. Finally, $g$ is a restriction of $h$ so it follows that $D_1 \subseteq L$ and $D_2 \subseteq R$, and both $|D_1|, |D_2| \geq q + 1$. Hence $(L,R)$ is a $(q,k)$ breakable witness for $\beta(t)$ in $G_t$, which is a contradiction since $\beta(t)$ is $(q,k)$ unbreakable in $G_t$ (recall that our unbreakable decomposition admits subtree unbreakability, see~\Cref{def:UnbreakableDecomp}).%
Thus every colored class except one, say $c$, has at most $q$ vertices. 

Next, we show that the total number of vertices in $\beta(t)$ colored differently from $c$ is at most $3q$. If this is not the case, since $|h^{-1}(c') \cap \beta(t)| \leq q$ for each $c' \neq c$, we first observe that we can partition the set of colors $[p] \setminus \{c\}$ into two (disjoint) groups $X_1 \cup X_2$, such that there are at least $q + 1$ vertices of $\beta(t)$ having colors from $X_1$ and $X_2$.

For a set of colors $X$, let us define $h^{-1}(X)$ to be the set of vertices $v \in V(G_t)$ such that $h(v) \in X$. Then it follows that $|h^{-1}(X_1)|, |h^{-1}(X_2)| \geq q + 1$. Similar to the previous part, let us define $S_1' = h^{-1}(X_1)$, $S_2' = h^{-1}(X_2)$, and $S_{12}'$ to be the set of vertices in $S_1'$ that have a neighbor in $S_2'$. Again, it must be the case that $(L = S_1', R = S_{12}' \cup S_2')$ is a $(q,k)$ breakable witness for $\beta(t)$ in $G_t$, which is a contradiction.
\end{proof}

We guess the label of the color class $c \in [p]$ which has the maximum number of vertices $v \in \beta(t)$ satisfying $g(v) = c$, then we can potentially have more than $q$ vertices of that color, but not for any other color. Without loss of generality and for simplicity of exposition let us assume that this color $c$ is the color $p$.

\paragraph{Computing the table entries.}To compute $M[t,f, I]$, our goal will be to come up with a coloring $g^*: \beta(t) \rightarrow [p]$, and a set $I_{t'} \subseteq I$ for each child $t'$ of $t$ in the tree such that the following hold.

\begin{enumerate}
\item \label{item:opt1} $g^*$ respects $f$ 
\item \label{item:opt2} For every $i \in I$, either $i \in I_{t'}$ for some child $t'$ of $t$, or there exists some $v \in \beta(t)$ such that $g^*(v) = i$.

\end{enumerate}

Define the cost of $g^*$ as follows. Every edge $e = \{u,v\} \in E(\beta(t)) \setminus E(\sigma(t))$ contributes $1$ to the cost. For every child $t'$ of $t$, $t'$ contributes $M[t',\left.g^*\right|_{\sigma_{t'}},I_{t'}]$ to the cost, where $\left.g^*\right|_{\sigma_{t'}}$ denotes the restriction of $g^*$ to the set $\sigma(t')$. Then observe that since we correctly computed the values $M[t', (.),(.)]$ for each child $t'$ of $t$, such a coloring $g^*$ with minimum cost must have cost equal to $M[t,f,I]$.

Notice that in particular, $g$ itself is a coloring of minimum cost that satisfies these conditions. To find a coloring of minimum cost (possibly different from $g$), we use color coding together with another dynamic programming step. We start by describing the color coding step.

\paragraph{Color coding.}For every edge $e = \{u,v\}$ in $E(\beta(t)) \setminus E(\sigma(t))$, we say that $e$ is \emph{crossing} if $g(u) \neq g(v)$. Similarly, for every child node $t'$ of $t$ in the tree decomposition tree and adhesion $\sigma(t') \subseteq \beta(t)$, we say that $\sigma(t')$ is \emph{crossing} if there exists two vertices $u,v \in \sigma(t')$ such that $g(u) \neq g(v)$.

\begin{lemma}
The total number of crossing edges and crossing adhesions together is at most $k$.
\end{lemma}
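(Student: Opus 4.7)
The plan is to assign to each crossing edge and to each crossing adhesion a distinct edge of $G_t$ whose endpoints receive different colors under $h$, and then use the fact that $\mathrm{cost}(h)\le k$.

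First, every crossing edge $e=\{u,v\}\in E(\beta(t))\setminus E(\sigma(t))$ is itself an edge of $G_t$ with $h(u)=g(u)\ne g(v)=h(v)$, so it already witnesses one unit of cost. Next, for each crossing adhesion $\sigma(t')$, pick two vertices $u,v\in \sigma(t')$ with $g(u)\ne g(v)$. I will exhibit an edge of $G_{t'}$ with differently colored endpoints. By compactness of the decomposition (applied at the child node $t'$), $G[\alpha(t')]$ is connected and $N_G(\alpha(t'))=\sigma(t')$. Hence $u$ has a neighbor $u'\in\alpha(t')$, $v$ has a neighbor $v'\in\alpha(t')$, and there is a path from $u'$ to $v'$ inside $G[\alpha(t')]$. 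Prepending the edge $\{u,u'\}$ and appending $\{v',v\}$ gives a walk from $u$ to $v$ all of whose edges lie in $E(G_{t'})$, because each such edge has at least one endpoint in $\alpha(t')$ and therefore is not in $E(G[\sigma(t')])$. Since $h(u)\ne h(v)$, some consecutive pair of vertices on this walk must be colored differently by $h$, yielding a crossing edge $e_{t'}\in E(G_{t'})$.

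The key step is showing that the crossing edges in $E(\beta(t))\setminus E(\sigma(t))$ together with the edges $\{e_{t'}\}$ (over all crossing adhesions $\sigma(t')$) are pairwise distinct. The edge $e_{t'}$ has at least one endpoint in $\alpha(t')$, which by the tree-decomposition property $\beta(t)\cap\gamma(t')=\sigma(t')$ is disjoint from $\beta(t)$; this rules out any overlap with a crossing edge, whose endpoints both lie in $\beta(t)$. Moreover, for two distinct children $t_1',t_2'$ of $t$, the sets $\alpha(t_1')$ and $\alpha(t_2')$ are disjoint and each $e_{t_i'}$ has at least one endpoint in $\alpha(t_i')$ with the other endpoint in $\gamma(t_i')$, so $e_{t_1'}\ne e_{t_2'}$.

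Finally, all of these distinct edges lie in $E(G_t)$ and each has $h$-differently colored endpoints, so their total number is at most $\mathrm{cost}(h)\le k$. The only subtle point is the existence of the intra-$G_{t'}$ witness edge; everything else is a clean bookkeeping argument using $\beta(t)\cap\gamma(t')=\sigma(t')$ and the compactness of the decomposition (\Cref{thm:DetailedMain} guarantees the decomposition is compact).
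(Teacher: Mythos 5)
Your proof is correct and follows essentially the same route as the paper's: charge each crossing edge to itself and each crossing adhesion, via compactness (connectivity of $G[\alpha(t')]$ and $N_{G_{t'}}(\alpha(t'))=\sigma(t')$), to a cut edge of $h$ inside $G_{t'}$, then use the edge-disjointness of the $G_{t'}$ and $\mathrm{cost}(h)\le k$. You merely make explicit the walk argument and the pairwise-distinctness bookkeeping that the paper leaves implicit.
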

\begin{proof}
The proof follows from the compactness (see \Cref{sec:preliminaries}) property of the decomposition. Compactness implies that for each child $t'$ of $t$, $G[\alpha(t')] = G[\gamma(t') \setminus \sigma(t')]$ is connected and $N_{G_{t'}}(\alpha(t')) = \sigma(t')$. Also, the graphs $G_{t'} = G[\gamma(t')] \setminus E(\sigma(t'))$ are disjoint across all children $t'$ of $t$. If $\sigma(t')$ is crossing, then there must exist some two vertices $u,v \in \sigma(t')$ such that $g(u) \neq g(v)$, and hence each crossing adhesion contributes $1$ to the cost of $h$.

Similarly, every crossing edge $e \in E(\beta(t)) \setminus E(\sigma(t))$ contributes $1$ to the cost of $h$. But the cost of $h$ in $G_t$ is at most $k$, thus the total number of crossing edges and adhesions together is at most $k$.
\end{proof}

Let $A_1, A_2 \ldots A_p$ be the set of all vertices of $\beta(t)$ colored $1,2 \ldots p$ respectively in $g$. Let $A^* = \bigcup_{i = 1}^{p-1} A_i$ be the set of vertices colored with colors $1,2 \ldots p-1$. Define $B^* \subseteq A_p$ as the set of all vertices $v \in A_p$ for which there is either (a) an edge $\{u,v\} \in E(\beta(t)) \setminus E(\sigma(t))$ with $u \in A^*$ or (b) some child node $t'$ of $t$ such that $\sigma(t')$ is crossing, and $v \in \sigma(t')$.
 
Apply~\Cref{lemma:hitmiss} to the sets $A_1, A_2 \ldots A_{p-1}, B^*$ to obtain a family of functions $\mathcal{F}$. The running time is at most $2^{\OO(q \log q)} n \log^2 n$, since $\sum_{i = 1}^{k-1} A_i \leq 3q$ by~\Cref{lemma:colorbound} and $|B^*| \leq qk\sigma \leq q^3$, where $\sigma$ is the adhesion of the unbreakable decomposition. Then we know that there exists some function $g' \in \mathcal{F}$ that agrees with $g$ on these sets. Henceforth we fix this $g'$ (we try every function from $\mathcal{F}$). Before we proceed further, for every $v \in \sigma(t)$, we make $g'$ agree with $f$ by setting $g'(v) =  f(v)$. Let us also assume that at least one vertex of $\beta(t)$ is colored $p$ in $g'$ (note that this is true for $g$ since $p$ is the color with maximum frequency; hence this vertex can be included in the color coding analysis to ensure this, but we avoid explicity stating this to avoid confusion). Now we set $I = I \setminus \{p\}$.

Next, consider the graph $H$ obtained as follows. The vertex set $V(H) = \beta(t)$. We add into $E(H)$ every edge of $E(G_t) \setminus E(\sigma_t)$.    Further for every child node $t'$ of $t$, we turn the adhesion $\sigma(t')$ into a clique. Formally, for every $u ,v \in \sigma(t')$ with $u \neq v$ we add the edge $\{u,v\}$ to $E(H)$.

Let $H^*$ be the subgraph of $H$ induced on the vertices which are colored $\{1,2 \ldots p-1\}$ by $g'$, and let $\{C_1, C_2 \ldots C_{\ell}\}$ be the connected components of $H^*$. We have the following observation.

\begin{observation}
For connected component $C_i$, $i \in [\ell]$, either $g'$ agrees with $g$ on vertices of $C_i$ or $g(v) = p$ for each $v \in C_i$. 
\end{observation}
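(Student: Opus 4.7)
The plan is to show that no edge of $H^*$ can connect a vertex whose $g$-color lies in $\{1,\ldots,p-1\}$ to one whose $g$-color is $p$. Once this is established, the observation follows immediately: each connected component $C_i$ of $H^*$ is then $g$-monochromatic in the weak sense that $C_i \subseteq A^* := A_1\cup\cdots\cup A_{p-1}$ or $C_i \subseteq A_p$. In the first case $g'$ agrees with $g$ on $C_i$, because the color-coding guarantee from \Cref{lemma:hitmiss} forces $g'(v)=i=g(v)$ for any $v\in A_i$ with $i<p$; in the second case $g(v)=p$ for every $v\in C_i$ by definition of $A_p$.

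The main (and essentially only) step is this non-crossing property. I would first note that the vertex set of $H^*$ lies in $A^*\cup (A_p\setminus B^*)$: vertices of $B^*$ are excluded because the color-coding guarantee ensures $g'$ agrees with $g$ on $B^*$, so every vertex of $B^*$ has $g'$-color equal to $p$ and hence does not appear in $H^*$. Then, for contradiction, I would pick an arbitrary edge $\{u,v\}\in E(H^*)$ with $u\in A^*$ and $v\in A_p\setminus B^*$ and run a short case analysis on the two possible origins of the edge in $H$.

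In the first case the edge comes from $E(\beta(t))\setminus E(\sigma(t))$, so part (a) of the definition of $B^*$ forces $v\in B^*$, contradicting $v\in A_p\setminus B^*$. In the second case the edge comes from turning some child adhesion $\sigma(t')$ into a clique; then $u,v\in\sigma(t')$ have different $g$-colors, so $\sigma(t')$ is crossing, and part (b) of the definition of $B^*$ again forces $v\in B^*$. Either way we reach a contradiction, which completes the argument. I do not expect any real obstacle here beyond carefully unwinding the three definitions (of $B^*$, of the edge set of $H$, and of the color-coding family $\mathcal{F}$).
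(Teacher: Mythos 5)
Your proposal is correct and follows essentially the same route as the paper's proof: both hinge on the observation that any $H$-edge joining a vertex of $A^*$ to a vertex of $A_p$ forces the $A_p$-endpoint into $B^*$ (via clause (a) for bag edges and clause (b) for adhesion-clique edges), hence out of $H^*$, so each component of $H^*$ lies entirely in $A^*$ (where $g'=g$ by color coding) or entirely in $A_p$. The only difference is presentational — you isolate the ``no mixed edge in $H^*$'' claim up front, whereas the paper cases directly on the component.
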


\begin{proof}
Note that if an edge $\{u,w\} \in E(H)$ is such that $g(u) \neq g(w)$, then by our color coding requirements, $g'(u) = g(u)$ and $g'(w) = g(w)$. Also if $g(u),g(w) \in [p-1]$, then $g'(u) = g(u)$ and $g'(w) = g(w)$.

Consider two cases. The first case is when in $g$, each vertex of $C_i$ is colored  from the set of colors $[p-1]$. In this case, $g$ and $g'$ agree on every vertex of $C_i$.

Otherwise, there exists a vertex of $C_i$ that is colored with color $p$ in $g$. We have two subcases now. Either every vertex of $C_i$ is colored $p$, or there is some edge $\{u,w\} \in E(H)$ such that $g(u) \in [p-1]$ and $g(w) = p$. But in the latter case, by construction, we have $w \in B^*$, and hence $g'(w) = g(w) = p$, which is a contradiction since every vertex of $C_i$ is colored from the set $[p-1]$ in $g'$.
\end{proof}

Thus there exists an optimal coloring (the coloring $g$), which for each connected component $C_i$, $i \in [\ell]$, either colors $C_i$ consistently with $g'$ or colors every vertex of $C_i$ with the color $p$. Equivalently, in order to compute a coloring $g^*$ of minimum cost satisfying~\cref{item:opt1,item:opt2}, we start with the coloring $g^* = g''$ which colors every vertex in $\beta(t)$ with the color $p$, and then decide to \emph{flip} some connected components $C_i$, $i \in [\ell]$, so that we change the colors of vertices of $C_i$ in $g^*$ to color them consistently with $g'$.

For each component $C_i$, $i \in [\ell]$, define the \emph{cost of flipping} as $$\text{flip}(C_i) = \sum_{\{u,v\} \in E(\beta(t)) \setminus E(\sigma(t))
} \mathbb{I}({u,v \in C_i\;\text{and}\;g'(u) \neq g'(v)}) + \sum_{\{u,v\} \in E(\beta(t)) \setminus E(\sigma(t))} \mathbb{I}(u \in C_i, v \notin C_i).$$ 

Essentially this is the cost for flipping $C_i$: we charge for all edges in the bag $\beta(t)$ with differently colored endpoints inside $C_i$, and for all edges which have exactly one endpoint in $C_i$ (note that in this case the other endpoint is colored $p$). Before we proceed further, we note that if a component $C_i$ contains a vertex $v \in C_i \cap \sigma(t)$, then since $g'(v) = f(v) = g(v) \neq p$, this already fixes the decision for this component: we flip $C_i$.

\paragraph{Flipping components using dynamic programming.}It now suffices to decide which subset of components to flip. To accomplish this we will yet again use dynamic programming. For this, we need a few more observations and definitions.

First, observe that every adhesion of a child node is a clique in $H$, and hence can intersect at most one $C_i$, $i \in [\ell]$. For each component $C_i$, let $x_i$ be the number of children $t'$ of the node $t$ whose adhesion $\sigma(t')$ intersects  $C_i$. Arbitrarily order these child nodes by numbering them $(i,1)$, $(i,2)$, $(i,3)$ \ldots $(i,x_i)$, so that $t'_{i',j'}$ refers to the child node labelled $(i',j')$. Thus $\sigma(t'_{i',j'})$ refers to the adhesion of the child node labeled $(i',j')$. For simplicity, let us assume that for every child node $t'$ of $t$, $\sigma(t')$ intersects some $C_i$, $i \in [\ell]$ - if not, this is easy to handle using a simple modification to the dynamic programming.

Given a coloring $\psi: \beta(t) \rightarrow [p]$, $i \in [\ell]$ and $j \in [x_i]$, a \emph{flip set} $F \subseteq [i]$ and assignments $I_{t'_{i',j'}} \subseteq I$ to each child node numbered $(i',j')$ satisfying either (a) $i' < i$, or (b) $i' = i$ and $j' \leq j$, the \emph{$i,j$-partial cost} of $\psi$ is defined as follows:

\begin{itemize}
\item Each component $C_{i'}$, such that $i' \in F$, contributes a cost of $\text{flip}(C_i)$.
\item Each child node $t'_{i',j'}$ such that either (a) $i' < i$, or (b) $i' = i$ and $j' \leq j$ contributes a cost $M[t', \left.\psi\right|_{\sigma_{t'_{i',j'}}}, I_{t'_{i',j'}}]$ where $\left.\psi\right|_{\sigma_{t'_{i',j'}}}$ is the restriction of $\psi$ to $\sigma(t'_{i',j'})$.

\end{itemize}

 We create a dynamic programming table, whose each entry is of the form $T[i][j][I_0][b]$, $i \in \{0, 1, 2 \ldots \ell\}, j \in \{0,1, \ldots x_i\}, I_0 \subseteq I, b \in \{0,1\}$. $T[i][j][I_0][b]$ denotes the minimum possible $i,j$-partial cost across all choices of colorings $\psi: \beta(t) \rightarrow [p]$, flip sets $F \subseteq [i]$ and sets $I_{t'_{i',j'}} \subseteq I_0$ for all $(i',j')$ such that either $i' < i$ or $i' = i, j' \leq j$, while maintaining the following properties.

\begin{itemize}
\item  $\psi$ is obtained from $g''$ by flipping some subset of components with indices equal to $F \subseteq [i]$.
\item If $b = 1$, then $i \in F$, else $i \notin F$.
\item For each $i' \leq i$, if $C_{i'}
\cap \sigma(t) \neq \emptyset$, then $i' \in F$.

\item For every $i_0 \in I_0$, either there exists some vertex $v \in \beta(t)$ with $\psi(v) = i_0$, or there exists a child node of $t$ labelled $i',j'$ with either (a) $i' <i$ or (b) $i' = i$ and $j'\leq j$ such that $v \in I_{t'_{i',j'}}$.
\end{itemize}

The updates to $T[i][j][I_0][b]$ are natural. For the base cases, we set $T[0][0][\emptyset][0] = T[0][0][\emptyset][1] = 0$. For any other set $I_0$ and $b' \in \{0,1\}$ we set $T[0][0][I_0][b'] = \infty$. If $i > 0$ and $j = 0$, there are two cases. If $C_i \cap \sigma(t) \neq \emptyset$, then we set $T[i][j][I_0][0] = \infty$. Else we just set $T[i][j][I_0][0] = \max\{T[i-1][x_{i-1}][I_0][0], T[i-1][x_{i-1}][I_0][1]\}$. In either case we set $T[i][j][I_0][1] = \max\{T[i-1][x_{i-1}][I_0 \setminus I_i][0], T[i-1][x_{i-1}][I_0 \setminus I_i][1]\} + \text{flip}(C_i)$, where $I_i$ is the set of colors given to vertices in $C_i$ in $g'$.

For $j > 0$, if $\sigma(t) \cap C_i \neq \emptyset$ we set $T[i][j][I_0][0] = \infty$. Otherwise, we set 
$$T[i][j][I_0][0] = \min_{I_0'' \subseteq I_0} M[t'_{{i,j}}][f_0][I_0''] + T[i][j-1][I_0 \setminus I_0''][0].$$ 
In either case, we set

$$T[i][j][I_0][1] = \min_{I_0'' \subseteq I_0} M[t'_{i,j}][f_1][I_0''] + T[i][j-1][I_0 \setminus I_0''][1].$$

Here $f_0$ is the constant function that assigns to every vertex of $\sigma(t'_{i,j})$ the color $p$, and $f_1$ is the function that colors each vertex of $\sigma(t'_{i,j})$ the same color as that in $g'$.

Finally, we set $M[t,f,I]$ to be $\min(T[\ell][x_{\ell}][I_0][0], T[\ell][x_{\ell}][I_0][1]\}$. Recall that in reality, we repeat this process for every function $g' \in \mathcal{F}$ from the color coding step: therefore we define $M[t,f,I]$ to be the minimum over this result for every function in $\mathcal{F}$. If this quantity is more than $k$ for every choice of the coloring from $\mathcal{F}$ obtained, we just set $M[t,f,I]$ to $\infty$.

\paragraph{Running time.} The running time analysis is rather straightforward. Let us first analyse the run-time given the unbreakable decomposition. Fix an entry $M[t,f,I]$. There are at most $p^{\OO(\sigma)}$ such entries for every tree node $t$. To compute each entry, we first do a color coding step and a dynamic programming step which constructs the table $T$. Let $x_t$ be the total number of children of $t$ in the tree decomposition. Let $n_{tv} = |\beta(t)|$ be the number of vertices in the bag and $n_{te}$ be the number of edges in $E(\beta(t)) \setminus E(\sigma(t))$.%

The color coding step takes time $\OO(2^{\OO(\frac{k}{\epsilon} \log \frac{k}{\epsilon})} (n_{tv} + n_{te}) \log^2 n)$.

There are at most $(x_t + n_{tv}) \cdot 2^{p} \cdot 2 = \OO((x_t + n_{tv}) \cdot 2^{p})$ entries in the table $T$. Each entry is computed in time $\OO(2^{p})$. Thus the total running time to compute $T$ is $\OO((x_t + n_{tv})2^{p})$. Summing across all nodes $t$, using the facts that $\sum_t (n_{tv} + n_{te}) = \OO(m + kn/\epsilon)$ ($\sum_{t}n_{tv}\leq \OO(kn/\epsilon)$ is from \Cref{thm:DetailedMain} and $\sum_{t} n_{te} = \OO(m)$ is from the nature of tree decomposition), and further noting that $\sum_t x_t = \OO(n)$ by~\Cref{thm:DetailedMain}, the total running time is at most $2^{\OO(\frac{k}{\epsilon} \log \frac{k}{\epsilon})} (m + n) \log^2 n$.

Finally, the unbreakable decomposition itself is computed using~\Cref{thm:DetailedMain}, and this is the bottleneck for the running time. The running time of the entire algorithm is therefore $2^{\OO(\frac{k}{\epsilon} \log \frac{k}{\epsilon})}m^{1 + \epsilon}$. This concludes the run-time analysis.

\section{Conclusion and Open Problems}
\label{sec:conclusion}
\Cref{thm:DetailedMain} gives the first close-to-linear time FPT algorithm for unbreakable decomposition. Our decomposition also has optimal unbreakability and adhesion parameters up to a constant factor. This removes the bottleneck to fast FPT algorithms for numerous problems. 

Below, we list some exciting potential applications that relied on unbreakable decomposition:

\begin{enumerate}

\item Can we find a {\textsc{Minimum Bisection}} of size $k$ in $\OO_{k}(m^{1+\epsilon})$ time, if it exists?

\item Can we improve the preprocessing time of the $\OO(1)$-update-time connectivity oracle under $\OO(1)$ vertex failures from \cite{pilipczuk2021algorithms} to close-to-linear time? What is the optimal trade-off for space and preprocessing time when the update time is $\OO(1)$?

\item Let $G$ be a graph that excludes a fixed topological minor. Can we decide whether $G$ satisfies an $\mathsf{FO+conn}$ sentence or an $\mathsf{FO+DP}$ sentence in $\OO(m^{1+\epsilon})$ time? This would be an improvement over the $\OO(n^{3})$ bound given by \cite{pilipczuk2021algorithms,schirrmacher2024model}. It is worth noting that there are already a linear time algorithm for deciding an $\mathsf{MSO}_{2}$ sentence in bounded-treewidth graphs \cite{courcelle1990monadic}, and a close-to-linear time algorithm for deciding an $\mathsf{FO}$ sentence in nowhere dense graphs \cite{grohe2017deciding}.
\end{enumerate}

Can our running time be improved further? More specifically, can one compute $(O(k),k)$-unbreakable decomposition with adhesion $O(k)$ in $O_k(m\log^{O(1)} m)$ or $O_k(m^{1+o(1)})$) time? 

Moreover, we can hope to remove the exponential dependency on $k$ in the edge-cut version. 
As shown in \cite{lokshtanov2022parameterized}, a $(\poly(k),k)$-\emph{edge}-unbreakable decomposition can be computed in polynomial time with no exponential dependency on $k$. 
A near-linear time algorithm for computing an $(\poly(k \log n),k)$-edge-unbreakable decomposition will likely imply a nice application, i.e., a near-linear time $(1+\epsilon)$-approximation FPT algorithm for \textsc{Minimum $p$-Way Edge-Cut} parameterized by $p$, improving the polynomial running time of the algorithm in \cite{lokshtanov2022parameterized}.

\bibliographystyle{alpha}
\bibliography{references}

\newcommand{\etalchar}[1]{$^{#1}$}
\begin{thebibliography}{KKPW24}

\bibitem[AKP{\etalchar{+}}22]{agrawal2022fixed}
Akanksha Agrawal, Lawqueen Kanesh, Fahad Panolan, MS~Ramanujan, and Saket
  Saurabh.
\newblock A fixed-parameter tractable algorithm for elimination distance to
  bounded degree graphs.
\newblock {\em SIAM Journal on Discrete Mathematics}, 36(2):911--921, 2022.

\bibitem[AKT21]{abboud2021subcubic}
Amir Abboud, Robert Krauthgamer, and Ohad Trabelsi.
\newblock Subcubic algorithms for gomory--hu tree in unweighted graphs.
\newblock In {\em Proceedings of the 53rd Annual ACM SIGACT Symposium on Theory
  of Computing}, pages 1725--1737, 2021.

\bibitem[CCH{\etalchar{+}}16]{chitnis2016designing}
Rajesh Chitnis, Marek Cygan, MohammadTaghi Hajiaghayi, Marcin Pilipczuk, and
  Micha{\l} Pilipczuk.
\newblock Designing fpt algorithms for cut problems using randomized
  contractions.
\newblock {\em SIAM Journal on Computing}, 45(4):1171--1229, 2016.
\newblock Announced at FOCS'12.

\bibitem[CDK{\etalchar{+}}21]{chalermsook2021vertex}
Parinya Chalermsook, Syamantak Das, Yunbum Kook, Bundit Laekhanukit, Yang~P
  Liu, Richard Peng, Mark Sellke, and Daniel Vaz.
\newblock Vertex sparsification for edge connectivity.
\newblock In {\em Proceedings of the 2021 ACM-SIAM Symposium on Discrete
  Algorithms (SODA)}, pages 1206--1225. SIAM, 2021.

\bibitem[CHM13]{chitnis2013fixed}
Rajesh Chitnis, MohammadTaghi Hajiaghayi, and D{\'a}niel Marx.
\newblock Fixed-parameter tractability of directed multiway cut parameterized
  by the size of the cutset.
\newblock {\em SIAM Journal on Computing}, 42(4):1674--1696, 2013.

\bibitem[CKL{\etalchar{+}}20]{cygan2020randomized}
Marek Cygan, Pawe{\l} Komosa, Daniel Lokshtanov, Marcin Pilipczuk, Micha{\l}
  Pilipczuk, Saket Saurabh, and Magnus Wahlstr{\"o}m.
\newblock Randomized contractions meet lean decompositions.
\newblock {\em ACM Transactions on Algorithms (TALG)}, 17(1):1--30, 2020.

\bibitem[CLL{\etalchar{+}}08]{chen2008fixed}
Jianer Chen, Yang Liu, Songjian Lu, Barry O'sullivan, and Igor Razgon.
\newblock A fixed-parameter algorithm for the directed feedback vertex set
  problem.
\newblock In {\em Proceedings of the fortieth annual ACM symposium on Theory of
  computing}, pages 177--186, 2008.

\bibitem[CLP{\etalchar{+}}19]{cygan2014minimum}
Marek Cygan, Daniel Lokshtanov, Marcin Pilipczuk, Micha{\l} Pilipczuk, and
  Saket Saurabh.
\newblock Minimum bisection is fixed-parameter tractable.
\newblock {\em SIAM Journal on Computing}, 48(2):417--450, 2019.

\bibitem[Cou90]{courcelle1990monadic}
Bruno Courcelle.
\newblock The monadic second-order logic of graphs. i. recognizable sets of
  finite graphs.
\newblock {\em Information and computation}, 85(1):12--75, 1990.

\bibitem[CPPW13]{cygan2013multiway}
Marek Cygan, Marcin Pilipczuk, Micha{\l} Pilipczuk, and Jakub~Onufry
  Wojtaszczyk.
\newblock On multiway cut parameterized above lower bounds.
\newblock {\em ACM Transactions on Computation Theory (TOCT)}, 5(1):1--11,
  2013.

\bibitem[CQ21]{chekuri2021isolating}
Chandra Chekuri and Kent Quanrud.
\newblock Isolating cuts,(bi-) submodularity, and faster algorithms for
  connectivity.
\newblock In {\em 48th International Colloquium on Automata, Languages, and
  Programming (ICALP 2021)}. Schloss Dagstuhl-Leibniz-Zentrum f{\"u}r
  Informatik, 2021.

\bibitem[FM06]{fm06}
Uriel Feige and Mohammad Mahdian.
\newblock Finding small balanced separators.
\newblock In {\em Proceedings of the thirty-eighth annual ACM symposium on
  Theory of computing}, pages 375--384, 2006.

\bibitem[GKS17]{grohe2017deciding}
Martin Grohe, Stephan Kreutzer, and Sebastian Siebertz.
\newblock Deciding first-order properties of nowhere dense graphs.
\newblock {\em Journal of the ACM (JACM)}, 64(3):1--32, 2017.

\bibitem[GLL18a]{gupta2018faster}
Anupam Gupta, Euiwoong Lee, and Jason Li.
\newblock Faster exact and approximate algorithms for k-cut.
\newblock In {\em 2018 IEEE 59th Annual Symposium on Foundations of Computer
  Science (FOCS)}, pages 113--123. IEEE, 2018.

\bibitem[GLL18b]{gupta2018fpt}
Anupam Gupta, Euiwoong Lee, and Jason Li.
\newblock An fpt algorithm beating 2-approximation for k-cut.
\newblock In {\em Proceedings of the Twenty-Ninth Annual ACM-SIAM Symposium on
  Discrete Algorithms}, pages 2821--2837. SIAM, 2018.

\bibitem[Gui11]{guillemot2011fpt}
Sylvain Guillemot.
\newblock Fpt algorithms for path-transversal and cycle-transversal problems.
\newblock {\em Discrete Optimization}, 8(1):61--71, 2011.

\bibitem[HLW21]{he2021near}
Zhiyang He, Jason Li, and Magnus Wahlstr{\"o}m.
\newblock Near-linear-time, optimal vertex cut sparsifiers in directed acyclic
  graphs.
\newblock In {\em 29th Annual European Symposium on Algorithms (ESA 2021)}.
  Schloss-Dagstuhl-Leibniz Zentrum f{\"u}r Informatik, 2021.

\bibitem[ILSS23]{inamdar2023parameterized}
Tanmay Inamdar, Daniel Lokshtanov, Saket Saurabh, and Vaishali Surianarayanan.
\newblock Parameterized complexity of fair bisection:(fpt-approximation meets
  unbreakability).
\newblock In {\em 31st Annual European Symposium on Algorithms (ESA 2023)}.
  Schloss Dagstuhl-Leibniz-Zentrum f{\"u}r Informatik, 2023.

\bibitem[Iwa16]{iwata2016linear}
Yoichi Iwata.
\newblock Linear-time kernelization for feedback vertex set.
\newblock {\em arXiv preprint arXiv:1608.01463}, 2016.

\bibitem[IWY16]{iwata2016half}
Yoichi Iwata, Magnus Wahlstrom, and Yuichi Yoshida.
\newblock Half-integrality, lp-branching, and fpt algorithms.
\newblock {\em SIAM Journal on Computing}, 45(4):1377--1411, 2016.

\bibitem[IYY18]{iwata20180}
Yoichi Iwata, Yutaro Yamaguchi, and Yuichi Yoshida.
\newblock 0/1/all csps, half-integral a-path packing, and linear-time fpt
  algorithms.
\newblock In {\em 2018 IEEE 59th Annual Symposium on Foundations of Computer
  Science (FOCS)}, pages 462--473. IEEE, 2018.

\bibitem[KKPW21]{kim2021flow}
Eun~Jung Kim, Stefan Kratsch, Marcin Pilipczuk, and Magnus Wahlstr{\"o}m.
\newblock Flow-augmentation i: directed graphs.
\newblock {\em arXiv preprint arXiv:2111.03450}, 2021.

\bibitem[KKPW24]{kim2024flow}
Eun~Jung Kim, Stefan Kratsch, Marcin Pilipczuk, and Magnus Wahlstr{\"o}m.
\newblock Flow-augmentation ii: Undirected graphs.
\newblock {\em ACM Transactions on Algorithms}, 20(2):1--26, 2024.

\bibitem[KL20]{kawarabayashi2020nearly}
Ken-ichi Kawarabayashi and Bingkai Lin.
\newblock A nearly 5/3-approximation fpt algorithm for min-k-cut.
\newblock In {\em Proceedings of the Fourteenth Annual ACM-SIAM Symposium on
  Discrete Algorithms}, pages 990--999. SIAM, 2020.

\bibitem[Kle04]{kleinberg2004detecting}
Jon Kleinberg.
\newblock Detecting a network failure.
\newblock {\em Internet Mathematics}, 1(1):37--55, 2004.

\bibitem[KMP{\etalchar{+}}24]{kim2024weighted}
Eun~Jung Kim, Tom{\'a}{\v{s}} Masa{\v{r}}{\'\i}k, Marcin Pilipczuk, Roohani
  Sharma, and Magnus Wahlstr{\"o}m.
\newblock On weighted graph separation problems and flow augmentation.
\newblock {\em SIAM Journal on Discrete Mathematics}, 38(1):170--189, 2024.

\bibitem[KPS24]{korhonen2024minor}
Tuukka Korhonen, Micha{\l} Pilipczuk, and Giannos Stamoulis.
\newblock Minor containment and disjoint paths in almost-linear time.
\newblock {\em arXiv preprint arXiv:2404.03958}, 2024.

\bibitem[KT11]{kawarabayashi2011minimum}
Ken-ichi Kawarabayashi and Mikkel Thorup.
\newblock The minimum k-way cut of bounded size is fixed-parameter tractable.
\newblock In {\em 2011 IEEE 52nd Annual Symposium on Foundations of Computer
  Science}, pages 160--169. IEEE, 2011.

\bibitem[KW12]{kratsch2012representative}
Stefan Kratsch and Magnus Wahlstr{\"o}m.
\newblock Representative sets and irrelevant vertices: New tools for
  kernelization.
\newblock In {\em 2012 IEEE 53rd Annual Symposium on Foundations of Computer
  Science}, pages 450--459. IEEE, 2012.

\bibitem[KW14]{kratsch2014compression}
Stefan Kratsch and Magnus Wahlstr{\"o}m.
\newblock Compression via matroids: a randomized polynomial kernel for odd
  cycle transversal.
\newblock {\em ACM Transactions on Algorithms (TALG)}, 10(4):1--15, 2014.

\bibitem[LM13]{lokshtanov2013clustering}
Daniel Lokshtanov and D{\'a}niel Marx.
\newblock Clustering with local restrictions.
\newblock {\em Information and Computation}, 222:278--292, 2013.

\bibitem[LNP{\etalchar{+}}21]{li2021vertex}
Jason Li, Danupon Nanongkai, Debmalya Panigrahi, Thatchaphol Saranurak, and
  Sorrachai Yingchareonthawornchai.
\newblock Vertex connectivity in poly-logarithmic max-flows.
\newblock In {\em Proceedings of the 53rd Annual ACM SIGACT Symposium on Theory
  of Computing}, pages 317--329, 2021.

\bibitem[LNPS23]{li2023near}
Jason Li, Danupon Nanongkai, Debmalya Panigrahi, and Thatchaphol Saranurak.
\newblock Near-linear time approximations for cut problems via fair cuts.
\newblock In {\em Proceedings of the 2023 Annual ACM-SIAM Symposium on Discrete
  Algorithms (SODA)}, pages 240--275. SIAM, 2023.

\bibitem[LNR{\etalchar{+}}14]{lokshtanov2014faster}
Daniel Lokshtanov, NS~Narayanaswamy, Venkatesh Raman, MS~Ramanujan, and Saket
  Saurabh.
\newblock Faster parameterized algorithms using linear programming.
\newblock {\em ACM Transactions on Algorithms (TALG)}, 11(2):1--31, 2014.

\bibitem[LP20]{li2020deterministic}
Jason Li and Debmalya Panigrahi.
\newblock Deterministic min-cut in poly-logarithmic max-flows.
\newblock In {\em 2020 IEEE 61st Annual Symposium on Foundations of Computer
  Science (FOCS)}, pages 85--92. IEEE, 2020.

\bibitem[LP21]{li2021approximate}
Jason Li and Debmalya Panigrahi.
\newblock Approximate gomory--hu tree is faster than n--1 max-flows.
\newblock In {\em Proceedings of the 53rd Annual ACM SIGACT Symposium on Theory
  of Computing}, pages 1738--1748, 2021.

\bibitem[LRSZ18]{lokshtanov2018reducing}
Daniel Lokshtanov, MS~Ramanujan, Saket Saurabh, and Meirav Zehavi.
\newblock Reducing cmso model checking to highly connected graphs.
\newblock In {\em 45th International Colloquium on Automata, Languages, and
  Programming (ICALP 2018)}. Schloss-Dagstuhl-Leibniz Zentrum f{\"u}r
  Informatik, 2018.

\bibitem[LSS22]{lokshtanov2022parameterized}
Daniel Lokshtanov, Saket Saurabh, and Vaishali Surianarayanan.
\newblock A parameterized approximation scheme for min k-cut.
\newblock {\em SIAM Journal on Computing}, (0):FOCS20--205, 2022.

\bibitem[LSSZ19]{lokshtanov2019balanced}
Daniel Lokshtanov, Saket Saurabh, Roohani Sharma, and Meirav Zehavi.
\newblock Balanced judicious bipartition is fixed-parameter tractable.
\newblock {\em SIAM Journal on Discrete Mathematics}, 33(4):1878--1911, 2019.

\bibitem[Mar06]{marx2006parameterized}
D{\'a}niel Marx.
\newblock Parameterized graph separation problems.
\newblock {\em Theoretical Computer Science}, 351(3):394--406, 2006.

\bibitem[MOR13]{marx2013finding}
D{\'a}aniel Marx, Barry O'sullivan, and Igor Razgon.
\newblock Finding small separators in linear time via treewidth reduction.
\newblock {\em ACM Transactions on Algorithms (TALG)}, 9(4):1--35, 2013.

\bibitem[MR14]{marx2011fixed}
D{\'a}niel Marx and Igor Razgon.
\newblock Fixed-parameter tractability of multicut parameterized by the size of
  the cutset.
\newblock {\em SIAM Journal on Computing}, 43(2):355--388, 2014.

\bibitem[NS17]{nanongkai2017dynamic}
Danupon Nanongkai and Thatchaphol Saranurak.
\newblock Dynamic spanning forest with worst-case update time: adaptive, las
  vegas, and {$O(n^{1/2 - \epsilon})$}-time.
\newblock In {\em Proceedings of the 49th Annual ACM SIGACT Symposium on Theory
  of Computing}, pages 1122--1129, 2017.

\bibitem[PSS{\etalchar{+}}22]{pilipczuk2021algorithms}
Micha{\l} Pilipczuk, Nicole Schirrmacher, Sebastian Siebertz, Szymon
  Toru{\'n}czyk, and Alexandre Vigny.
\newblock Algorithms and data structures for first-order logic with
  connectivity under vertex failures.
\newblock In {\em 49th International Colloquium on Automata, Languages, and
  Programming (ICALP 2022)}. Schloss Dagstuhl--Leibniz-Zentrum f{\"u}r
  Informatik, 2022.

\bibitem[PSY22]{pettie2022optimal}
Seth Pettie, Thatchaphol Saranurak, and Longhui Yin.
\newblock Optimal vertex connectivity oracles.
\newblock In {\em Proceedings of the 54th Annual ACM SIGACT Symposium on Theory
  of Computing}, pages 151--161, 2022.

\bibitem[Ree97]{reed1997tree}
Bruce Reed.
\newblock Tree width and tangles: a new connectivity measure and some
  applications.
\newblock {\em Surveys in combinatorics}, 241:87--162, 1997.

\bibitem[SSS{\etalchar{+}}24]{schirrmacher2024model}
Nicole Schirrmacher, Sebastian Siebertz, Giannos Stamoulis, Dimitrios~M
  Thilikos, and Alexandre Vigny.
\newblock Model checking disjoint-paths logic on topological-minor-free graph
  classes.
\newblock In {\em Proceedings of the 39th Annual ACM/IEEE Symposium on Logic in
  Computer Science}, pages 1--12, 2024.

\bibitem[SZ23]{saurabh202364}
Saket Saurabh and Meirav Zehavi.
\newblock Parameterized complexity of multi-node hubs.
\newblock {\em Journal of Computer and System Sciences}, 131:64--85, 2023.

\bibitem[Wah22]{wahlstrom2022quasipolynomial}
Magnus Wahlstr{\"o}m.
\newblock Quasipolynomial multicut-mimicking networks and kernels for multiway
  cut problems.
\newblock {\em ACM Transactions on Algorithms (TALG)}, 18(2):1--19, 2022.

\bibitem[WN17]{wulff2017fully}
Christian Wulff-Nilsen.
\newblock Fully-dynamic minimum spanning forest with improved worst-case update
  time.
\newblock In {\em Proceedings of the 49th Annual ACM SIGACT Symposium on Theory
  of Computing}, pages 1130--1143, 2017.

\end{thebibliography}

\appendix
\section{Single Source Vertex Mincuts}\label{sec:singlesource}

In this section, we show how to compute single source vertex mincuts. Our proof is based on previous algorithms for computing min-cut thresholds in graphs. Given a graph $G$, a single source $s \in V(G)$, a set of targets $T \subseteq V(G)$ and a parameter $\mu$, the single source min-cut threshold problem asks to find the set of all targets $t \in T$  such that the size of the $(t,s)$ edge/vertex min-cut is at most $\lambda$.~\cite{li2021approximate} showed an algorithm for the edge version using minimum isolating cuts~\cite{li2020deterministic,abboud2021subcubic}. Later,~\cite{pettie2022optimal} generalized this to the vertex version using minimum isolating vertex cuts~\cite{li2021vertex,chekuri2021isolating}. Their algorithm works on unit-capacity graphs, but it is easy to see that this generalizes to capacitated graphs. For our setting, we need a slightly different notion in that we need exact min-cuts instead of approximate min-cuts: but this can be ensured easily by running the threshold algorithm for each parameter $\mu$ from $1$ to $k$ by paying a factor of $k$ in the running time, which we can indeed afford. For completeness, we give the complete proof for our setting, and we follow the simpler proof shown in \cite{li2023near}.

\thmssmc*

We will use the isolating vertex cuts subroutine stated in \Cref{thm:maxiso}. Recall that a ${\cal C}$ is collection of disjoint cuts if the cuts $(L,R)\in{\cal C}$ have mutually disjoint $L\setminus R$.

\begin{definition}[Isolating Cuts]
Consider a capacitated graph $\wtilde{G}$ with terminals $W \subseteq V(\wtilde{G})$ that form an independent set. For each terminal $w \in W$, an \emph{isolating cut} of $w$ with respect to $W$ is an arbitrary $w$-$(W \setminus \{w\})$ mincut.
\end{definition}

\begin{theorem}[Isolating Vertex Cuts \cite{li2021vertex,chekuri2021isolating}]\label{thm:maxiso}
Given a capacitated graph $\wtilde{G}$ with terminals $W\subseteq V(\wtilde{G})$ that form an independent set, there is an algorithm that computes a collection ${\cal C}$ of disjoint cuts satisfying that
\begin{itemize}
\item for each $w\in W$, there is a $(L,R)\in{\cal C}$ which is an isolating cut of $w$;
\item each $(L,R)\in{\cal C}$ is an isolating cut of some $w\in W$.
\end{itemize}
The running time is $m^{1+o(1)}$.
\end{theorem}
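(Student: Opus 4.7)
My plan is to follow the binary-labeling divide-and-conquer paradigm of Li--Panigrahi, extended to the vertex setting as in \cite{li2021vertex,chekuri2021isolating}. First, assign each $w \in W$ a distinct binary label of length $\ell = \lceil \log |W|\rceil$. For each bit $i \in [\ell]$, split $W = W_0^i \sqcup W_1^i$ according to the $i$-th bit and compute a single min vertex cut $(L_i, R_i)$ separating $W_0^i$ from $W_1^i$ in $\wtilde{G}$, via a close-to-linear-time vertex max-flow subroutine. This produces $\ell = O(\log |W|)$ global cuts. Next, for each terminal $w$, let $S_i^w$ be the closed side of $(L_i, R_i)$ containing $w$ (together with the separator $L_i \cap R_i$), and let $V_w$ be the connected component of $w$ in $\wtilde{G}[\bigcap_{i \in [\ell]} S_i^w]$. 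Since every other terminal $w' \neq w$ differs from $w$ in some bit, $V_w \cap W = \{w\}$. Finally, output as the isolating cut for $w$ a min $w$-vs-$N(V_w)$ vertex cut computed inside the induced subgraph on $V_w \cup N(V_w)$.

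The correctness relies on a standard submodular uncrossing argument. Let $(L^\star, R^\star)$ be any true isolating cut for $w$. For each bit $i$, $(L_i, R_i)$ is a min $W_0^i$-$W_1^i$ cut while $(L^\star, R^\star)$ separates $w$ from all other terminals; vertex-cut submodularity then lets us ``shrink'' $L^\star$ to $L^\star \cap S_i^w$ without increasing its size, while preserving the isolation of $w$. Iterating over all bits, we may assume that a min isolating cut for $w$ is entirely supported on $V_w \cup N(V_w)$, so the local computation recovers its value exactly. Disjointness of the output collection follows because $V_w \cap V_{w'} = \emptyset$ for $w \neq w'$ and the ``own-side'' $L_w \setminus R_w$ of each output cut is contained in $V_w$; this directly matches \Cref{def:DisjointCuts}.

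The main obstacle I expect is the running-time analysis for the local max-flow computations. The trick is a laminar charging: the regions $\{V_w\}_{w \in W}$ arise as leaves of a binary refinement tree of depth $\ell$ determined by the $\ell$ global cuts, so each edge and vertex of $\wtilde{G}$ participates in at most $O(\log |W|)$ local instances. Summing the $m^{1+o(1)}$ cost of each vertex max-flow call over all $|W|$ local instances and all $\ell$ global instances thus gives a total of $m^{1+o(1)}$ time. A subtle point is to handle contractions or super-source/super-sink additions correctly for vertex capacities---one can either exploit that terminal vertices already have infinite capacity (so adding $s,t$ connected to $W_0^i,W_1^i$ via infinite-capacity edges suffices) or use the standard in-/out-vertex split reduction to convert vertex mincuts into edge mincuts, where contraction is well-behaved.
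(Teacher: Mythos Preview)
The paper does not actually prove \Cref{thm:maxiso}; it is quoted as a black-box result from \cite{li2021vertex,chekuri2021isolating}, with only the remark that ``the disjointness of the isolating cuts from \Cref{thm:maxiso} can be seen easily following the algorithm in \cite{li2021vertex}.'' Your sketch is essentially the proof from those cited works (binary labeling, one global vertex mincut per bit, submodular uncrossing to localize an optimal isolating cut, then a local max-flow per terminal), so there is nothing to compare against in the present paper.

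One technical wrinkle in your write-up: with your definition of $S_i^w$ as the closed side including the separator $L_i\cap R_i$, the sets $V_w$ need \emph{not} be pairwise disjoint in the vertex setting (two terminals on opposite sides of bit $i$ can both reach separator vertices). The clean way---and what \cite{li2021vertex} effectively does---is to take $U_w$ as the component of $w$ in $\wtilde{G}\setminus\bigcup_i(L_i\cap R_i)$; these $U_w$ are genuinely disjoint, and the local mincut for $w$ is computed from $w$ to $N(U_w)$ inside $\wtilde{G}[U_w\cup N(U_w)]$. Then $L_w\setminus R_w\subseteq U_w$ and $L_w\subseteq U_w\cup N(U_w)$, which gives disjointness in the sense of \Cref{def:DisjointCuts}. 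This also simplifies your running-time argument: since the $U_w$ are disjoint, the local instances have total size $O(m)$ (no refinement-tree charging is needed), and together with the $O(\log|W|)$ global max-flows you get $m^{1+o(1)}$.
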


We note that the disjointness of the isolating cuts from \Cref{thm:maxiso} can be seen easily following the algorithm in \cite{li2021vertex}.

\begin{algorithm}[H]
\caption{Single Source Vertex mincut}
\label{algo:SSMC}
\begin{algorithmic}[1]

\Require a capacitated graph $\wtilde{G}$ with a parameter $k$, a source vertex $s$ and sink vertices $T$.
\Ensure A set $\CC$ of vertex cuts. \
\State Initialize $\Gamma$ to be an empty vertex set and ${\cal K}$ to be an empty mincut cover.
\For {$k'=1,2,...,k$}
\For {$j$ = $1,2 \ldots O(\log^2 n)$}
\State Initialize ${\cal K}'$ to be an empty mincut cover.
\For {$i = 0, 1, 2 \ldots \lfloor \log n \rfloor,r=2^{i}$}

\State\label{line:sample}Sample each sink vertex in $T \setminus \Gamma$ with probability $\frac{1}{r}$ independently into a set $T'$.
\State $W\gets T' \cup \{s\}$
\State\label{line:IsolatingCuts}Let ${\cal C}_{\iso}$ be the collection of isolating cuts of $W$ in $\wtilde{G}$ (use \Cref{thm:maxiso})
\State\label{line:CutCollection}Let ${\cal C}$ collect the cuts $(L,R)\in {\cal C}_{\iso}$ which has size $k'$ and is an isolating cut of some sink $t\in T'\subseteq W$
\State Add ${\cal C}$ into ${\cal K}'$.
\EndFor
\State\label{line:UpdateGamma}For each cut $(L,R)\in{\cal K}'$, $\Gamma\gets \Gamma\cup (L\setminus R)\cap (T\setminus \Gamma)$
\State ${\cal K}\gets {\cal K}\cup {\cal K}'$
\EndFor

\EndFor
\State \Return ${\cal K}$

\end{algorithmic}
\end{algorithm}

\begin{proof}[Proof of \Cref{thm:SSMC}]

The single source vertex mincut algorithm is described in \Cref{algo:SSMC} in details. In what follows, we prove that ${\cal K}$ is a mincut cover with respect to $s$ and $T$.

Instead of directly verifying ${\cal K}$ satisfies the properties in \Cref{def:MincutCovers}, we will first show \Cref{lemma:Induction}.

\begin{lemma}
With high probability, at the end of each phase $k'$ of the outer loop, $\Gamma$ exactly collects all sink vertices $t$ with $\lambda_{\wtilde{G}}(t,s)\leq k'$, i.e., at the beginning of each phase $k'$, $T\setminus \Gamma$ exactly contains sink vertices with $\lambda_{\wtilde{G}}(t,s)\geq k'$. 
\label{lemma:Induction}
\end{lemma}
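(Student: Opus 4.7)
The plan is to prove the invariant by induction on $k'$. The base case before phase $k' = 1$ is trivial because $\Gamma = \emptyset$ and no sink has mincut value $0$. Assume the invariant holds at the start of phase $k'$, so $T \setminus \Gamma$ contains exactly the sinks $t$ with $\lambda_{\wtilde{G}}(t,s) \geq k'$. I need to show that by the end of phase $k'$, every sink with $\lambda \leq k'$ has been moved into $\Gamma$ and no sink with $\lambda > k'$ has.

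The ``no false positives'' direction is immediate from inspecting \Cref{line:IsolatingCuts}--\Cref{line:UpdateGamma}. Any cut $(L,R)$ added to ${\cal K}'$ in \Cref{line:CutCollection} is an isolating cut of some sampled sink $t' \in T' \subseteq W$ with size exactly $k'$, so $(L,R)$ separates $t'$ from $W \setminus \{t'\}$ and in particular $s \in R \setminus L$. Hence for every $t \in L \setminus R$, $(L,R)$ is a valid $t$-$s$ vertex cut of size $k'$, so $\lambda_{\wtilde{G}}(t,s) \leq k'$; combined with the inductive hypothesis, no sink with $\lambda > k'$ ever enters $\Gamma$.

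For the ``completeness'' direction, fix $t \in T \setminus \Gamma$ with $\lambda_{\wtilde{G}}(t,s) = k'$ and a $t$-$s$ min-cut $(L,R)$ with $t \in L \setminus R$. Let $\ell := |L \cap (T \setminus \Gamma)|$ and choose $i^* := \lceil \log_2 \ell \rceil$ so that $r := 2^{i^*}$ satisfies $\ell \leq r < 2\ell$. In a single $j$-iteration at scale $i^*$, the probability that $t$ is sampled into $T'$ while no other sink of $L \cap (T \setminus \Gamma)$ is sampled equals $\frac{1}{r}\bigl(1-\frac{1}{r}\bigr)^{\ell-1} = \Omega(1/\ell)$. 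When this event occurs, $W \cap L = \{t\}$ (since $s \in R \setminus L$), so $(L,R)$ is a valid $t$-$(W \setminus \{t\})$ cut of size $k'$. The isolating cut of $t$ computed in \Cref{line:IsolatingCuts} therefore has size at most $k'$, and since it is also a $t$-$s$ cut its size is at least $\lambda_{\wtilde{G}}(t,s) = k'$; hence it is added to ${\cal K}'$ in \Cref{line:CutCollection} and $t$ moves into $\Gamma$ in \Cref{line:UpdateGamma}.

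The main obstacle is amplifying the per-iteration success probability $\Omega(1/\ell)$, which can be as small as $\Omega(1/n)$, into a high-probability guarantee using only the $O(\log^2 n)$ $j$-iterations scheduled by the algorithm. I would address this via the standard bucket-and-amortize argument developed in~\cite{li2021approximate,pettie2022optimal,li2023near}: partition the still-uncaught sinks of $\lambda = k'$ by the dyadic value of their $\ell_t$ into $O(\log n)$ buckets, argue that at the matching scale $r = 2^{i}$ a single $j$-iteration catches in expectation a constant fraction of the uncaught sinks in that bucket, and invoke Markov together with a repetition/Chernoff argument across the $O(\log^2 n)$ iterations to empty each bucket with probability $1 - n^{-\Omega(1)}$. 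A union bound over the at most $O(nk)$ (target, phase) pairs then closes the induction and establishes the claim.
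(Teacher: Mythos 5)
Your induction structure, the no-false-positives direction, and the identification of the per-sink success event at the matching sampling scale all coincide with the paper's proof. The gap is in the amplification step, which you both defer and mis-state. The claim that ``at the matching scale $r=2^i$ a single $j$-iteration catches in expectation a constant fraction of the uncaught sinks in that bucket'' does not hold: for a sink $t$ with $\ell_t\approx 2^i$ the success probability is only $\Omega(2^{-i})$, and the $\approx 2^i$ sinks swept into $\Gamma$ when the event fires are the sinks of $(L\setminus R)\cap(T\setminus\Gamma)$, which need not belong to bucket $i$ at all. So you cannot argue bucket by bucket, and a Chernoff/repetition argument to ``empty each bucket'' also fails because bucket membership is dynamic (catching sinks changes $\Gamma$ and hence every other sink's $\ell_t$) and the iterations are not independent in the required sense.

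The paper's actual argument is a single global expectation bound rather than a per-bucket one. For each uncaught sink $t$ with $\lambda_{\wtilde{G}}(t,s)=k'$ it fixes the $t$-$s$ mincut $(L^*,R^*)$ \emph{minimizing} $|L^*\setminus R^*|$; this choice is essential, because submodularity (the paper's \Cref{ob:Closest}) then guarantees that whatever isolating mincut the algorithm returns for $t$ covers at least $|(L^*\setminus R^*)\cap\wtilde{T}_j|$ uncaught sinks, exactly cancelling the $\Omega(1/|(L^*\setminus R^*)\cap\wtilde{T}_j|)$ success probability. (With an arbitrary mincut, as in your sketch, the probability is $\Omega(1/\ell)$ but the guaranteed payoff can be much smaller than $\ell$, and the product is not $\Omega(1)$.) Summing these $\Omega(1)$ contributions over all uncaught sinks and all $O(\log n)$ scales, and dividing by the $O(\log n)$ overcounting across scales (cuts within one scale are disjoint), the expected number of sinks removed in one middle-loop iteration is $\Omega(\ell_j/\log n)$, i.e.\ $E[\ell_{j+1}]\le \ell_j(1-\Omega(1/\log n))$. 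Iterating over the $O(\log^2 n)$ middle-loop iterations drives $E[\ell_j]$ below $1/\poly(n)$, and a single application of Markov's inequality (no Chernoff) plus a union bound over phases finishes the proof. You would need to replace your bucket-emptying step with this potential-style argument for the proof to go through.
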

\begin{proof}
We will use induction.
Note that initially this claim holds at the beginning of the first phase $k'=1$.
To enable the induction, fix an arbitrary phase $1\leq k'\leq k$, and we will show that this claim holds at the end of phase $k'$ with high probability, assuming that it holds at the beginning of phase $k'$. 

For each single iteration $j$ of the middle loop, let $\wtilde{T}_{j}$ collect all sink vertices $t\in T\setminus \Gamma$  s.t. $\lambda(t,s) = k'$ at the beginning of iteration $j$, and let $\ell_{j}$ be a random variable denoting the size of $\wtilde{T}_{j}$. \Cref{claim:PotentialDrop} says that $E[\ell_{j}]$ will drop by a factor of $\Omega(1/\log n)$ in each iteration $j$ of the middle loop. Therefore, after $O(\log^{2}n)$ (with sufficiently large hidden constant) iterations $j$, in expectation the number of sink vertices $t\in T\setminus \Gamma$ s.t. $\lambda(t,s)=k'$ is at most $1/\poly(n)$. By Markov's inequality, with high probability, the number of sink vertices $t\in T\setminus \Gamma$ s.t. $\lambda(t,s) = k'$ at the end of phase $k'$ is $0$.

\begin{claim}
For each iteration $j$, we have $E[\ell_{j+1}]\leq \ell_{j}(1-\Omega(1/\log n))$.
\label{claim:PotentialDrop}
\end{claim}
\begin{proof}
At the end of each iteration $j$, we define a random variable
\[
\alpha_{j} = \sum_{{\cal C}\in{\cal K}'}\sum_{(L,R)\in {\cal C}} |(L\setminus R)\cap (T\setminus \Gamma)|
\]
We will see in a moment that $E[\alpha_{j}]\geq \Omega(\ell_{j})$. Providing this, the number of new sink vertices added to $\Gamma$ in \Cref{line:UpdateGamma} at the end of iteration $j$ is at least $\Omega(\ell_{j}/\log n)$ in expectation, because each vertex in $T\setminus \Gamma$ will be counted at most $O(\log n)$ times in $\alpha_{j}$ (at most once in each collection ${\cal C}$ because \Cref{thm:maxiso} return a collection of disjoint cuts). Note that these new sink vertices added into $\Gamma$ are all $\wtilde{T}_{j}$-vertices (since all cuts in ${\cal K'}$ have size $k'$), we conclude that $E[\ell_{j+1}]\leq \ell_{j}-\Omega(\ell_{j}/\log n)$. 

It remains to show $E[\alpha_{j}]\geq \Omega(\ell_{j})$. Fix a sink vertex $t\in \wtilde{T}_{j}$, and let $(L^{*},R^{*})$ denote the $t$-$s$ mincut that minimizes $|L^{*}\setminus R^{*}|$. By the submodularity of vertex cuts, we have \Cref{ob:Closest}.

\begin{observation}
Any $t$-$s$ mincut $(L,R)$ has $L\setminus R\supseteq L^{*}\setminus R^{*}$
\label{ob:Closest}
\end{observation}

We consider the size scale $r_{t} = 2^{\lfloor |(L^{*}\setminus R^{*})\cap \wtilde{T}_{j}| \rfloor}$ (i.e. $r_{t}$ is roughly the size of $(L^{*}\setminus R^{*})\cap \wtilde{T}_{j}$). Regarding the sample process in \Cref{line:sample} of this size scale $r_{t}$, we define an event $Q_{t}$ in which $t$ is sampled but the other sink vertices in $(L^{*}\setminus R^{*})\cap \wtilde{T}_{j}$ are not. Trivially $Q_{t}$ happens with probability 
\[
\frac{1}{r_{t}}\cdot (1-\frac{1}{r_{t}})^{|(L^{*}\setminus R^{*})\cap \wtilde{T}_{j}|-1} = \Omega(1/r_{t}) = \Omega(1/|(L^{*}\setminus R^{*})\cap \wtilde{T}_{j}|).
\]

Note that $(L^{*}\setminus R^{*})\cap (T\setminus \Gamma) = (L^{*}\setminus R^{*})\cap \wtilde{T}_{j}$ because each $t\in (L^{*}\setminus R^{*})\cap (T\setminus \Gamma)$ has $\lambda_{\wtilde{G}}(t,s)$ by the induction hypothesis and the fact that $(L^{*},R^{*})$ has size $k'$. %
Providing this, whenever $Q_{t}$ happens, we must have $(L^{*}\setminus R^{*})\cap T' = \{t\}$, which means $(L^{*},R^{*})$ is a $t$-$(W\setminus\{t\})$ cut. Therefore, the isolating cut $(L,R)\in{\cal C}_{\iso}$ of $t\in T'\subseteq W$ computed in \Cref{line:IsolatingCuts} has size exactly $k'$, which means in \Cref{line:CutCollection}, $(L,R)$ will be added into ${\cal C}'$. By \Cref{ob:Closest}, 
\[
|(L\setminus R)\cap (T\setminus \Gamma)| \geq |(L^{*}\setminus R^{*})\cap (T\setminus \Gamma)| = |(L^{*}\setminus R^{*})\cap \wtilde{T}_{j}|.
\]
In summary, with probability $\Omega(1/|(L^{*}\setminus R^{*})\cap \wtilde{T}_{j}|)$, the cut $(L,R)$ will contribute $|(L\setminus R)\cap (T\setminus \Gamma)|\geq |(L^{*}\setminus R^{*})\cap \wtilde{T}_{j}|$ to $\alpha_{j}$. In other words, $(L,R)$ contributes $\Omega(1)$ to $E[\alpha_{j}]$. Repeating this argument for all $t\in \wtilde{T}_{j}$, we can conclude that $E[\alpha_{j}]\geq \Omega(|\wtilde{T}_{j}|) = \Omega(\ell_{j})$.
\end{proof}
\end{proof}

We are ready to verify that ${\cal K}$ is a mincut cover with respect to $s$ and $T$.

\begin{itemize}
\item To prove property \ref{prop:Cover1}, we argue that at phase $k'$, each cut $(L,R)$ in each collection ${\cal C}$ (obtained in \Cref{line:CutCollection}) is a $t$-$s$ mincut for some $t\in T^{*}$.

Recall that the cut $(L,R)$ we add into ${\cal C}$ has size $k'$ and is an isolating cut of some sink $t\in T'\subseteq W$. Observe that $t\in T'\subseteq T\setminus \Gamma$ from \Cref{line:sample}. Combining it with \Cref{lemma:Induction} and the fact that $\Gamma$ only grows, we have $\lambda_{\wtilde{G}}(t,s)\geq k'$. Because $(L,R)$ has size $k'$, $(L,R)$ is a $t$-$s$ mincut and $t\in T^{*}$.

\item Consider property \ref{prop:Cover2}. By \Cref{lemma:Induction}, at the end of the algorithm, $\Gamma = T^{*}$. Property \ref{prop:Cover2} follows the update rule of $\Gamma$ (i.e. \Cref{line:UpdateGamma}). 

\item Property \ref{prop:Cover3} is because \Cref{thm:maxiso} returns a collection of disjoint cuts.
\end{itemize}
Lastly, the width of ${\cal K}$ is bounded by the number of inner loops, which is at most $O(k\log^{3}n)$.

Regarding the running time, the bottleneck is calling \Cref{thm:maxiso}. We invoke \Cref{thm:maxiso} $\OO(k\log^{3}n)$ times, each of which takes $m^{1+o(1)}$ time. Hence the total running time is $\OO(km^{1+o(1)})$.

\end{proof}

\section{Omitted Proof}
\label{sect:OmittedProof}
\begin{proof}[Proof of \Cref{ob:BetterWitness}]
Our strategy is to improve $(L,R)$ to $(L',R')$ with $(L'\setminus R')\cap T = (L\setminus R)\cap T$ with additional guarantees: (1) each connected component $C$ of $G[L'\setminus R']$ intersects $T$, and (2) $N_{G}(L'\setminus R') = L'\cap R'$. As we show next, this is indeed possible. Then for each $v\in L'\cap R'\cap T$, $v\in N_{G}(C)$ for some component $C$ of $G[L'\setminus R']$. Furthermore, because $C$ intersects $T$, we have $v\in N_{H}(C\cap T)$, which implies $v\in N_{H}((L'\setminus R')\cap T)$ and we are done. 

We can obtain $(L',R')$ with the two additional guarantees as follows. First, let $\Gamma$ denote the union of all components in $G[L\setminus R]$ that are disjoint from $T$. The cut $(L_{1},R_{1})$ with $L_{1} = L\setminus \Gamma$ and $R_{1} = R\cup \Gamma$ is obviously a witness satisfying the first extra property, and we have $(L_{1}\setminus R_{1})\cap T = (L\setminus R)\cap T$. Next, consider the cut $(L_{2},R_{2})$ with $L_{2} = (L_{1}\setminus R_{1})\cup N(L_{1}\setminus R_{1})$ and $R_{2} = V(G)\setminus (L_{1}\setminus R_{1})$. It is still a witness satisfying both the first and second extra properties, and trivially $(L_{2}\setminus R_{2})\cap T = (L_{1}\setminus R_{1})\cap T$. Finally, take $(L_{2},R_{2})$ as the desired $(L',R')$.
\end{proof}

\section{Connection to Tree Decomposition with Bounded Width}
\label{sec:connection}

Here, we argue that unbreakable decomposition generalizes tree decomposition with bounded width to general graphs. 

By definition, a graph $G$ has bounded treewidth if and only if there exists a tree decomposition of $G$ with bounded width, i.e., all bags have bounded size. So we cannot hope for this decomposition to exist in an arbitrary graphs. In contrast, for any $k\le q$, $(q,k)$-unbreakable decomposition exists for every graph \cite{cygan2020randomized}. 

We will show that, on a graph with bounded treewidth, every unbreakable decomposition is precisely a tree decomposition with bounded width. To prove this, we need the following well-known fact (see e.g.~\cite{reed1997tree}).
\begin{proposition}\label{fact:sep treewidth}Let $G$ be a graph with treewidth at most $k$. Let $X$ be any vertex set of size $s$. Then, there exists a vertex cut $(L,R)$ of size $k$ where $|L\cap X|,|R\cap X|>s/3$.
\end{proposition}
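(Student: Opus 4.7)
The plan is the classical balanced-separator argument for tree decompositions, realized as a weighted centroid computation on the decomposition tree. First I would fix a tree decomposition $(T,\beta)$ of $G$ of width at most $k$, preprocessed so that no bag is contained in an adjacent one, ensuring that every adhesion $\sigma(t)$ has size at most $k$. After rooting $T$ arbitrarily, for each $v\in X$ I pick a home node $t_v$ --- say, the node in the footprint subtree $\{t:v\in\beta(t)\}$ closest to the root --- and set weights $w(t):=|\{v\in X:t_v=t\}|$ on $V(T)$, so that $\sum_t w(t)=s$.

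The crucial correspondence I exploit is that each non-root tree node $t$ induces a vertex cut in $G$ via its adhesion: taking $L=\gamma(t)$ and $R=(V(G)\setminus\gamma(t))\cup\sigma(t)$ gives a cut of size $|\sigma(t)|\leq k$ in which every $v\in X$ with $t_v$ inside the subtree of $t$ lies in $L$, and every other $v\in X$ lies in $R$ (using the fact that for such $v$, some bag in that subtree contains $v$). Thus, finding the required cut reduces to locating a tree edge whose removal splits the weight $w$ into two parts each strictly exceeding $s/3$.

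To produce such an edge I would invoke a standard centroid lemma for vertex-weighted trees: either some vertex $t^*\in V(T)$ has $w(t^*)>s/3$, or there is an edge of $T$ whose removal induces a $(s/3,2s/3)$-balanced split of the weight. In the balanced-edge case, the associated adhesion cut immediately satisfies $|L\cap X|,|R\cap X|>s/3$ with $|L\cap R|\leq k$, finishing the proof. In the heavy-node case, the $w(t^*)>s/3$ vertices with home $t^*$ all lie in $\beta(t^*)$; consequently any cut whose separator contains $\beta(t^*)$ automatically places them on both sides, so any nontrivial $2$-partition of the subtrees of $T-t^*$ yields the desired imbalance. A small refinement --- absorbing one subtree into the separator so as to use an adhesion of $t^*$ instead of $\beta(t^*)$ itself --- brings the cut size from $|\beta(t^*)|\leq k+1$ down to $\leq k$, while the slack in $w(t^*)>s/3$ absorbs the change on either side.

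The main obstacle is the centroid lemma together with the adhesion-trimming step. The centroid lemma follows from a standard ``walk the edge toward the heavy side'' argument, but it has a delicate corner case in which a centroid vertex has many child subtrees of weight just under $s/3$; there no balanced edge exists and one must fall back to the heavy-node construction. The adhesion-trimming step --- ensuring that replacing $\beta(t^*)$ by a strictly smaller adhesion still yields $|L\cap X|,|R\cap X|>s/3$ --- is the genuinely delicate piece, but it goes through because moving one vertex from the separator into one side changes each side's $X$-count by at most one, which the $w(t^*)>s/3$ slack on both sides comfortably absorbs.
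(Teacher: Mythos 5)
The paper offers no proof of this proposition at all---it is quoted as a well-known fact with a pointer to Reed's survey---so the only question is whether your argument stands on its own, and it does not quite. Your setup (home nodes on the decomposition tree, weights $w(t)$, and the observation that each tree edge yields a vertex cut of size $|\sigma(t)|\le k$ separating the $X$-vertices homed inside the subtree from those homed outside) is sound and is the standard route. The gap is the ``centroid lemma'' you invoke: it is false that every vertex-weighted tree has either a vertex of weight $>s/3$ or an edge whose removal splits the weight into parts each of weight at least $s/3$. Take a star whose center has weight $0$ and whose four leaves each carry weight $s/4$: no vertex exceeds $s/3$, yet every edge splits the weight as $(s/4,\,3s/4)$. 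You half-notice this in your ``delicate corner case'' remark, but your proposed fallback to the heavy-node construction does not apply there, because that construction needs $w(t^*)>s/3$ and here $w(t^*)$ can be $0$. The case your dichotomy misses---a branching node $t^*$ of small weight, all of whose incident subtrees have weight $<s/3$---is precisely the main case of the classical proof: one must take the \emph{entire bag} $\beta(t^*)$ as the separator and greedily group the components of $G\setminus\beta(t^*)$ (each of weight at most $s/2$ by centrality of $t^*$) into two classes of weight at least $s/3$ each.

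That missing case is also where the $k$-versus-$(k+1)$ issue genuinely lives, and your ``adhesion-trimming'' refinement cannot repair it: replacing $\beta(t^*)$ (size up to $k+1$) by an adhesion changes which components attach to which side and which $X$-vertices sit inside the separator, not merely ``one vertex on each side.'' In fact the proposition as printed is itself slightly too strong: $K_{k+1}$ has treewidth $k$ but admits no vertex cut whatsoever, so for $k=5$ and $X=V(K_6)$ (where $s=6$ and both sides would need more than $2$ vertices of $X$) no cut of any size works. The statement that is actually provable---and that suffices for the corollary it supports, up to the constant in $3q$---produces a separator of size $k+1$, namely a bag. If you restate the tree lemma as a centroid-\emph{vertex} lemma (some node $t^*$ has every component of $T-t^*$ of weight at most $s/2$), handle the heavy-bag and branching cases by grouping components of $G\setminus\beta(t^*)$, and accept separator size $k+1$, your proof closes.
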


Now, we formally prove our claim.
\begin{corollary}
Let $G$ be a graph with treewidth at most $k$. Let $(T,\beta)$ be a $(q,k)$-unbreakable decomposition of $G$. Then, each bag of $T$ has size $|\beta(t)|\le3q$ for all tree node $t$. That is, the width of $T$ must be at most $3q$.
\end{corollary}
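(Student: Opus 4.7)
The plan is to argue by contrapositive using the structural proposition \textsc{fact:sep treewidth} as a black box. Concretely, suppose towards contradiction that some tree node $t$ has a bag $\beta(t)$ with $|\beta(t)| = s > 3q$.

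Since $G$ has treewidth at most $k$, I would apply \textsc{fact:sep treewidth} to the vertex set $X := \beta(t)$. This produces a vertex cut $(L,R)$ of $G$ of size at most $k$ such that
\[
|L \cap \beta(t)| > s/3 \quad \text{and} \quad |R \cap \beta(t)| > s/3.
\]
Because $s > 3q$, both quantities strictly exceed $q$, and the cut has size at most $k$. By the definition of a $(q,k)$-breakable witness, the pair $(L,R)$ is precisely a $(q,k)$-breakable witness for $\beta(t)$ in $G$. This contradicts the assumption that $(T,\beta)$ is a $(q,k)$-unbreakable decomposition, which requires $\beta(t)$ to be $(q,k)$-unbreakable in $G$. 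Hence no such bag exists, and every bag has size at most $3q$, giving width at most $3q$ as claimed.

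There is essentially no obstacle here: the entire content is packaged inside \textsc{fact:sep treewidth}, and the corollary is just a one-line contrapositive reading of the unbreakability definition. The only minor care I would take is to note that the proposition is applied for the value $s = |\beta(t)|$ (not a fixed constant), and that the factor $3$ in the bound $3q$ arises exactly from the $s/3$ split guaranteed by the proposition; if one had a proposition giving a balanced separator with constant $\alpha$ instead of $1/3$, the bound would correspondingly become $q/\alpha$.
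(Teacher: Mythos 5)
Your proof is correct and follows exactly the paper's argument: apply the balanced-separator proposition to $X=\beta(t)$, note that $s>3q$ forces both sides to intersect $\beta(t)$ in more than $q$ vertices, and conclude that $(L,R)$ would be a $(q,k)$-breakable witness, contradicting unbreakability. No issues.
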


\begin{proof}
Suppose for contradiction that there is a tree node $t$ whose bag $\beta(t)$ contains more than $3q$ vertices. By \Cref{fact:sep treewidth}, there is a vertex cut $(L,R)$ of size $k$ where $|L\cap\beta(t)|,|R\cap\beta(t)|>q$. This contradicts that $\beta(t)$ is $(q,k)$-unbreakable. 
\end{proof}

\end{document}